%%%%%%%%%%%%%%%%%%%%%%% file template.tex %%%%%%%%%%%%%%%%%%%%%%%%%
%
% This is a general template file for the LaTeX package SVJour3
% for Springer journals.          Springer Heidelberg 2010/09/16
%
% Copy it to a new file with a new name and use it as the basis
% for your article. Delete % signs as needed.
%
% This template includes a few options for different layouts and
% content for various journals. Please consult a previous issue of
% your journal as needed.
%
%%%%%%%%%%%%%%%%%%%%%%%%%%%%%%%%%%%%%%%%%%%%%%%%%%%%%%%%%%%%%%%%%%%
%
\RequirePackage{fix-cm}
\documentclass[smallcondensed]{svjour3}     % onecolumn (ditto)
\usepackage{epsfig,epic,eepic,units}
\usepackage{url}
\usepackage{longtable}
\usepackage{mathrsfs}
\usepackage{multirow}
\usepackage{bigstrut}
\usepackage{amssymb}
\usepackage{graphicx}
\usepackage{amsmath}
%%%%%%%%%%%%%%%%%%%%%%%%

%ALGORITHM SETTING - HUNG

%\usepackage[vlined]{algorithm2e}

\usepackage[ruled, vlined, linesnumbered]{algorithm2e}

%\SetKwFor{Match}{$\mathsf{match}$}{$\mathsf{with}$}{}

%\SetKwIf{If}{$\mathsf{if}$}{$\mathsf{then}$}{}

\SetKwIF{If}{ElseIf}{Else}{$\mathsf{if}$}{$\mathsf{then}$}{}{$\mathsf{else}$}{}

\SetKwBlock{Begin}{$\mathsf{begin}$}{$\mathsf{end}$}

%\SetAlgoSkip{}]

%%%%%%%%%%%%%%%%%%%%%%%%

% \usepackage{boxedminipage} %Hung

\usepackage{latexsym}

\usepackage{amsmath}

\usepackage{amssymb}

\usepackage{wrapfig} % for wrapfigure

%\usepackage{floatflt}

%\usepackage{hyperref} % for url

%\usepackage{amsthm}

% \theoremstyle{plain}% default
% \newtheorem{theorem}{Theorem}[section]
% \newtheorem{lemma}[theorem]{Lemma}
% \newtheorem{proposition}[theorem]{Proposition}
% \newtheorem{corollary}{Corollary}
% %\newtheorem*{KL}{Klein's Lemma}
% \theoremstyle{definition}
% \newtheorem{definition}{Definition}[section]
% %\newtheorem{conj}{Conjecture}[section]
% \newtheorem{example}{Example}[section]
% \newtheorem{property}{Property}[section]
% \theoremstyle{remark}
% \newtheorem{remark}{Remark}[section]
% % \newtheorem*{note}{Note}
% % \newtheorem{case}{Case}

%\usepackage{amsbsy}

\usepackage{color}
\usepackage{booktabs}
\usepackage{graphicx}
\usepackage{multirow}
\usepackage{url}
\usepackage{wrapfig}
 % Hung: Include here some packages that we commonly use.
\smartqed  % flush right qed marks, e.g. at end of proof
\usepackage{graphicx}
%
% \usepackage{mathptmx}      % use Times fonts if available on your TeX system
%
% insert here the call for the packages your document requires
%\usepackage{latexsym}
% etc.
%
% please place your own definitions here and don't use \def but
% \newcommand{}{}
%
% Insert the name of "your journal" with
% \journalname{myjournal}
%
\begin{document}

\newcommand{\mike}[1]{\textcolor{red}{\sc Mike: #1}}
 
\title{Reasoning about Algebraic Data Types with Abstractions%\thanks{Grants or other notes
%about the article that should go on the front page should be
%placed here. General acknowledgments should be placed at the end of the article.}
}

%\titlerunning{Short form of title}        % if too long for running head

\author{Tuan-Hung Pham         \and Andrew Gacek \and
        Michael W. Whalen
}

%\authorrunning{Short form of author list} % if too long for running head

\institute{Tuan-Hung Pham \at
              Department of Computer Science and Engineering, University of Minnesota \\
              \email{hung@cs.umn.edu}           %  \\
           \and
           Andrew Gacek \at
              Rockwell Collins, Advanced Technology Center \\
              \email{andrew.gacek@gmail.com}
           \and
           Michael W. Whalen \at
              Department of Computer Science and Engineering, University of Minnesota \\
              \email{whalen@cs.umn.edu}           %  \\
}

\date{Received: date / Accepted: date}
% The correct dates will be entered by the editor

\maketitle
\urldef{\toolurl}\url{http://crisys.cs.umn.edu/rada}

\newcommand{\smtlib}[0]{SMT-LIB 2.0}
\newcommand{\kw}[1]{\textsf{#1}}
\newcommand{\Leaf}[0]{\kw{Leaf}}
\newcommand{\Node}[0]{\kw{Node}}
\newcommand{\leaf}[0]{\kw{leaf}}
\newcommand{\node}[0]{\kw{node}}
\newcommand{\SLeaf}[0]{\kw{SLeaf}}
\newcommand{\SNode}[0]{\kw{SNode}}
\newcommand{\List}[0]{\kw{List}}
\newcommand{\Some}[0]{\kw{Some}}
\newcommand{\None}[0]{\kw{None}}
\newcommand{\true}[0]{\kw{true}}
\newcommand{\false}[0]{\kw{false}}
\newcommand{\ite}[0]{\kw{ite}}
\newcommand{\combine}[0]{\kw{combine}}
\newcommand{\uf}[0]{U_\alpha}
\newcommand{\unrolledtrees}[0]{FN}

\newcommand{\constructor}{\mathbb{C}}
\newcommand{\selector}{\mathbb{S}}
\newcommand{\tester}[1]{#1?}
\newcommand{\rdttype}{\tau}
\newcommand{\argtype}{s}
\newcommand{\bigargtype}{S}
\newcommand{\argval}{x}
\newcommand{\arity}{n}

\newcommand{\solver}{\mathcal{S}}
\newcommand{\theory}{\mathcal{T}}
\newcommand{\collection}{\mathcal{C}}
\newcommand{\logic}{\mathcal{L}}
\newcommand{\func}{\mathcal{F}}
\newcommand{\elem}{\mathcal{E}}
\newcommand{\norm}[1]{\mbox{#1}}
\newcommand{\mybrack}[1]{[ #1 ]}

\newcommand{\sig}{sig}  % signature of a catamorphism

% Above were copied from the RADA paper.

\newcommand{\subtree}{\preceq}
\newcommand{\strictsubtree}{\precneqq}
\newcommand{\strictsupertree}{\succneqq}
\newcommand{\naturalnums}{\mathbb{N}}
\newcommand{\oddnums}{\bar\naturalnums}
\newcommand{\halpha}{h_\alpha}
\newcommand{\minbeta}{\mathcal{M_\beta}}
\newcommand{\treessameheight}{\mathcal{T}_{height}}
\newcommand{\catalan}{\mathbb{C}}
\newcommand{\etoc}{\delta}

\newcommand{\operator}{\oplus}
\newcommand{\identity}[1]{id_#1}

\newcommand{\dryad}{\sc{Dryad}}
\newcommand{\strand}{\sc{Strand}}
\newcommand{\model}{\mathcal{M}}

\newcommand{\unrollingdepth}{\mathfrak{D}}

\newcommand{\predicate}{\kw{pr}}
\newcommand{\cpredicate}{c_\predicate}

\newcommand{\userpredicate}{pr_u}

\newcommand{\cleaf}{c_\leaf}

\newcommand{\isrecursive}{\mathsf{rec}}
\newcommand{\alphaac}{\alpha_{\ensuremath{AC}}}
\newcommand{\alphapac}{\alpha_{\ensuremath{PAC}}}

\newcommand*{\exampleEndMark}{\hfill\ensuremath{\vartriangle}}

\newcommand{\height}{\ensuremath{\mathit{height}}}
\newcommand{\size}{\ensuremath{\mathit{size}}}
\newcommand{\shape}{\ensuremath{\mathit{shape}}}
\newcommand{\range}{\ensuremath{\mathit{range}}}
\newcommand{\DW}{\ensuremath{\mathit{DW}}}
\newcommand{\remDirtyWords}{\ensuremath{\mathit{remDirtyWords}}}
\newcommand{\dirty}{\ensuremath{\mathit{dirty}}}
\newcommand{\setOf}{\ensuremath{\mathit{setOf}}}
\newcommand{\UF}{\ensuremath{\mathit{UF}}}
\newcommand{\unrollStep}{\ensuremath{\mathit{unrollStep}}}
\newcommand{\Mirror}{\ensuremath{\mathit{Mirror}}}
\newcommand{\numshapes}{\ensuremath{\mathit{ns}}}

\begin{abstract}
%%%%%%%%%%%%%%%%%%%%%%%%%%%%%%%%%%%%%%%%%%%%%%%%%%%%%%%%%%%%%%%%%%%%%%%%%%%%%%%%
% abstract.tex: Abstract
%%%%%%%%%%%%%%%%%%%%%%%%%%%%%%%%%%%%%%%%%%%%%%%%%%%%%%%%%%%%%%%%%%%%%%%%%%%%%%%%
% Hung: For doctoral thesis, the abstract must have no more than 350 words.

%%%%%%%%%%%%%%%%%%%%%%%%%%%%%%%%%%%%%%%%%%%%%%%%%%%%%%%%%%%%%%%%%%%%%%%%%%%%%%%%

Reasoning about functions that operate over algebraic data types is an
important problem for a large variety of applications. One application
of particular interest is network applications that manipulate or
reason about complex message structures, such as XML messages. This
paper presents a decision procedure for reasoning about algebraic data
types using abstractions that are provided by catamorphisms: fold
functions that map instances of algebraic data types to values in a
decidable domain. We show that the procedure is sound and complete for
a class of catamorphisms that satisfy a generalized sufficient
surjectivity condition. Our work extends a previous decision procedure
that unrolls catamorphism functions until a solution is found.

We use the generalized sufficient surjectivity condition to address an
incompleteness in the previous unrolling algorithm (and associated
proof). We then propose the categories of monotonic and associative
catamorphisms, which we argue provide a more intuitive inclusion test
than the generalized sufficient surjectivity condition. We use these
notions to address two open problems from previous work: (1) we
provide a bound, with respect to formula size, on the number of
unrollings necessary for completeness, showing that it is linear for
monotonic catamorphisms and exponentially small for associative
catamorphisms, and (2) we demonstrate that associative catamorphisms
can be combined within a formula while preserving completeness. Our
combination results extend the set of problems that can be reasoned
about using the catamorphism-based approach.

We also describe an implementation of the approach, called RADA, which
accepts formulas in an extended version of the \smtlib\ syntax. The
procedure is quite general and is central to the reasoning
infrastructure for Guardol, a domain-specific language for reasoning
about network guards.

\keywords{decision procedures \and algebraic data types \and SMT solvers}
\end{abstract}

\section{Introduction}
\label{section:introduction}
Decision procedures have been a fertile area of research in recent years,
with several advances in the breadth of theories that can be decided and the speed with which substantial problems can be solved.
When coupled with SMT solvers, these procedures can be combined and used to solve complex formulas relevant to software and hardware verification.
An important stream of research has focused on decision procedures for algebraic data types.
Algebraic data types are important for a wide variety of problems: they provide a natural representation for tree-like structures such as abstract syntax trees and XML documents;
they are also the fundamental representation of recursive data for functional programming languages.

Algebraic data types provide a significant challenge for decision procedures since they are recursive and usually unbounded in size.
Early approaches focused on equalities and disequalities over the structure of elements of data types~\cite{BarSTPDPAR06,Oppen:1980:RRD}.
While important, these structural properties are often not expressive enough to describe interesting properties involving the data stored in the data type.
Instead, we often are interested in making statements both about the structure and contents of data within a data type.
For example, one might want to express that all integers stored within a tree are positive or that the set of elements in a list does not contain a particular value.
%These properties involve abstractions both of the structure and data values stored within an algebraic data type.

Suter et al. described a parametric decision procedure for reasoning about algebraic data types using catamorphism (fold) functions \cite{Suter2010DPA}.
In the procedure, catamorphisms describe the abstract views of the data type that can then be reasoned about in formulas.
For example, suppose that we have a binary tree data type with functions to add and remove elements from the tree, as well as check whether an element was stored in the tree.
Given a catamorphism {\em setOf} that computes the set of elements stored in the tree, we could describe a specification for an {\em add} function as:
\[ \mbox{\em setOf}\bigl(\emph{add}(e, t)\bigr) = \{e \} \cup \mbox{\em setOf}(t) \]
\noindent where {\em setOf} can be defined in an ML-like language as:

\begin{small}
\begin{tabbing}
this \= \kill
\> $\mathsf{fun}$ \= $\mathsf{setOf~t}$ = $\mathsf{case}$ \= $\mathsf{t~of}$ \= $\mathsf{Leaf \Rightarrow \emptyset}~| $\\
\> \> \> \>   $\mathsf{Node(l,e,r) \Rightarrow setOf(l) \cup \{e\} \cup setOf(r)}$
\end{tabbing}
\end{small}

\noindent The work in~\cite{Suter2010DPA,Suter2011SMR} provides a foundation towards reasoning about such formulas.
%
%While
%Formulas of this sort can be decided by a \emph{variant}\footnote{The algorithm in~\cite{Suter2010DPA} is in fact incomplete for inequalities over finite sets and for non-structural inequalities. We discuss this issue in detail in \cite{HungPham-PhD}.} of the algorithm in~\cite{Suter2010DPA}.
%\noindent that is, the add operation adds the element to the abstract set represented by the tree.
%\textbf{description of the set $\alpha$ here?}
%
The approach allows a wide range of problems to be addressed, because it is parametric in several dimensions: (1) the structure of the data type, (2) the elements stored in the data type, (3) the collection type that is the codomain of the catamorphism, and (4) the behavior of the catamorphism itself.
Thus, it is possible to solve a variety of interesting problems, including:
\begin{itemize}
    \item reasoning about the contents of XML messages,
    \item determining correctness of functional implementations of data types, including queues, maps,
        binary trees, and red-black trees,
    \item reasoning about structure-manipulating functions for data types, such as sort and reverse,
    \item computing bound variables in abstract syntax trees to support reasoning over operational semantics and type systems, and
    \item reasoning about simplifications and transformations of propositional logic.
\end{itemize}

The first class of problems is especially important for {\em guards}, devices that mediate information sharing between security domains according to a specified policy.
Typical guard operations include reading field values in a packet, changing fields in a packet, transforming a packet by adding new fields, dropping fields from a packet, constructing audit messages, and removing a packet from a stream.

\begin{example}
Suppose we have a catamorphism \remDirtyWords~that removes from an XML message $m$ all the words in a given blacklist. Also suppose we want to verify the following idempotent property of the catamorphism: the result obtained after applying the catamorphism to a message $m$ twice is the same as the result obtained after applying the catamorphism to $m$ once. We can write this property as a formula that can be decided by the decision procedure in \cite{Suter2010DPA} as follows:
 \[ \remDirtyWords(m) = \remDirtyWords\bigl(\remDirtyWords(m)\bigr) \]

We can also use the decision procedure to verify properties of programs that manipulate algebraic data structures.  First, we turn the program into {\em verification conditions} that are formulas in our logic (c.f., \cite{Hardin2012GLV}), then use the decision procedure to solve these conditions.  A sample verification condition for the {\em add} function is:
\begin{multline*}
(t_1 = \kw{Node}(t_2, e_1, t_3) \land \setOf(t_4) = \setOf(t_2) \cup \{e_2\}) \implies \\
\setOf(\kw{Node}(t_4,e_1,t_3)) = \setOf(t_1) \cup \{e_2\}
\hspace{0.5cm}\vartriangle\!\!\!\!\!\!
\end{multline*}
\end{example}

The procedure \cite{Suter2010DPA} was proved sound for all
catamorphisms and claimed to be complete for a class of catamorphisms
called {\em sufficiently surjective} catamorphisms, which we will
describe in more detail in Section \ref{section:preliminaries}. The
original algorithm in~\cite{Suter2010DPA} was quite expensive to
compute and required a specialized predicates $M_{p}$ and $S_{p}$ to
be defined separately for each catamorphism and proved correct with
respect to the catamorphism using either a hand-proof or a theorem
prover. In~\cite{Suter2011SMR}, a generalized algorithm for the
decision procedure was proposed, based on unrolling the catamorphism.
This algorithm had three significant advantages over the algorithm
in~\cite{Suter2010DPA}: it was much less expensive to compute, it did
not require the definition of $M_{p}$, and it was claimed to be
complete for all sufficiently surjective catamorphisms.

Unfortunately, both algorithms are {\em incomplete} for some sufficiently surjective catamorphisms.  In~\cite{Suter2010DPA}, the proposed algorithms are incomplete for problems involving finite types and formulas involving inequalities that are non-structural (e.g.: $5 + 3 \neq 8$).  In~\cite{Suter2011SMR}, the proposed algorithm is incomplete because of missing assumptions about the range of the catamorphism function.
%Unfortunately, while completeness is claimed, the algorithm in~\cite{Suter2010DPA} is in fact incomplete for inequalities over finite sets and for non-structural inequalities, as discussed in \cite{HungPham-PhD}.}

In this paper, we propose a complete unrolling-based decision
procedure for catamorphisms that satisfy a {\em generalized sufficient
  surjectivity} condition. We also demonstrate that our unrolling
procedure is complete for sufficiently surjective catamorphisms, given
suitable $S_{p}$ and $M_{p}$ predicates.

We then address two open problems with the previous work \cite{Suter2011SMR}: (1) how many catamorphism unrollings are required in order to prove properties using the decision procedure?  and  (2) when is it possible to combine catamorphisms within a formula in a complete way?  We introduce \emph{monotonic} catamorphisms and prove that our decision procedure is complete with monotonic catamorphisms, and this class of catamorphisms gives a linear unrolling bound for the procedure. While monotonic catamorphisms include all catamorphisms introduced by~\cite{Suter2010DPA,Suter2011SMR}, we show that monotonic catamorphisms are a strict subset of sufficiently surjective catamorphisms.
To answer the second question, we introduce {\em associative} catamorphisms, which can be combined within a formula while preserving completeness results.  These associative catamorphisms have the additional property that they require a very small number of unrollings to solve, and we demonstrate that this behavior explains some of the empirical success in applying catamorphism-based approaches on interesting examples from previous papers \cite{Suter2011SMR,Hardin2012GLV}.

We have implemented the decision procedure in an open-source tool called RADA (\underline{r}easoning about \underline{a}lgebraic \underline{da}ta types), which has been used as a back-end tool in the Guardol system \cite{Hardin2012GLV}. The successful use of RADA in the Guardol project on large-scale guard programs demonstrates that the unrolling approach and the tools are sufficiently mature for use on interesting, real-world applications.

This paper offers the following contributions:
\begin{itemize}
\item We propose an unrolling-based decision procedure for algebraic
  data types with \emph{generalized sufficiently surjective}
  catamorphisms.

\item We provide a corrected proof of completeness for the decision
  procedure with generalized sufficiently surjective catamorphisms.

\item We propose a new class of catamorphisms, called \emph{monotonic}
  catamorphisms, and argue that it is a more intuitive notion than
  generalized sufficient surjectivity. We show that the number of
  unrollings needed for monotonic catamorphisms is linear.

 \item We also define an important subclass of monotonic catamorphisms
   called \emph{associative} catamorphisms and show that an arbitrary
   number of these catamorphisms can be combined in a formula while
   preserving decidability. Another nice property of associative
   catamorphisms is that determining whether a catamorphism function
   is associative can be immediately checked by an SMT solver without
   performing unrolling, so we call these catamorphisms {\em
     detectable}. Finally, associative catamorphisms are guaranteed to
   require an exponentially small number of unrollings to solve.

 \item We describe an implementation of the approach, called RADA,
   which accepts formulas in an extended version of the
   \smtlib\ syntax \cite{BarSTSMT10}, and demonstrate it on a range of
   examples.
\end{itemize}

This paper is an expansion of previous work
in~\cite{Pham2013IUB,PhamRADA13}. It provides a complete and better
organized exposition of the ideas from previous work, and includes
substantial new material, including the new notion of generalized
sufficient surjectivity, a set of revised, full proofs that work for
both the class of sufficiently surjective catamorphisms in
\cite{Suter2010DPA} and the new catamorphism classes in this paper, a
demonstration of the relationship between monotonic and sufficiently
surjective catamorphisms, new implementation techniques in RADA, and
new experimental results.
% Our \textbf{long-range goal} is to build a framework for automated reasoning about recursive programs and data structures that can be used to verify the correctness of industrial-scale software systems. While the proposed research might not completely be able to verify all different types of complex, real-world recursive programs, it does represent a step forward by enabling a wide range of non-trivial recursive programs and properties to be successfully and automatically verified when coupled with state-of-the-art theorem provers or SMT solvers.

The rest of this paper is organized as follows.
Section \ref{section:related_work} presents some related work that is closest to ours.
% \iftechreport
%   Section \ref{section:completeness_issues} points out some completeness issues in Suter et al.'s work \cite{Suter2010DPA,Suter2011SMR}.
% \fi
In Section \ref{section:dp_and_monotonic_catas}, we present the unrolling-based decision procedure and prove its completeness.
Section \ref{section:monotonic_cata} presents monotonic catamorphisms.
Section \ref{section:assoc_cata} presents associative catamorphisms.
The relationship between different types of catamorphisms is discussed in Section \ref{section:catas_relationship}.
Experimental results for our approach are shown in Section~\ref{section:experimental_results}.
We conclude this paper in Section \ref{section:conclusion}.

\section{Related Work}
\label{section:related_work}
The most relevant work related to the research in this paper fall in two broad categories: verification tools and decision procedures for algebraic data types.
%-------------------------------------------------------------------------
\subsection{Verification Tools for Algebraic Data Types.}
\label{section:relatedwork_tools_for_adt}
We introduce in this paper a new verification tool called RADA to reason about algebraic data types with catamorphisms.
RADA is described in detail in Section \ref{section:experimental_results} and the algorithms behind it are presented in Sections \ref{section:dp_and_monotonic_catas}, \ref{section:monotonic_cata}, and \ref{section:assoc_cata}.
Besides RADA, there are some tools that support catamorphisms (as well as other functions) over algebraic data types. For example, Isabelle \cite{Nipkow2002IPA}, PVS \cite{Owre1992PPV}, and ACL2 \cite{kaufmann2000computer} provide efficient support for both inductive reasoning and evaluation. Although very powerful and expressive, these tools usually need manual assistance and require substantial expert knowledge to construct a proof. On the contrary, RADA is fully automated and accepts input written in the popular \smtlib\ format \cite{BarSTSMT10}; therefore, we believe that RADA is more suited for non-expert users.

In addition, there are a number of other tools built on top of SMT solvers that have support for data types. One of such tools is Dafny \cite{Leino2010DAP}, which supports many imperative and object-oriented features; hence, Dafny can solve many verification problems that RADA cannot. On the other hand, Dafny does not have explicit support for catamorphisms, so for many problems it requires significantly more annotations than RADA.  For example, RADA can, without any annotations other than the specification of correctness, demonstrate the correctness of insertion and deletion for red-black trees.  From examining proofs of similarly complex data structures (such as the PriorityQueue) provided in the Dafny distribution, it is likely that these proofs would require significant annotations in Dafny.

% and cannot verify some catamorphism-related problems that RADA can. For example, {\color{red}{TODO(Mike)}}.

Our work was inspired by the Leon system \cite{Blanc2013OLV}, which uses a semi-decision procedure to reason about catamorphisms \cite{Suter2011SMR}.
While Leon uses Scala input, RADA offers a neutral input format, which is a superset of \smtlib.
Also, Leon specifically uses Z3 \cite{DeMoura2008ZES} as its underlying SMT solver, whereas RADA is solver-independent: it currently supports both Z3 and CVC4. In fact, RADA can support any SMT solver that uses \smtlib\ and that has support for algebraic data types and uninterpreted functions.  RADA also guarantees the completeness of the results even when the input formulas have multiple catamorphisms for certain classes of catamorphisms such as PAC catamorphisms~\cite{Pham2013PAC}; in this situation, it is unknown whether the decision procedure \cite{Suter2011SMR} used in Leon can ensure the completeness.\footnote{The authors of \cite{Suter2011SMR} only claim completeness of the procedure when there is only one non-parametric catamorphism in the input formulas}  Recent work by the Leon group~\cite{Reynolds15} broadens the class of formulas that can be solved by the tool towards arbitrary recursive functions, but it makes no claims on completeness.

% PAC catamorphisms \cite{Pham2013PACTechReport}
%
%
% It seems that there are other tools available for this purpose, however.  For example, Leon uses Scala input, but if the input format is the only shortcoming of the tool, then why not just write a new front-end for Leon (that is, read your RADA input and translate it into input for Leon)?
%
% How does the performance compare with Leon (since RADA and Leon implement the same algorithm (right?))?
%-------------------------------------------------------------------------
\subsection{Decision Procedures for Algebraic Data Types.}
The general approach of using abstractions to summarize algebraic data types has been used in the Jahob system \cite{Zee2008FFV,Zee2009IPL} and
in some procedures for algebraic data types \cite{Sofronie-Stokkermans2009LRC,Suter2011SMR,Jacobs2011TCR,Madhusudan2012RPI}.
However, it is often challenging to directly reason about the abstractions.
One approach to overcome the difficulty (e.g., in \cite{Suter2011SMR,Madhusudan2012RPI}) is to approximate the behaviors of the abstractions using uninterpreted functions
and then send the functions to SMT solvers~\cite{DeMoura2008ZES,Barrett2011CVC4} that have built-in support for uninterpreted functions and recursive data types.

Our approach extends the work by Suter et al. \cite{Suter2010DPA,Suter2011SMR}.
In \cite{Suter2010DPA}, the authors propose a family of procedures for algebraic data types where catamorphisms are used to abstract tree terms.  These procedures are claimed to be sound for all catamorphisms and complete with {\em sufficiently surjective} catamorphisms.  Unfortunately, there are flaws in the completeness argument, and in fact the family of algorithms is incomplete for non-structural disequalities and catamorphisms over finite types.  These incompletenesses and possible fixes to them are described in detail in~\cite{HungPham-PhD}.
An improved approach using a single unrolling-based decision procedure is proposed in~\cite{Suter2011SMR}.  This approach is very similar to the algorithm that is proposed in this paper.  Our approach addresses an incompleteness in the unrolling algorithm due to the use of uninterpreted functions without range restrictions that is described in Section~\ref{section:revised_unrolling_procedure}.
%Also, their method works with sufficiently surjective catamorphisms while ours is designed for monotonic catamorphisms.

Another similar work is that of Madhusudan et al. \cite{Madhusudan2012RPI},
where a sound, incomplete, and automated method is proposed to achieve recursive proofs for inductive tree data-structures while still maintaining a balance between expressiveness and decidability.
The method is based on {\dryad}, a recursive extension of the first-order logic.
{\dryad} has some limitations: the element values in {\dryad} must be of type \kw{int} and only four classes of abstractions are allowed in {\dryad}.
%In our approach, the type $\elem$ of element values in our approach is parameterized
In addition to the sound procedure, \cite{Madhusudan2012RPI} shows a decidable fragment of verification conditions that can be expressed in {\strand}$_{dec}$ \cite{Madhusudan2011DLC}.
However, this decidable fragment does not allow us to reason about some important properties such as the height or size of a tree.
On the other hand, the class of data structures that \cite{Madhusudan2012RPI} can work with is richer than that of our approach and can involve mutual references between elements (pointers).
%the correctness of several complex data structures.

Sato et al. \cite{Sato2013TSS} proposes a verification technique that has support for recursive data structures.
The technique is based on higher-order model checking, predicate abstraction, and counterexample-guided abstraction refinement. Given a program with recursive data structures, they encode the structures as functions on lists,
which are then encoded as functions on integers before sending the resulting program to the verification tool described in \cite{Kobayashi2011PAC}. Their method can work with higher-order functions while ours cannot.
On the other hand, their method is incomplete and cannot verify some properties of recursive data structures while ours can thanks to the use of catamorphisms.  An example of such a property is as follows:
{\em after inserting an element to a binary tree,
the set of all element values in the new tree must be a super set of that of the original tree}.

Zhang et al. in~\cite{Zhang04decisionprocedures} define an approach for reasoning over datatypes with integer constraints related to the size of recursive data structures.  This approach is much less general than ours: the size relation in~\cite{Zhang04decisionprocedures} can be straightforwardly constructed as a monotonic integer catamorphism matching the shape of the datatype.  On the other hand, the work in~\cite{Zhang04decisionprocedures} presents a decision procedure for quantified formulas, whilst our approach only supports quantifier-free formulas.

\section{Unrolling-based Decision Procedure}
\label{section:dp_and_monotonic_catas}

Inspired by the decision procedures for algebraic data types by Suter
et al.~\cite{Suter2010DPA,Suter2011SMR}, in this section we present an
unrolling-based decision procedure, the idea of generalized sufficient
surjectivity, and proofs of soundness and completeness of the
procedure for catamorphisms satisfying the condition.

\subsection{Preliminaries}
\label{section:preliminaries}

We describe the parametric logic used in the decision procedures for
algebraic data types, which is also the logic used in our decision
procedure. We also summarize the definition of catamorphisms and the
idea of sufficient surjectivity from~\cite{Suter2010DPA,Suter2011SMR}.
Although the logic and unrolling procedure is parametric with respect
to data types, in the sequel we focus on binary trees to illustrate
the concepts and proofs.

\subsubsection{Parametric Logic}
\label{section:parametric_logic}

The input to the decision procedures is a formula $\phi$ of literals
over elements of tree terms and abstractions produced by a
catamorphism. The logic is {\em parametric} in the sense that we
assume a data type $\rdttype$ to be reasoned about, a decidable
element theory $\logic_\elem$ of values in an element domain $\elem$
containing terms $E$, a catamorphism $\alpha$ that is used to abstract
the data type, and a decidable theory $\logic_\collection$ of values
in a collection domain $\collection$ containing terms $C$ generated by
the catamorphism function. Fig.~\ref{fig:syntax} shows the syntax of
the logic instantiated for binary trees. %Its semantics is in Figure
\ref{fig:semantics}. Its semantics can be found in
Fig.~\ref{fig:semantics}. The semantics refer to the catamorphism
$\alpha$ as well as the semantics of elements $[\ ]_{\elem}$ and
collections $[\ ]_{\collection}$. In a slight abuse of notation, we
will also refer to terms in the union of $C$ and $E$ as {\em CE} terms
(respectively, elements of the $\collection \elem$ domain).

\begin{figure}[htb]
\centering
\begin{tabular}{rclr}
\toprule %require booktabs package
$T$    & $::=$ &  $t~|~\Leaf~|~\Node(T,E,T)~|~\kw{left}(T)~|~\kw{right}(T)$ & Tree terms\\
$C$    & $::=$ &  $c~|~\alpha(T)~|~\theory_\collection$ & $\collection$-terms\\
$E$    & $::=$ &  $e~|~\kw{elem}(T)~| ~\theory_\elem$ & $\elem$-terms\\
$F_T$  & $::=$ &  $T = T ~|~ T \neq T$ & Tree (dis)equations \\
$F_C$  & $::=$ &  $C = C ~|~ \func_\collection$& Formula of $\logic_\collection$\\
$F_E$  & $::=$ &  $E = E ~|~ \func_\elem$& Formula of $\logic_\elem$\\
$\phi$ & $::=$ &  $F_T ~|~ F_C ~|~ F_E ~|~ \neg\phi ~|~ \phi\vee\phi ~|~ \phi\wedge\phi ~|~\phi\Rightarrow\phi ~|~ \phi\Leftrightarrow\phi$ & Formulas\\
\bottomrule
\end{tabular}
\caption{Syntax of the parametric logic} %\cite{Suter2010DPA}}
\label{fig:syntax}
\end{figure}

\begin{figure}[htb]
\centering
\begin{tabular}{rcl}
\toprule
$\mybrack{\Node(T_1, e, T_2)}$                      & $=$ &  $\Node(\mybrack{T_1}, \mybrack{e}_\elem, \mybrack{T_2})$\\
$\mybrack{\Leaf}$                                   & $=$ &  $\Leaf$\\
$\mybrack{\kw{left}(\Node(T_1, e, T_2))}$           & $=$ &  $\mybrack{T_1}$\\
$\mybrack{\kw{right}(\Node(T_1, e, T_2))}$          & $=$ &  $\mybrack{T_2}$\\
$\mybrack{\kw{elem}(\Node(T_1, e, T_2))}$          & $=$ &  $\mybrack{e}_\elem$\\
$\mybrack{\alpha(t)}$                               & $=$ &  given by the catamorphism\\
$\mybrack{ T_1 = T_2}$                              & $=$ &  $\mybrack{T_1} = \mybrack{T_2}$\\
$\mybrack{ T_1 \neq T_2}$                           & $=$ &  $\mybrack{T_1} \neq \mybrack{T_2}$\\
$\mybrack{ E_1 = E_2}$                              & $=$ &  $\mybrack{E_1}_\elem = \mybrack{E_2}_\elem$\\
$\mybrack{\func_\elem}$                       & $=$ &  $\mybrack{\func_\elem}_\elem$\\
$\mybrack{ C_1 = C_2}$                              & $=$ &  $\mybrack{C_1}_\collection = \mybrack{C_2}_\collection$\\
$\mybrack{\func_\collection}$                       & $=$ &  $\mybrack{\func_\collection}_\collection$\\
$\mybrack{\neg\phi}$                                & $=$ &  $\neg\mybrack{\phi}$\\
$\mybrack{\phi_1\star\phi_2}$                       & $=$ &  $\mybrack{\phi_1}\star\mybrack{\phi_2}$ where $\star \in \{\vee, \wedge, \Rightarrow, \Leftrightarrow\}$\\
\bottomrule
\end{tabular}
\caption{Semantics of the parametric logic}
\label{fig:semantics}
\end{figure}

The syntax of the logic ranges over data type terms $T$ and
$\collection$-terms of a decidable collection theory
$\logic_\collection$. $\theory_\collection$ and $\func_\collection$
are arbitrary terms and formulas in $\logic_\collection$, as are
$\theory_\elem$ and $\func_\elem$ in $\logic_\elem$. Tree formulas
$F_T$ describe equalities and disequalities over tree terms.
Collection formulas $F_C$ and element formulas $F_E$ describe
equalities over collection terms $C$ and element terms $E$, as well as
other operations ($\func_\collection$, $\func_\elem$) allowed by the
logic of collections $\logic_\collection$ and elements $\logic_\elem$.
$E$ defines terms in the element types $\elem$ contained within the
branches of the data types. $\phi$ defines formulas in the parametric
logic.

\subsubsection{Catamorphisms}
%\paragraph{Catamorphism:}
Given a tree in the parametric logic shown in Fig. \ref{fig:syntax},
we can map the tree to a value in $\collection$ using a {\em catamorphism},
which is a fold function of the following format:
\[
  \alpha(t) =
    \begin{cases}
    \kw{empty}  &\text{if $t = \Leaf$}\\
    \kw{combine}\bigl(\alpha(t_L), e, \alpha(t_R)\bigr) &\text{if $t = \Node(t_L, e, t_R)$}
    \end{cases}
\]
where $\kw{empty}$ is an element in $\collection$
and $\kw{combine}: (\collection, \elem, \collection) \rightarrow \collection$
is a function that combines a triple of two values in $\collection$ and an element in $\elem$
into a value in $\collection$.

%% Note that catamorphisms can be defined both syntactically (using the
%% syntax of operations from $\theory_\collection$ and $\theory_\elem$)
%% and semantically (over the same operations in the model). When it is
%% important, we will distinguish the catamorphism in the syntactic
%% domain from the same catamorphism in the semantic domain by font:
%% $\kw{Height}$ for the catamorphism syntax and \emph{Height} for the
%% semantic catamorphism over elements of the model.

\begin{table}[htb]
\caption{Sufficiently surjective catamorphisms in \cite{Suter2010DPA}}
%\centering
%\footnotesize
\scriptsize
\begin{tabular}{lclc}
\toprule
Name & $\alpha(\Leaf)$ & $\alpha(\Node(t_L, e, t_R))$ & Example\\
%\midrule
\hline
\emph{Set} & $\emptyset$ & $\alpha(t_L) \cup \{e\} \cup \alpha(t_R)$ & \{1, 2\}\\
\hline
\emph{Multiset} & $\emptyset$ & $\alpha(t_L) \uplus \{e\} \uplus \alpha(t_R)$ & \{1, 2\}\\
\hline
\emph{SizeI} & 0 & $\alpha(t_L) + 1 + \alpha(t_R)$ & 2\\
\hline
\emph{Height} & 0 & 1 + $\max\{\alpha(t_L), \alpha(t_R)\}$ & 2\\
\hline
\multirow{3}{*}{\emph{List}} & \multirow{3}{*}{\List()} & $\alpha(t_L)~@~\List(e)~@~\alpha(t_R)$ (in-order) & (1 2)\\
& & $\List(e)~@~\alpha(t_L)~@~\alpha(t_R)$ (pre-order) & (2 1)\\
& & $\alpha(t_L)~@~\alpha(t_R)~@~\List(e)$ (post-order) & (1 2)\\
\hline
\emph{Some} & \None & $\Some(e)$ & $\Some(2)$\\
\hline
\emph{Min} & \None & $\min'\{\alpha(t_L), e, \alpha(t_R)\}$ & 1\\
\hline
\multirow{2}{*}{\emph{Sortedness}} & \multirow{2}{*}{(\None, \None, \true)} & (\None, \None, \false) (if tree unsorted) & \multirow{2}{*}{(1, 2, \true)}\\
&  & (min element, max element, \true) (if tree sorted) & \\
\bottomrule
\end{tabular}
\label{table:POPL_catas}
\end{table}

The catamorphisms defined in~\cite{Suter2010DPA} are shown in Table \ref{table:POPL_catas}.
The first column contains catamorphism names\footnote{\emph{SizeI}, which maps a tree to its number of \emph{internal} nodes,
was originally named \emph{Size} in \cite{Suter2010DPA}.
We rename the catamorphism to easily distinguish it from the function $size$, which returns the total number of \emph{all} vertices in a tree, in this paper.}.
The next two columns define $\alpha(t)$
when $t$ is a $\Leaf$ and when it is a {\Node}, respectively.
The last column shows examples of the application of each catamorphism to
the tree
%$\Node\bigl(\Node(\Leaf, 1, \Leaf), 2, \Leaf\bigr)$
in Fig. \ref{fig:tree_example}.

% \begin{figure}[htb]
% %\centering
% %\includegraphics[width=0.12\textwidth]{fig/tree21_a_temp}
% \includegraphics[width=0.4\textwidth]{fig/}
% \caption{An example of a tree and its shape
% }
% \label{fig:tree_example}
% \end{figure}

\begin{wrapfigure}[7]{r}{0.4\textwidth}
%\centering
%\includegraphics[width=0.12\textwidth]{fig/tree21_a_temp}
\includegraphics[width=0.4\textwidth]{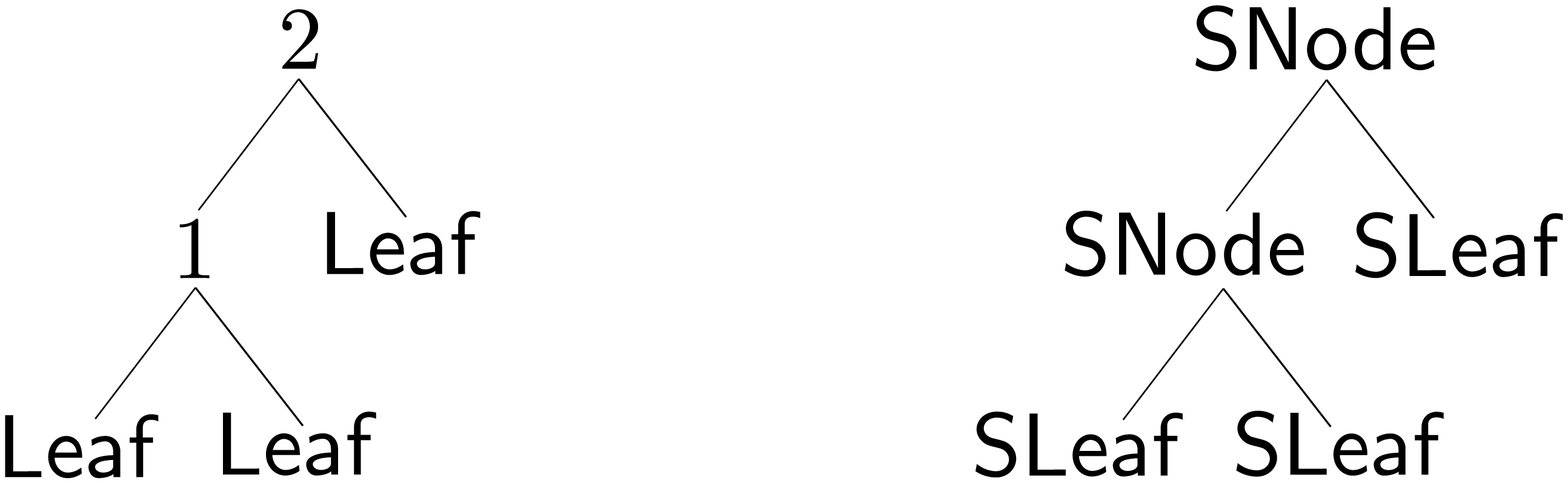}
\caption{An example of a tree and its shape
}
\label{fig:tree_example}
\end{wrapfigure}

In the \emph{Min} catamorphism, $\min'$ is the same as the usual $\min$ function
except that $\min'$ ignores {\None} in the list of its arguments, which must contain at least one non-{\None} value.
The \emph{Sortedness} catamorphism returns a triple containing the min and max element of a tree, and $\true/\false$ depending
on whether the tree is sorted or not.

\paragraph{Infinitely surjective catamorphisms:} Suter et al. \cite{Suter2010DPA} showed that many interesting catamorphisms are \emph{infinitely surjective}. Intuitively, a catamorphism is infinitely surjective if the cardinality of its inverse function is infinite for all but a finite number of trees.
\begin{definition}[Infinitely surjective catamorphisms] A catamorphism $\alpha$ is an infinitely surjective $S$-abstraction, where $S$ is a finite set of trees, if and only if the inverse image $\alpha^{-1}\bigl(\alpha(t)\bigr)$ is finite for $t \in S$ and infinite for $t\notin S$.
\label{definition:infinitely_surjective_catamorphisms}
\end{definition}
\begin{example}[Infinitely surjective catamorphisms]
\label{example:infinitely_surjective_catamorphisms}
The \emph{Set} catamorphism in Table~\ref{table:POPL_catas} is an infinitely surjective \{\Leaf\}-abstraction because:
\begin{itemize}
 \item $\bigl|\emph{Set}^{-1}\bigl(\emph{Set}(\Leaf)\bigr)\bigr| = |\emph{Set}^{-1}(\emptyset)| = 1$ (i.e., $\Leaf$ is the only tree in data type $\rdttype$ that can map to $\emptyset$ by the \emph{Set} catamorphism). Hence, $\emph{Set}^{-1}\bigl(\emph{Set}(\Leaf)\bigr)$ is finite.
 \item $\forall t \in \rdttype, t \neq \Leaf: \bigl|\emph{Set}^{-1}\bigl(\emph{Set}(t)\bigr)\bigr| = \infty$.
 The reason is that when $t$ is not $\Leaf$, we have $\emph{Set}(t) \neq \emptyset$. Hence, there are an infinite number of trees that can map to $\emph{Set}(t)$ by the catamorphism \emph{Set}. For example, consider the tree in Fig. \ref{fig:tree_example}; let us call it $t_0$. We have $\emph{Set}(t_0) = \{1, 2\}$; hence, $|\emph{Set}^{-1}(\{1, 2\})| = \infty$ since there are an infinite number of trees in $\rdttype$ whose elements values are 1 and 2.
 \end{itemize}
As a result, \emph{Set} is infinitely surjective by Definition \ref{definition:infinitely_surjective_catamorphisms}.
\exampleEndMark
\end{example}

\paragraph{Sufficiently surjective catamorphisms:}
The decision procedures by Suter et
al.~\cite{Suter2010DPA,Suter2011SMR} were claimed to be complete if
the catamorphism used in the procedures is \emph{sufficiently
  surjective} \cite{Suter2010DPA}. Intuitively, a catamorphism is
sufficiently surjective if the inverse of the catamorphism has
sufficiently large cardinality for all but a finite number of tree
shapes. In fact, the class of infinitely surjective catamorphisms is
just a special case of sufficiently surjective catamorphisms
\cite{Suter2010DPA}.

To define the notion of sufficiently surjective catamorphisms, we have to define \emph{tree shapes} first. The shape of a tree is obtained by removing all element values in the tree. Fig. \ref{fig:tree_example} shows an example of a tree and its shape.
\begin{definition}[Tree shapes]
\label{definition:shapes}
The shape of a tree is defined by constant $\SLeaf$ and constructor $\SNode(\_, \_)$ as follows:
\[
  \shape(t) =
    \begin{cases}
    \SLeaf  &\text{if $t = \Leaf$}\\
    \SNode\bigl(\shape(t_L), \shape(t_R)\bigr) &\text{if $t = \Node(t_L, \_, t_R)$}\\
    \end{cases}
\]
\end{definition}

\begin{definition}[Sufficiently surjective catamorphisms \cite{Suter2010DPA}]
\label{definition:sufficient_surjectivity}
A catamorphism $\alpha$ is sufficiently surjective iff for each $p \in \naturalnums^+$,
there exists, computable as a function of $p$,
\begin{itemize}
 \item a finite set of shapes $S_p$
 \item a closed formula $M_p$ in the union of the collection and element 
   theories\footnote{Note that Suter et. al in~\cite{Suter2010DPA} describe $M_{p}$ over the collection theory only, but that paper contains examples that involve both the collection and element theory (c.f., $M_{p}$ for multiset catamorphisms).  The addition of the element theory does not require modification to any of the proofs in our work or~\cite{Suter2010DPA}.} such that for any collection element $c$, $M_p(c)$ implies $|\alpha^{-1}(c)| > p$
\end{itemize}
such that  $M_p\bigl(\alpha(t)\bigr)$ or $\shape(t) \in S_p$ for every tree term $t$.
\end{definition}

\begin{example}[Sufficiently surjective catamorphisms]
We showed in Example~\ref{example:infinitely_surjective_catamorphisms}
that the \emph{Set} catamorphism is infinitely surjective. Let us now
show that the catamorphism is sufficiently surjective by Definition
\ref{definition:sufficient_surjectivity}. Let $S_p = \{\SLeaf\}$ and
$M_p(c) \equiv c \neq \emptyset$. For this $M_p$, the only base case
to consider is the tree $\Leaf$: either a tree is $\Leaf$, whose shape
is in $S_p$, or the catamorphism value returned is not the empty set,
in which case $M_p$ holds. Furthermore, $M_p(c)$ implies
$|\alpha^{-1}(c)| = \infty$. \exampleEndMark
\end{example}

Despite its name, sufficient surjectivity has no surjectivity requirement for the range of $\alpha$.
It only requires a ``sufficiently large" number of trees for values satisfying the condition $M_p$.
The \emph{SizeI} catamorphism is a good example of a sufficiently surjective catamorphism that is not surjective. In other words, there is no restriction for the range of a sufficiently surjective catamorphism.
However, to ensure the completeness of the unrolling decision procedure, the range restriction must be taken into account. We will discuss this issue in Section \ref{section:revised_unrolling_procedure}.

Table \ref{table:POPL_catas} describes all sufficiently surjective catamorphisms in \cite{Suter2010DPA}.
%\footnote{Suter et al. \cite{Suter2010DPA} implicitly make an assumption about $\elem$
%in the description of the parametric logic.
%That is, there are an infinite number of values we can choose for each element of type $\elem$.
%Without this assumption, the sortedness catamorphism over trees of distinct integer elements cannot be sufficiently surjective
%since we cannot construct $M_p$ when $p$ is sufficiently large.}.
The only catamorphism in \cite{Suter2010DPA} not in Table \ref{table:POPL_catas} is the \Mirror~catamorphism:
\[
  \Mirror(t) =
    \begin{cases}
    \Leaf  &\text{if $t = \Leaf$}\\
    \Node\bigl(\Mirror(t_R), e, \Mirror(t_L)\bigr) &\text{if $t = \Node(t_L, e, t_R)$}\\
    \end{cases}
\]
\noindent %The reason is that \Mirror is not sufficiently surjective:
Since the cardinality of the inversion function of the catamorphism \Mirror~is always 1,
the sufficiently surjective condition does not hold for this catamorphism.
%Since $\forall t \in \rdttype: \beta_{Mirror}(t) = 1$,
%the sufficiently surjective condition does not hold for this catamorphism.

%%% Local Variables:
%%% mode: latex
%%% TeX-master: "unrolling_dec_proc_journal_main.tex"
%%% End:

%%  LocalWords:  rclr rcl lclc SizeI Sortedness Suter et al iff

\subsection{Properties of Trees and Shapes in the Parametric Logic}
\label{section:properties}
We present some important properties of trees and shapes in the
parametric logic (Section \ref{section:parametric_logic}) which play
important roles in the subsequent sections of this paper.

\paragraph{\underline{Properties of Trees.}}
We assume the standard definitions of height and size for trees in the
parametric logic with $\height(\Leaf) = 0$ and $\size(\Leaf) = 1$. The
following properties result directly from structural induction on
trees in the parametric logic.

\begin{property}[Type of tree]
\label{property:type_of_tree}
Any tree in the parametric logic is a full binary tree.
\end{property}

\begin{property}[Size]
\label{property:size_is_odd}
The number of vertices in any tree in the parametric logic is odd.
Also, in a tree $t$ of size $2k + 1$ ($k \in \naturalnums$), we have
$k$ internal nodes and $k+1$ leaves.
\end{property}

\begin{property}[Size vs. Height]
\label{property:size_height}
In the parametric logic, the size of a tree of height $h \in
\naturalnums$ must be at least $2h + 1$:
\[\forall t \in \rdttype : \size(t) \geq 2 \times \height(t) + 1\]
\end{property}

\paragraph{\underline{Properties of Tree Shapes.}}
We now show a special relationship between tree shapes and the well-known Catalan numbers \cite{Stanley2001},
which, according to Koshy \cite{Koshy2009},
can be computed as follows:
\begin{align*}
\catalan_0 &= 1 &
\catalan_{n + 1} &= \frac{2(2n + 1)}{n + 2}\catalan_n~\mbox{(where $n \in \naturalnums$)}
\end{align*}
where $\catalan_n$ is the $n^\text{th}$ Catalan number. Catalan numbers will
be used to establish some properties of associative catamorphisms in
Section \ref{section:assoc_cata}.

% \size(\shape(t)) = \size(t).

Define the size of the shape of a tree to be the size of the tree.
Let $\oddnums$ be the set of odd natural numbers.
Due to Property \ref{property:size_is_odd}, the size of a shape is in $\oddnums$.
Let $\numshapes(s)$ be the \underline{n}umber of tree \underline{s}hapes of size $s \in \oddnums$.

\begin{lemma}
\label{lemma:num_shapes_catalan}
The number of shapes of size $s \in \oddnums$ is the $\frac{s-1}{2}$-th Catalan number:
%\[\numshapes(2n + 1) = \catalan_{n} = \frac{1}{n + 1} \binom{2n}{n}\]
\[\numshapes(s) = \catalan_{\frac{s-1}{2}}\]
\end{lemma}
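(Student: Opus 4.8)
The plan is to prove the equivalent statement $\numshapes(2k+1) = \catalan_k$ for every $k \in \naturalnums$, which is exactly the claim under the change of variable $k = \frac{s-1}{2}$. This reparametrization is justified by Property \ref{property:size_is_odd}: every shape has odd size $2k+1$, and for such a shape $k$ is precisely its number of internal nodes. Writing $g(k) := \numshapes(2k+1)$, I would establish $g(k) = \catalan_k$ by strong induction on $k$.

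For the base case, the only shape of size $1$ is $\SLeaf$, so $g(0) = 1 = \catalan_0$. The inductive step rests on the structural decomposition of shapes from Definition \ref{definition:shapes}. Any shape of size $2k+1$ with $k \geq 1$ has the form $\SNode(s_L, s_R)$ for uniquely determined subshapes $s_L, s_R$, and conversely every ordered pair of shapes yields a distinct shape. If $s_L$ has size $2i+1$ and $s_R$ has size $2j+1$, then $\SNode(s_L, s_R)$ has size $(2i+1)+(2j+1)+1 = 2(i+j+1)+1$, so the split must satisfy $i + j = k-1$. Summing over all admissible splits gives the convolution (Segner) recurrence
\[ g(k) = \sum_{i=0}^{k-1} g(i)\,g(k-1-i), \qquad g(0) = 1. \]

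The main obstacle is that this convolution form looks nothing like the multiplicative recurrence $\catalan_{n+1} = \frac{2(2n+1)}{n+2}\catalan_n$ that the excerpt uses to \emph{define} the Catalan numbers, so the real work is bridging the two presentations. I would do this by invoking the standard fact (see Stanley \cite{Stanley2001} and Koshy \cite{Koshy2009}) that the Catalan numbers also satisfy the convolution recurrence $\catalan_k = \sum_{i=0}^{k-1}\catalan_i\,\catalan_{k-1-i}$ with $\catalan_0 = 1$; alternatively, one verifies this equivalence directly by a short induction checking that both recurrences produce the closed form $\catalan_n = \frac{1}{n+1}\binom{2n}{n}$.

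Once the two sequences are known to obey the same initial condition and the same convolution recurrence, the induction closes immediately: applying the inductive hypothesis to each factor $g(i)$ and $g(k-1-i)$, which is legitimate since $i, k-1-i < k$, and comparing term by term with the convolution form of $\catalan_k$ yields $g(k) = \catalan_k$. Substituting back $k = \frac{s-1}{2}$ then gives $\numshapes(s) = \catalan_{\frac{s-1}{2}}$ for all $s \in \oddnums$, as required.
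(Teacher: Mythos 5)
Your proof is correct, but it takes a more self-contained route than the paper. The paper's proof is a two-line appeal to the literature: Property \ref{property:type_of_tree} says every tree in the parametric logic is a full binary tree, hence so is every shape, and the enumeration of full binary trees by Catalan numbers is then cited wholesale from Stanley and Koshy. You instead derive the combinatorial content yourself: the root decomposition $\SNode(s_L,s_R)$ with the size bookkeeping $i+j=k-1$ gives the Segner convolution $g(k)=\sum_{i=0}^{k-1}g(i)\,g(k-1-i)$, and the only external fact you need is the purely arithmetic one that the sequence defined by the paper's multiplicative recurrence $\catalan_{n+1}=\frac{2(2n+1)}{n+2}\catalan_n$ also satisfies that convolution. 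This is a real difference in where the citation burden sits: the paper outsources the entire counting argument, while you outsource only the identity between two presentations of the Catalan numbers (which, as you note, can be checked via the closed form $\frac{1}{n+1}\binom{2n}{n}$). Your version is longer but makes explicit the bijection between shapes and full binary trees that the paper leaves implicit, and it correctly identifies that the mismatch between the convolution recurrence and the paper's multiplicative definition is the one genuine gap that must be bridged. Both arguments are sound.
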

\begin{proof}
Property \ref{property:type_of_tree} implies that tree shapes are also full binary trees.
The lemma follows since the number of full binary trees of size $s \in \oddnums$ is $\catalan_{\frac{s-1}{2}}$ \cite{Stanley2001,Koshy2009}.
% A proof of the formula of Catalan_n is from Theorem 8.2 in Koshy \cite{Koshy2009}.
\qed
\end{proof}

\begin{lemma}
\label{lemma:numshapes} Function $\numshapes: \oddnums \rightarrow \naturalnums^+$ is monotone:
\[1 = \numshapes(1) = \numshapes(3) < \numshapes(5) < \numshapes(7) < \numshapes(9) <\ldots\]
\iffalse
Function $\numshapes: \oddnums \rightarrow \naturalnums^+$ satisfies the following condition:
\[1 = \numshapes(1) = \numshapes(3) < \numshapes(5) < \numshapes(7) < \numshapes(9) <\ldots \]
\fi
\end{lemma}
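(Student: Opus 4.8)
The plan is to reduce the statement to a monotonicity property of the Catalan numbers and then read that property off directly from the recurrence given just before the lemma. By Lemma~\ref{lemma:num_shapes_catalan}, $\numshapes(s) = \catalan_{(s-1)/2}$ for every $s \in \oddnums$, so as $s$ runs through $1, 3, 5, 7, \ldots$ the index $n = (s-1)/2$ runs through $0, 1, 2, 3, \ldots$. Hence the chain to be proved is exactly
\[ 1 = \catalan_0 = \catalan_1 < \catalan_2 < \catalan_3 < \cdots, \]
i.e.\ the assertion that the Catalan sequence is constant on its first two terms and strictly increasing thereafter.

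First I would establish positivity: a trivial induction using $\catalan_0 = 1$ together with the fact that the multiplier $\frac{2(2n+1)}{n+2}$ is positive for every $n \in \naturalnums$ shows $\catalan_n > 0$ for all $n$. Next I would analyze the ratio appearing in the recurrence,
\[ \frac{\catalan_{n+1}}{\catalan_n} = \frac{2(2n+1)}{n+2} = \frac{4n+2}{n+2}. \]
Comparing numerator and denominator, $4n+2 > n+2 \iff 3n > 0 \iff n \geq 1$, whereas $4n+2 = n+2$ exactly when $n = 0$. So the multiplier equals $1$ at $n = 0$, giving $\catalan_1 = \catalan_0 = 1$, and strictly exceeds $1$ for every $n \geq 1$; combined with positivity this yields $\catalan_{n+1} > \catalan_n$ for all $n \geq 1$. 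Translating back through $n = (s-1)/2$ reproduces the displayed chain, with the two equalities at $s = 1, 3$ coming from the $n = 0$ case and all subsequent strict inequalities from the $n \geq 1$ case.

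There is no genuine obstacle here; the only point requiring care is not to overclaim strictness at the left end of the chain, since the multiplier degenerates to exactly $1$ at $n = 0$ (equivalently $s = 1$), which is precisely why $\numshapes(1) = \numshapes(3)$ rather than $\numshapes(1) < \numshapes(3)$. Everything else follows immediately from the recurrence once positivity is in hand.
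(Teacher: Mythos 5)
Your proof is correct and follows essentially the same route as the paper: both reduce the claim to the Catalan recurrence via Lemma~\ref{lemma:num_shapes_catalan} and then compare the multiplier $\frac{2(2n+1)}{n+2}$ against $1$ (equivalently, $4n+2$ against $n+2$), obtaining equality at $n=0$ and strict growth for $n\geq 1$. Your explicit positivity induction is a minor point the paper leaves implicit, but the argument is identical in substance.
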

\begin{proof}
Clearly, $\catalan_1 = \catalan_0 = 1$. When $n \geq 1$, we have:
\begin{equation*}
\catalan_{n + 1} = \frac{2(2n + 1)}{n + 2}\catalan_n > \frac{2(2n +
  1)}{4n + 2}\catalan_n = \catalan_n
\end{equation*}
Therefore, by induction on $n$, we obtain: $1 = \catalan_0 =
\catalan_1 < \catalan_2 < \catalan_3 < \catalan_4 < \ldots$, which
completes the proof because of Lemma \ref{lemma:num_shapes_catalan}.
\qed
\end{proof}

%%% Local Variables:
%%% mode: latex
%%% TeX-master: "unrolling_dec_proc_journal_main.tex"
%%% End:
%%  LocalWords:  disequality unrollings unsatisfiability ni nl Koshy
%%  LocalWords:  nternal th hapes

\subsection{Unrolling-based Decision Procedure Revisited}
\label{section:revised_unrolling_procedure}

This section presents our unrolling-based decision procedure, which was inspired by the work by Suter et al. \cite{Suter2011SMR}.
First, let us define two notions that will be used frequently throughout the discussions in this section.

\emph{An uninterpreted function representing catamorphism
  applications.} The evaluation of $\alpha(t_0)$ for some tree term
$t_0 \in \rdttype$ might depend on the value of some $\alpha(t_0')$
that we have no information to evaluate. In this case, our decision
procedure treats $\alpha(t_0')$ as an application of the uninterpreted
function $\uf(t_0')$, where $\uf: \rdttype \rightarrow \collection$.

% For example, let $t_0$ be $\Node\bigl(\Node(t_{L1}, 1, t_{R1}), 0, \Node(t_{L2}, 2, t_{R2})\bigr)$.
% Assume that we will need to treat $\alpha(t_{L1}), \alpha(t_{R1}), \alpha(t_{L2})$, and $\alpha(t_{R2})$ as uninterpreted functions.
% $alpha(t_0)$ will be evaluated as
%
% \begin{align*}
% \alpha(t_0) &= \combine\bigl(\combine(\alpha(t_{L1}), 1, \alpha(t_{R1})), 0, \combine(\alpha(t_{L2}), 2, \alpha(t_{R2}))\bigr) \\
% &= \combine\bigl(\combine(\uf(t_{L1}), 1, \uf(t_{R1})), 0, \combine(\uf(t_{L2}), 2, \uf(t_{R2}))\bigr)
% \end{align*}
%
% In case we cannot evaluate $\alpha(t_0)$ because we do not know how to evaluate either $\alpha\bigl(\kw{left}(t_0)\bigr)$ or $\alpha\bigl(\kw{right}(t_0)\bigr)$ (say $\alpha\bigl(\kw{left}(t_0)\bigr)$ for the sake of the presentation),
% our decision procedure approximates the value of $\alpha\bigl(\kw{left}(t_0)\bigr)$ as the call to an \emph{uninterpreted function}.
%
% Let $\uf(t): \rdttype \rightarrow \collection$ be an uninterpreted function that represents $\alpha(t)$. The value of $\alpha(t_0)$ can now be evaluated as
For example, suppose that only $\alpha(\kw{left}(t_0))$ needs to be considered as an uninterpreted function while evaluating $\alpha(t_0)$; we can compute $\alpha(t_0)$ as follows:
\[
  \alpha(t_0) =
    \begin{cases}
    \kw{empty}  &\text{if $t_0 = \Leaf$}\\
    \kw{combine}\bigl(\uf\bigl(\kw{left}(t_0)\bigr), \kw{elem}(t_0), \alpha\bigl(\kw{right}(t_0)\bigr)\bigr) &\text{if $t_0 \neq \Leaf$}
    \end{cases}
\]

\emph{Control conditions.} For each catamorphism application $\alpha(t)$, we use a control condition $b_t$ to check whether the evaluation of $\alpha(t)$ depends on the uninterpreted function $\uf$ or not. If $b_t$ is $\true$,
we can evaluate $\alpha(t)$ without calling to the uninterpreted function $\uf$.
For the $\alpha(t_0)$ example above, we have $b_{t_0} \equiv t_0 = \Leaf$.

The unrolling procedure proposed by Suter et al. \cite{Suter2011SMR} is restated in Algorithm~\ref{alg:sas_dp},
and our revised unrolling procedure is shown in Algorithm~\ref{alg:revised_dp}.
% Fig. \ref{fig:unrollingdecproc} shows the unrolling idea used in the decision procedure.
%
% \begin{figure}[htb]
%    %\centering
%    \includegraphics[width=0.9\textwidth]{fig/unrollingdecproc}
%    \caption{Sketch of the unrolling-based decision procedure for algebraic data types}
%    \label{fig:unrollingdecproc}
%  \end{figure}
The input of both algorithms consists of
\begin{itemize}
 \item a formula $\phi$ written in the parametric logic (described in Section \ref{section:parametric_logic}) that consists of literals over elements of tree terms and tree abstractions generated by a catamorphism (i.e., a fold function that maps a recursively-defined data type to a value in a base domain).
In other words,  $\phi$ contains a recursive data type $\rdttype$ (a tree term as defined in the syntax), an element type $\elem$ of the value stored in each tree node, a collection type $\collection$ of tree abstractions in a decidable logic $\logic_\collection$, and a catamorphism $\alpha: \rdttype \rightarrow \collection$ that maps an object in the data type $\rdttype$ to a value in the collection type $\collection$.
 \item a program $\Pi$, which contains $\phi$, the definitions of data type $\rdttype$, and catamorphism $\alpha$.
\end{itemize}

\bigskip
\noindent
\SetInd{0.5em}{0.28em}
\begin{minipage}{0.49\textwidth}
\scriptsize
\begin{algorithm}[H]
\label{alg:sas_dp}
  %Unification over tree terms\\
  $\phi \leftarrow \phi[U_\alpha / \alpha]$ \\
  $(\phi, B, \_) \leftarrow \unrollStep(\phi, \Pi, \emptyset)$\\
  \While{\true} {
    \Switch{decide($\phi \wedge \bigwedge_{b \in B} b$)}{
      \Case{SAT}
      {
        \Return{``SAT''} \label{dpcode:sat_case}
      }
      \Case {UNSAT}
      {
        \Switch{decide($\phi$)}{ \label{dpcode:decide_without_control_conditions}
          \Case{UNSAT}
          {
            \Return{``UNSAT''} \label{dpcode:unsat_case}
          }
          \Case{SAT}
          {
            $(\phi, B, \_ ) \leftarrow $ \\
            \quad $\unrollStep(\phi, \Pi, B)$
          } \label{dpcode:sat_without_control_conditions}
        }
      }
    }
  }
\caption{Unrolling decision procedure in \cite{Suter2011SMR} with \emph{sufficiently surjective} catamorphisms}
\end{algorithm}
\end{minipage}
\hspace{0.1cm}
% \SetInd{0.5em}{1em}
% \setlength{\algomargin} {1em}
\begin{minipage}{0.49\textwidth}
\begin{algorithm}[H]
\label{alg:revised_dp}
\scriptsize
%Unification over tree terms\\
  $\phi \leftarrow \phi[U_\alpha / \alpha]$ \\
  $(\phi, B, R) \leftarrow \unrollStep(\phi, \Pi, \emptyset)$\\
  \While{\true} {
    \Switch{decide($\phi \wedge \bigwedge_{b \in B} b$)}{
      \Case{SAT}
      {
        \Return{``SAT''} \label{reviseddpcode:sat_case}
      }
      \Case {UNSAT}
      {
        \Switch{decide($\phi \wedge \bigwedge_{r \in R} r) $}{\label{reviseddpcode:decide_without_control_conditions}
          \Case{UNSAT}
          {
            \Return{``UNSAT''} \label{reviseddpcode:unsat_case}
          }
          \Case{SAT}
          {
            $(\phi, B, R) \leftarrow $ \\
            \quad $\unrollStep(\phi, \Pi, B)$\label{reviseddpcode:unsat_case_end}
          } \label{reviseddpcode:sat_without_control_conditions}
        }
      } \label{reviseddpcode:unsat_case_start}
    }
  }
\caption{Unrolling decision proc. with \emph{generalized sufficiently surjective} catamorphisms (Def. \ref{definition:generalized_sufficient_surjectivity})}
\end{algorithm}
\end{minipage}
\smallskip

The decision procedure works on top of an SMT solver
%\footnote{Suter et al.\cite{Suter2011SMR} specifically used Z3 \cite{DeMoura2008ZES} as the underlying SMT solver.}
$\solver$ that supports theories for $\rdttype, \elem, \collection$, and uninterpreted functions.
Note that the only part of the parametric logic that is not inherently supported by $\solver$ is the applications of the catamorphism.
The main idea of the decision procedure is to approximate the behavior of the catamorphism by repeatedly unrolling it
and treating the calls to the not-yet-unrolled catamorphism instances at the leaves as calls to an uninterpreted function $U_{\alpha}$.
We start by replacing all instances of the catamorphism
$\alpha$ by instances of an uninterpreted function $U_{\alpha}$ using
the substitution notation $\phi[U_\alpha / \alpha]$.
The uninterpreted function can return any values in its codomain;
thus, the presence of this uninterpreted function can make \emph{SAT} results untrustworthy.
To address this issue, each time the catamorphism is unrolled, a set of boolean control conditions $B$ is created to
determine if the determination of satisfiability is independent of the uninterpreted function at the ``leaf'' level of the unrolling.
That is, if all the control conditions in $B$ are true, the uninterpreted function $U_{\alpha}$ does not play any role in the satisfiability result.
The unrollings without control conditions represent an
over-approximation of the formula with the semantics of the program
with respect to the parametric logic, in that it accepts all models
accepted by the logic plus some others (due to the
uninterpreted function).  The unrollings with control conditions
represent an under-approximation: all models accepted by this model will
be accepted by the logic with the
catamorphism.%\footnote{Note that the definitions for over- and
%  under-approximation from \cite{Suter2011SMR} do not match the standard
%  usage in model checking, where over-approximation means that more
%  models are accepted by the approximation than the original.}

% In other words, the algorithm successively overapproximates and underapproximates the satisfiability of the original program using a set of ``control conditions''.
% If we use these conditions (i.e., these conditions are true), the satisfying assignment does not use any uninterpreted function values, so we have a complete finite model and hence \emph{SAT} results are accurate.  If we do not use these conditions (i.e., at least one of them is false), the uninterpreted functions are allowed to contribute to the \emph{SAT}/\emph{UNSAT} result. If the solver returns \emph{UNSAT} in this case, the original problem must be \emph{UNSAT} since assigning any values to the uninterpreted functions still cannot make the problem \emph{SAT}.

In addition, we observe that if a catamorphism instance is treated as
an uninterpreted function, the uninterpreted function should only
return values inside the \range\footnote{The codomain of a
  function $f: X \rightarrow Y$ is the set $Y$ while the range of $f$
  is the actual set of all of the output the function can return in
  $Y$. For example, the codomain of \emph{Height} when defined against
  \smtlib~\cite{BarSTSMT10} is $\kw{Int}$ while its range is the set
  of natural numbers.} of the catamorphism. In our decision procedure,
a user-provided predicate $R_\alpha$ captures the range constraint of
the catamorphism. $R_\alpha$ is applied to instances of $\uf(t)$ to
constrain the values of the uninterpreted function to the range of
$\alpha$.

\iffalse
    What is required here?
    Step 1:
        FN = all t's in \alpha(t).

\fi
\begin{algorithm}[htb]
\label{alg:unroll_step}
  %Unification over tree terms\\
  %$(\phi, B) \leftarrow \unrollStep(\phi, \Pi, B)$\\
  %\tcc*[h]{Function called for the first time} \\
  \If(\tcc*[f]{Function called for the first time}){$B = \emptyset$ } {
    $\unrolledtrees \leftarrow \{ t~|~\alpha(t) \in \phi \} $  \tcc*[f]{Global set of frontier nodes} \\ 
    $B \leftarrow \{ \kw{false} \}$ \\ 
    $R \leftarrow \{R_\alpha(\uf(t))~|~t \in \unrolledtrees \} $ \\ 
  }
  \Else{
    $\unrolledtrees \leftarrow \bigcup \{ \{\kw{left}(t), \kw{right}(t)\}~|~t \in \unrolledtrees \} $ \\
  $B \leftarrow \emptyset$ \\
  $R \leftarrow \emptyset$ \\
  \For{$t \in \unrolledtrees$} {
    $B \leftarrow B \cup \{ t = \kw{Leaf} \}$ \\
    $\phi \leftarrow \phi \land (\kw{$\uf$}(t) = $ \\
      \ \ \ \ \ \ \ \ \ \ \ \ \ \ \ \
      $(\kw{ite}~(t = \kw{Leaf})~
         \kw{empty}_{\Pi}~
                (\kw{combine}_{\Pi}(\uf(\kw{left}(t)), \kw{elem}(t), \uf(\kw{right}(t))))))$ \\
    $R \leftarrow R \cup \{R_\alpha(\uf(\kw{left}(t))),
     R_\alpha(\uf(\kw{right}(t)))\}$
  }
  }
  \Return {$\phi, B, R$}
\caption{Algorithm for $\unrollStep(\phi, \Pi, B)$}
\end{algorithm}

Algorithm \ref{alg:revised_dp} determines the satisfiability of $\phi$ through repeated unrollings of $\alpha$ using the \unrollStep~ function in Algorithm \ref{alg:unroll_step}.
Given a formula $\phi_i$ generated from the original $\phi$ after unrolling the catamorphism $i$ times and the set of control conditions $B_i$ of $\phi_i$, function $\unrollStep(\phi_i, \Pi, B_i)$ unrolls the catamorphism one more time and returns a triple $(\phi_{i+1}, B_{i+1}, R_{i+1})$ containing the unrolled version $\phi_{i+1}$ of $\phi_i$, a set of control conditions $B_{i+1}$ for $\phi_{i+1}$, and a set of range restrictions $R_{i+1}$ for elements of the codomain of $U_{\alpha}$ corresponding to trees in the leaf-level of the unrolling.

The mechanism by which Algorithm~\ref{alg:unroll_step} ``unrolls'' the catamorphism is actually by constraining the values returned by $\uf$.  This is done with equality constraints that describe the structure of the catamorphism.  Each time we unroll, we start from a set of recently unrolled ``frontier'' vertices $\unrolledtrees$ that define the nodes at the current leaf-level of unrolling. $\unrolledtrees$ is initialized when the function is called for the first time. We then extend the frontier by examining the left and right children of the frontier nodes and define the structure of $\alpha$ over the (previously unconstrained) left and right children of the current frontier of the unrolling process.  The unrolling checks whether or not the tree in question is a $\Leaf$; if so, its value is $\kw{empty}_{\Pi}$; otherwise, its value is the result of applying the approximated catamorphism $\uf$ to its children.

Function $decide(\varphi)$ in Algorithm \ref{alg:revised_dp} simply calls the solver $\solver$ to check the satisfiability of $\varphi$ and returns \emph{SAT}/\emph{UNSAT} accordingly.
Algorithm \ref{alg:revised_dp} either terminates when $\phi$ is proved
to be satisfiable without the use of the uninterpreted function (line \ref{dpcode:sat_case}) or $\phi$ is proved to be unsatisfiable when the presence of uninterpreted function cannot make the problem satisfiable (line \ref{dpcode:unsat_case}).

Let us examine how satisfiability and unsatisfiability are determined in the procedure.
In general, the algorithm keeps unrolling the catamorphism until we find a \emph{SAT}/\emph{UNSAT} result that we can trust.
To do that, we need to consider several cases after each unrolling step is carried out.
First, at line 5, $\phi$ is satisfiable and all the control conditions
are true, which means the uninterpreted function is not involved in the satisfiable result.
In this case, we have a complete tree model for the \emph{SAT} result and we can conclude that the problem is satisfiable.

On the other hand, consider the case when $decide(\phi \wedge
\bigwedge_{b \in B} b) = \emph{UNSAT}$. The \emph{UNSAT} may be due to
the unsatisfiability of $\phi$, or the set of control conditions, or
both of them together. To understand the \emph{UNSAT} case more deeply, we
could try to check the satisfiability of $\phi$ alone. Note that
checking $\phi$ alone would mean that the control conditions are not
used; consequently, the values of the uninterpreted function could
contribute to the \emph{SAT}/\emph{UNSAT} result. Therefore, we
instead check $\phi$ with the range restrictions on the uninterpreted
function in the satisfiability check (i.e., $decide(\phi \wedge
\bigwedge_{r \in R} r)$ at line 8) to ensure that if a catamorphism
instance is viewed as an uninterpreted function then the uninterpreted
function only returns values inside the range of the catamorphism. If
$decide(\phi \wedge \bigwedge_{r \in R} r) = \emph{UNSAT}$ as at line
9, we can conclude that the problem is unsatisfiable because the
presence of the uninterpreted function still cannot make the problem
satisfiable as a whole. Finally, we need to consider the case
$decide(\phi \wedge \bigwedge_{r \in R} r) = \emph{SAT}$ as at line
11. Since we already know that $decide(\phi \wedge \bigwedge_{b \in B}
b) = \emph{UNSAT}$, the only way to make $decide(\phi \wedge
\bigwedge_{r \in R} r)$ = \emph{SAT} is by using at least one value
returned by the uninterpreted function, which also means that the
\emph{SAT} result is untrustworthy. Therefore, we need to keep
unrolling at least one more time as denoted at line 12.

The central problem of Algorithm \ref{alg:sas_dp} is that its termination is not guaranteed.
For example, non-termination can occur
% result because the catamorphism is not monotonic, as described in
% Section~\ref{section:monotonic_cata}, but can also
if the uninterpreted function $\uf$
representing $\alpha$ can return values outside the range of $\alpha$.
For example, consider an unsatisfiable formula: $\emph{SizeI}(t) < 0$ when {\em SizeI} is defined over the integers in an SMT solver.
% % It is straightforward to create catamorphisms that
% % satisfy the sufficient surjectivity definition that will not terminate using
% % the procedure.
% For instance, consider the following catamorphism, which
% sums the magnitudes of all the elements within the tree:
% % \begin{small}
% % \begin{verbatim}
% % (define-catamorphism TreeMagnitude ((foo IntTree)) Int
% %   (ite
% %     (is_Leaf foo)
% %       0
% %       (ite (< (elem foo) 0)
% %         (+ (TreeMagnitude (left foo)) (-(elem foo))
% %            (TreeMagnitude (right foo)))
% %         (+ (TreeMagnitude (left foo)) (elem foo)
% %            (TreeMagnitude (right foo))))))
% % \end{verbatim}
% % \end{small}
% \[
%   \emph{SumMagn}(t) =
%     \begin{cases}
%     0  &\text{if $t = \Leaf$}\\
%     \emph{SumMagn}(t_L) + |e| + \emph{SumMagn}(t_R) &\text{if $t = \Node(t_L, e, t_R)$}\\
%     \end{cases}
% \]
% \noindent The catamorphism is sufficiently surjective with predicate $M_p(c) = c \geq 0$; for any value in the
% non-negative range of the type, there are an infinite number of trees that
% satisfy $\alpha^{-1}(c)$. Suppose now that we attempt to prove the following predicate is unsatisfiable
% using Algorithm \ref{alg:sas_dp}:
% \[
% \emph{SumMagn} \bigl({\Node}(t_1, 2, t_2)\bigr) < 2
% \]
%
% % \begin{verbatim}
% %     (< (SumMagn (Node t1 2 t2)) 2)
% % \end{verbatim}
% \noindent for uninterpreted trees $t_1$ and $t_2$.
%For example, suppose we attempt to prove the following predicate is unsatisfiable
%using Algorithm \ref{alg:sas_dp}:
%\[
%\emph{SizeI}~(t) < 0
%\]
% \begin{verbatim}
%     (< (SumMagn (Node t1 2 t2)) 2)
% \end{verbatim}
%\noindent for an uninterpreted tree $t$.
Although \emph{SizeI} is sufficiently surjective \cite{Suter2010DPA}, Algorithm \ref{alg:sas_dp} will not terminate since each uninterpreted function at the leaves of
the unrolling can always choose an arbitrarily large negative number to assign
as the value of the catamorphism, thereby creating a
satisfying assignment when evaluating the input formula without control conditions.
These negative values are outside the range of {\em SizeI}.
Broadly speaking, this termination
problem can occur for any catamorphism that is not surjective.  Unless an
underlying solver supports predicate subtyping, such
catamorphisms are easily constructed. In fact, \emph{SizeI} and \emph{Height} catamorphisms
are not surjective when defined against \smtlib~\cite{BarSTSMT10}.
The issue involves the definition of sufficient surjectivity,
which does not actually
require that a catamorphism be surjective, i.e., defined across the
entire codomain.  All that is required for sufficient surjectivity is
a predicate $M_p$ that constrains the catamorphism value to represent
``acceptably large'' sets of trees.  The \emph{SizeI} catamorphism is an
example of a sufficiently surjective catamorphism that is not
surjective.

To address the non-termination issue, we need to constrain the
assignments to the uninterpreted function $\uf$ representing
$\alpha$ to return only values from the range of $\alpha$. A
user-provided predicate $R_\alpha$ is used as a recognizer for the
range of $\alpha$ to make sure that any values that uninterpreted
function $\uf$ returns can actually be returned by $\alpha$:
\begin{equation}
  \forall c \in \collection: R_\alpha(c) \Leftrightarrow (\exists t
  \in \rdttype : \alpha(t) = c) \label{correctr}
\end{equation}
Formula~\eqref{correctr}~defines a correctness condition for
$R_\alpha$. Unfortunately, it is difficult to prove this without the
aid of a theorem prover. On the other hand, it is straightforward to
determine whether $R_\alpha$ is an overapproximation of the range of
$\alpha$ (that is, all values in the range of $\alpha$ are accepted by
$R_\alpha$) using an inductive argument that can be checked using an SMT
solver. To do so, we check the following formula:
\begin{align*}
\exists c_{1},c_{2} \in \collection, e \in \elem :\ & (\lnot
R_{\alpha}(\kw{empty}_{\Pi})) \ \lor \\
& (R_{\alpha}(c_{1}) \land R_{\alpha}(c_{2}) \land \lnot R_{\alpha}(\kw{combine}_{\Pi}(c_{1}, e, c_{2})))
\end{align*}
This formula, which can be directly analyzed by an SMT solver, checks
whether $R_{\alpha}$ is true for leaf-level trees (checking
$\kw{empty}$) and for non-leaf trees (using an inductive argument over
$\kw{combine}$). If the solver proves that the formula is
\emph{UNSAT}, then $R_{\alpha}$ overapproximates the range of
$\alpha$.

This check ensures that the unrolling algorithm is sound (we don't
`miss' any possible catamorphism values) but not that it is complete.
For example, the predicate $R_{\alpha}(c) = \kw{true}$ recognizes the
entire codomain, $\collection$, and leads to the incompletenesses
mentioned earlier for the $SizeI$ and $Height$ catamorphisms. In our
approach, it is the user's responsibility to provide an accurate range
recognizer predicate.

%This inductive procedure does not guarantee completeness, but ensures soundness of the results produced by Algorithm~\ref{alg:revised_dp}.  Since $R_{\alpha}$ is simply a range restriction, it is usually straightforward to determine.

% Suter et al. \cite{Suter2011SMR} claimed that Algorithm \ref{alg:sas_dp} is a decision procedure for all sufficiently surjective catamorphisms.
% We just pointed out the non-terminating issue in their procedure;
% hence,
% Algorithm \ref{alg:sas_dp} cannot be a decision procedure without dealing with the non-terminating issue.

% The goal of line 1 is to unify all equal trees
% \footnote{We could get rid of this step if we assume that the underlying solvers are able
% to take care of the unification automatically.}.

\subsubsection{Catamorphism Decision Procedure by Example}
\label{section:dp_by_example}
As an example of how the procedure in Algorithm \ref{alg:revised_dp} can be used, let us consider a guard application (such as those in~\cite{Hardin2012GLV}) that needs to determine whether an HTML message may be sent across a trusted to untrusted network boundary.
One aspect of this determination may involve checking whether the message contains a significant number of ``dirty words''; if so, it should be rejected.  Our goal is to ensure that this guard application works correctly.

We can check the correctness of this program by splitting the analysis into two parts.  A verification condition generator (VCG) generates a set of formulas to be proved about the program %to demonstrate its correctness,
and a back end solver attempts to discharge the formulas.
In the case of the guard application, these back end formulas involve tree terms representing the HTML message, a catamorphism representing the number of dirty words in the tree, and equalities and inequalities involving string constants and uninterpreted functions for determining whether a word is ``dirty''.

In our dirty-word example, the tree elements are strings and we can map a tree to the number of its dirty words by the following $\DW: \rdttype \rightarrow \kw{int}$ catamorphism:
\[
  \DW(t) =
    \begin{cases}
    0 &\text{if $t = \Leaf$}\\
    \DW(t_L) + \bigl({\ite}~\dirty(e)~1~0\bigr) + \DW(t_R)&\text{if $t = \Node(t_L, e, t_R)$}\\
    \end{cases}
\]
where $\elem$ is $\kw{string}$ and $\collection$ is $\kw{int}$. We use ${\ite}$ to denote an if-then-else statement.

For our guard example, suppose one of the verification conditions is:
\[t = \kw{Node}(t_{L}, e, t_{R}) \wedge \dirty(e) \wedge \DW(t) = 0\]
\noindent which is \emph{UNSAT}: since $t$ has at least one dirty word (i.e., value $e$), its number of dirty words cannot be 0. Fig. \ref{fig:dp_example_dw} shows how the procedure works in this case.

\begin{figure}[htb]
  %\centering
   \includegraphics[width=\textwidth]{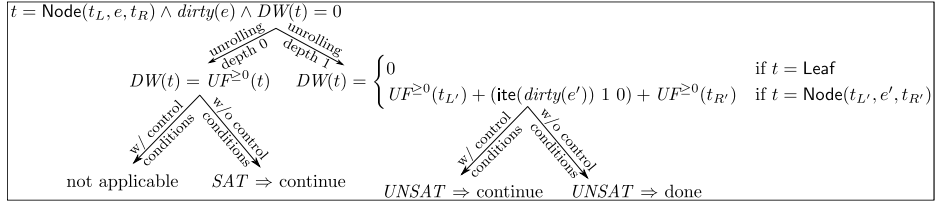}
   \caption{An example of how the decision procedure works}
  \label{fig:dp_example_dw}
\end{figure}

At unrolling depth 0, $\DW(t)$ is treated as an uninterpreted function $\UF^{\geq 0} : \tau \rightarrow \kw{int}$, which, given a tree, can return any value of type $\kw{int}$ (i.e., the codomain of $\DW$) bigger or equal to 0 (i.e., the range of $\DW$). The use of $\UF^{\geq 0}(t)$ implies that for the first step we do not use control conditions. The formula becomes
\[t = \kw{Node}(t_{L}, e, t_{R}) \wedge \dirty(e) \wedge \DW(t) = 0 \wedge \DW(t) = \UF^{\geq 0}(t)\]
and is \emph{SAT}. However, the \emph{SAT} result is untrustworthy due to the presence of $\UF^{\geq 0}(t)$; thus, we continue unrolling $\DW(t)$.

At unrolling depth 1, we allow $\DW(t)$ to be unrolled up to
depth 1 and all the catamorphism applications at lower depths will be
treated as instances of the uninterpreted function. In particular,
$\UF^{\geq 0}(t_{L'})$ and $\UF^{\geq 0}(t_{R'})$ are the
values of the uninterpreted function for $\DW(t_{L'})$ and
$\DW(t_{R'})$, respectively. The set of control conditions in
this case is $\{t = \Leaf\}$; if we use the set of control
conditions (i.e., all control conditions in the set hold), the values
of $\UF^{\geq 0}(t_{L'})$ and $\UF^{\geq 0}(t_{R'})$ will
not be used. Hence, in the case of using the control conditions, the
formula becomes:
\begin{align*}
t = \kw{Node}(t_{L}, e, t_{R}) \wedge \dirty(e) \wedge \DW(t) = 0 \wedge (t = \Leaf) \wedge \Bigl(\DW(t) = 0 \wedge t = \Leaf~\vee \\
\DW(t) = \UF^{\geq 0}(t_{L'}) + \bigl({\textsf{ite}}~\dirty(e')~1~0\bigr) + \UF^{\geq 0}(t_{R'}) \wedge t = \textsf{Node}(t_{L'}, e', t_{R'})\Bigr)
\end{align*}
\noindent which is equivalent to
\[t = \kw{Node}(t_{L}, e, t_{R}) \wedge \dirty(e) \wedge \DW(t) = 0 \wedge t = \Leaf\]
\noindent which is \emph{UNSAT} since $t$ cannot be $\Node$ and $\Leaf$ at the same time. Since we get \emph{UNSAT} with control conditions, we continue the process without using control conditions. Without control conditions, the formula becomes
\begin{align*}
t = \kw{Node}(t_{L}, e, t_{R}) \wedge \dirty(e) \wedge \DW(t) = 0 \wedge \Bigl(\DW(t) = 0 \wedge t = \Leaf~\vee \\
\DW(t) = \UF^{\geq 0}(t_{L'}) + \bigl({\textsf{ite}}~\dirty(e')~1~0\bigr) + \UF^{\geq 0}(t_{R'}) \wedge t = \textsf{Node}(t_{L'}, e', t_{R'})\Bigr)
\end{align*}
\noindent which, after eliminating the $\Leaf$ case (since $t$ must be a $\Node$) and unifying $\kw{Node}(t_{L}, e, t_{R})$ with $\kw{Node}(t_{L'}, e', t_{R'})$ (since they are equal to $t$), simplifies to
\begin{align*}
& t = \kw{Node}(t_{L}, e, t_{R}) \wedge \dirty(e) \wedge \DW(t) = 0~\wedge \\
& \DW(t) = \UF^{\geq 0}(t_{L'}) +
\bigl({\textsf{ite}}~\dirty(e)~1~0\bigr) + \UF^{\geq 0}(t_{R'})
\end{align*}
\noindent which, after evaluating the \kw{ite} expression, is equivalent to
\[t = \kw{Node}(t_{L}, e, t_{R}) \wedge \dirty(e) \wedge \DW(t) = 0 \wedge \DW(t) = \UF^{\geq 0}(t_{L'}) + 1 + \UF^{\geq 0}(t_{R'})\]
\noindent which is \emph{UNSAT} because $\UF^{\geq 0}(t_{L'}) + 1 + \UF^{\geq 0}(t_{R'}) > 0$.
Getting \emph{UNSAT} without
control conditions guarantees that the original formula is \emph{UNSAT}.

We have another example of the procedure in Example \ref{appendix:incremental_solving_with_rada} in Section \ref{section:rada}.

%%  LocalWords:  Suter et al unrollStep UNSAT proc FN t's ite SizeI
%%  LocalWords:  correctr overapproximation overapproximates VCG DW
%%  LocalWords:  UF

\subsection{Correctness of the Unrolling Decision Procedure}
\label{section:revsed_dp_proof}

We now prove the correctness of the unrolling decision procedure in
Algorithm~\ref{alg:revised_dp}.
First, let us define the notion of the cardinality of the inverse function of catamorphisms.

\begin{definition}[Function $\beta$]
\label{definition:beta}
Given a catamorphism $\alpha : \rdttype \rightarrow \collection$, we define $\beta: \rdttype \rightarrow \naturalnums \cup \{\infty \}$ as the cardinality of the inverse function of $\alpha(t)$:
\[\beta(t) = |\alpha^{-1}\bigl(\alpha(t)\bigr)|\]
\end{definition}

\begin{example}[Function $\beta$]
Intuitively, $\beta(t)$ is the number of distinct trees that can map to $\alpha(t)$ via catamorphism $\alpha$.
The value of $\beta(t)$ clearly depends on $\alpha$.
For example, for the \emph{Set} catamorphism, $\beta_{\emph{Set}}(\Leaf) = 1$; also, $\forall t \in \rdttype, t \neq \Leaf: \beta_{\emph{Set}}(t) = \infty$
since there are an infinite number of trees that have the same set of element values.
For the \DW catamorphism in Section \ref{section:dp_by_example}, we have $\forall t\in \rdttype: \beta_{\DW}(t) = \infty$.

\begin{table}[htb]
\caption{Examples of $\beta(t)$ with the \emph{Multiset} catamorphism}
%\centering
\footnotesize
\begin{tabular}{cccccc}
\toprule
$t$&~~\includegraphics[scale=0.15]{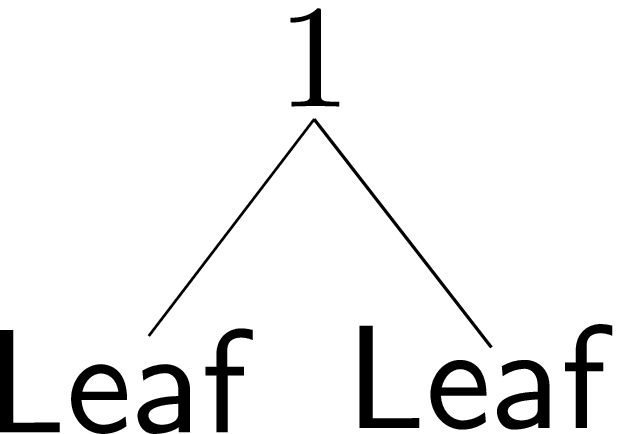}~~&~~\includegraphics[scale=0.15]{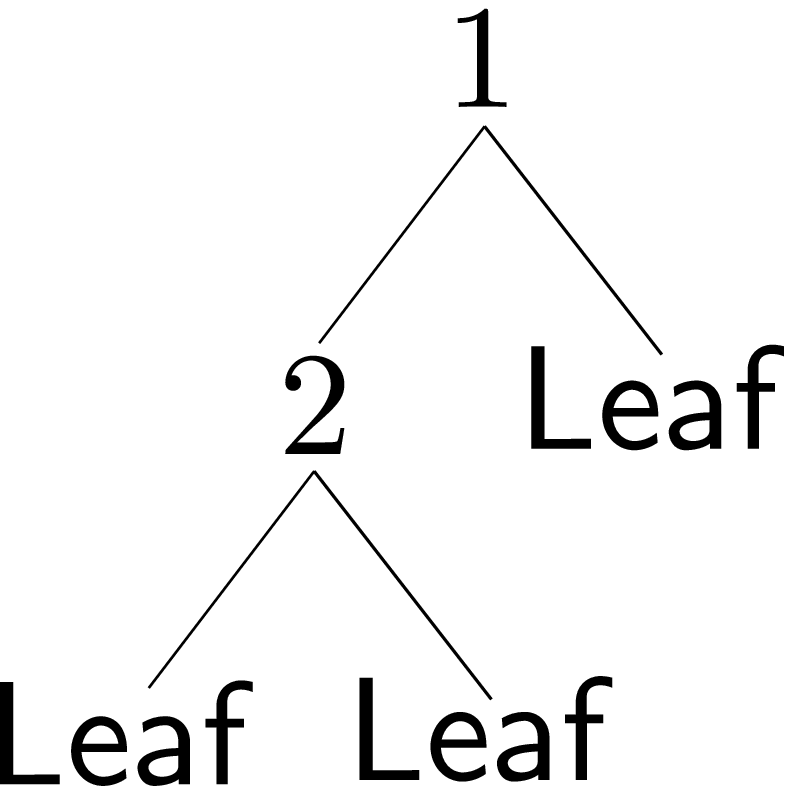}~~&~~\includegraphics[scale=0.15]{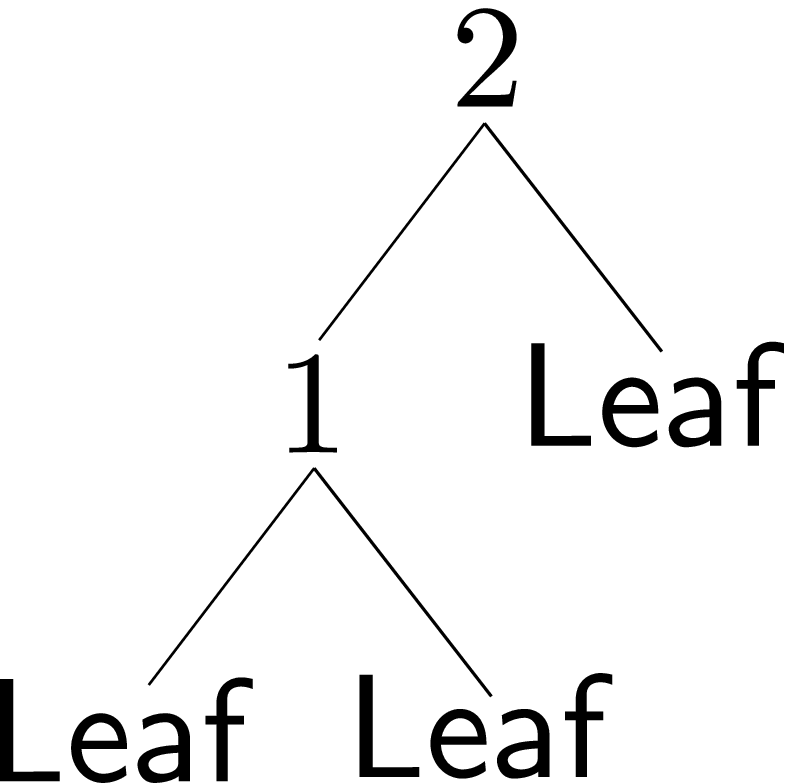}~~&~~\includegraphics[scale=0.15]{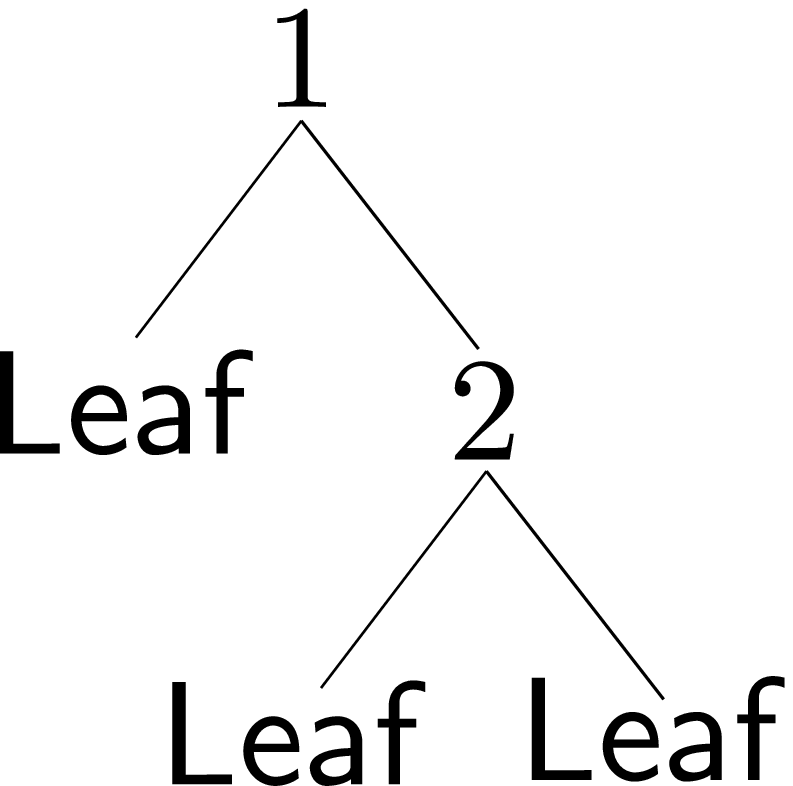}~~&~~\includegraphics[scale=0.15]{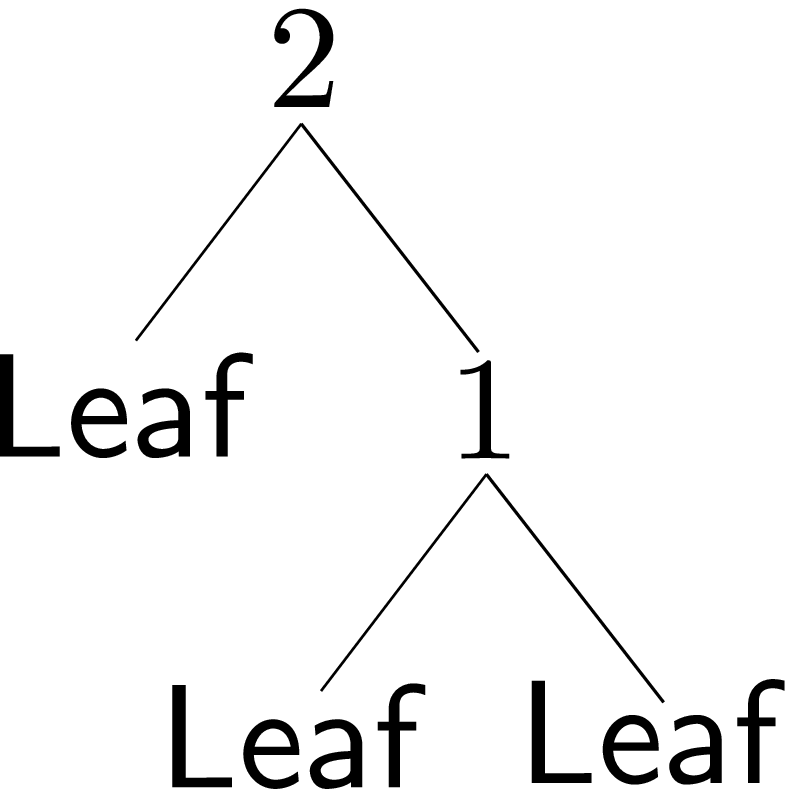}~~\\
\midrule
$\alpha(t)$ & $\{1\}$ & $\{1,2\}$ & $\{1,2\}$& $\{1,2\}$ & $\{1,2\}$\\
$\beta(t)$ & $1$ & $4$ & $4$& $4$ & $4$\\
\bottomrule
\end{tabular}
\label{table:example_beta}
\end{table}

Table \ref{table:example_beta} shows some examples of $\beta(t)$ with the \emph{Multiset} catamorphism.
The only tree that can map to $\{1\}$ by catamorphism \emph{Multiset} is $\Node(\Leaf, 1, \Leaf)$.
The last four trees are all the trees that can map to the multiset $\{1, 2\}$.
\exampleEndMark
\end{example}

Throughout this section we assume $\alpha : \rdttype \to \collection$
is a catamorphism defined by \kw{empty} and \kw{combine} with $\beta$
as defined in Definition~\ref{definition:beta}. We will prove that our
decision procedure is complete if $\alpha$ satisfies the {\em
  generalized sufficient surjectivity} condition defined in
Definition~\ref{definition:generalized_sufficient_surjectivity}.

\begin{definition}[Generalized sufficient surjectivity]
\label{definition:generalized_sufficient_surjectivity}
Given a catamorphism $\alpha$ and the corresponding function $\beta$
from Definition~\ref{definition:beta}, we say that $\alpha$ is a
generalized sufficiently surjective (GSS) catamorphism if it satisfies
the following condition. For every number $p \in \naturalnums$,
there exists some height $h_p \in \naturalnums$, computable as a
function of $p$, such that for every tree $t$ with $\height(t) \geq h_p$
we have $\beta(t) > p$.
\end{definition}

\begin{corollary}
\label{corollary:ss_is_gss}
Sufficiently surjective catamorphisms are GSS.
\end{corollary}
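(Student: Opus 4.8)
The plan is to exploit the one structural fact that sufficient surjectivity hands us for free: the shape set $S_p$ is \emph{finite}, hence its members have bounded height, and taking $h_p$ just above that bound forces every sufficiently tall tree to land in the ``large inverse image'' disjunct of Definition~\ref{definition:sufficient_surjectivity}. Concretely, I fix $p \in \naturalnums$. The case $p = 0$ is trivial: any tree $t$ satisfies $t \in \alpha^{-1}(\alpha(t))$, so $\beta(t) \geq 1 > 0$ and $h_0 = 0$ works. For $p \geq 1$ I invoke sufficient surjectivity to obtain the finite shape set $S_p$ and the formula $M_p$, both computable from $p$, with the property that $M_p(c)$ implies $|\alpha^{-1}(c)| > p$ and that, for every tree $t$, either $M_p(\alpha(t))$ holds or $\shape(t) \in S_p$.

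Next I would extend $\height$ to shapes in the obvious way ($\height(\SLeaf) = 0$ and $\height(\SNode(s_L, s_R)) = 1 + \max(\height(s_L), \height(s_R))$), noting that $\height(\shape(t)) = \height(t)$ for every tree $t$. Since $S_p$ is a finite set of shapes and each shape is a finite full binary tree (Property~\ref{property:type_of_tree}) of finite height, the maximum of their heights exists; I set $h_p = 1 + \max(\{0\} \cup \{\height(s) : s \in S_p\})$, where the $\{0\}$ guards against $S_p = \emptyset$. This $h_p$ is computable as a function of $p$ because $S_p$ is, which meets the computability clause of GSS.

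Then for any tree $t$ with $\height(t) \geq h_p$, I have $\height(\shape(t)) = \height(t) \geq h_p > \height(s)$ for every $s \in S_p$; in particular $\shape(t)$ differs in height from every element of $S_p$, so $\shape(t) \notin S_p$. The sufficient surjectivity disjunction then forces $M_p(\alpha(t))$ to hold, and the defining property of $M_p$ gives $|\alpha^{-1}(\alpha(t))| > p$, i.e.\ $\beta(t) > p$ by Definition~\ref{definition:beta}. This is exactly the conclusion required by Definition~\ref{definition:generalized_sufficient_surjectivity}, so $\alpha$ is GSS.

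The only step carrying real content is the inference ``$S_p$ finite $\Rightarrow$ the shapes in $S_p$ have bounded height,'' which is what lets a single numerical threshold $h_p$ replace the set-membership test on shapes; everything after it is bookkeeping driven by $\height(\shape(t)) = \height(t)$. I would also be mildly careful about the mismatch in quantifier ranges — GSS quantifies $p$ over $\naturalnums$ whereas sufficient surjectivity is stated for $p \in \naturalnums^+$ — which is precisely why the $p = 0$ case is dispatched separately before appealing to the sufficient surjectivity hypothesis.
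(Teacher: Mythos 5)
Your proposal is correct and takes essentially the same route as the paper: both set $h_p$ to one more than the maximum height occurring in the finite shape set $S_p$, so that any taller tree must fall into the $M_p$ disjunct and hence have $\beta(t) > p$. Your version merely adds some bookkeeping the paper leaves implicit (the $p=0$ and $S_p=\emptyset$ edge cases, and the identity $\height(\shape(t))=\height(t)$), which is harmless but not a different argument.
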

\begin{proof}
Let $\alpha$ be a sufficiently surjective catamorphism. By
Definition~\ref{definition:sufficient_surjectivity}, there exists a
finite set of shapes $S_p$ such that for every tree $t$, $\shape(t) \in
S_p$ or $\beta(t) > p$. Taking $h_p = 1 + \max\{ \height(t) ~|~ t
\in S_p\}$ ensures that for all trees $t$, if $\height(t) \geq h_p$
then $\beta(t) > p$, as needed. \qed
\end{proof}

We claim that our unrolling-based decision procedure with GSS
catamorphisms is (1) sound for proofs: if the procedure returns UNSAT,
then there are no models, (2) sound for models: every model returned
by the procedure makes the formula true, (3) terminating for
satisfiable formulas, and (4) terminating for unsatisfiable formulas.
We do not present the proofs for the first three properties, which can
be adapted with minor changes from similar proofs in
\cite{Suter2011SMR}. %(see Section 3.1 in \cite{Suter2011SMR}).
Rather, we focus on proving that Algorithm \ref{alg:revised_dp} is
terminating for unsatisfiable formulas.

In order to reason about our unrolling-based decision procedure we
define a related mathematical notion of unrolling called $n$-level
approximation. We show that for large enough values of $n$ the
$n$-level approximation of $\alpha$ can be used in place of $\alpha$
while preserving key satisfiability constraints. Finally, we show that
our unrolling-based decision procedure correctly uses uninterpreted
functions to model $n$-level approximations of $\alpha$.

\begin{definition}[$n$-level approximation]
\label{definition:n-level-approximation}
We say that $\alpha_0$ is a 0-level approximation of $\alpha$ iff
$\forall t\in\rdttype : \alpha_0(t) \in \range(\alpha)$. For $n > 0$ we
say $\alpha_n$ is an $n$-level approximation of $\alpha$ iff
\begin{align*}
\alpha_n(\Leaf) &= \kw{empty} \\
\alpha_n(\Node(t_L, e, t_R)) &= \kw{combine}(\alpha_{n-1}(t_L), e, \alpha_{n-1}(t_R))
\end{align*}
where $\alpha_{n-1}$ is an $(n-1)$-level approximation of $\alpha$.
\end{definition}

\begin{lemma}
\label{lemma:weaken-approximation}
If $\alpha_n$ is an $n$-level approximation of $\alpha$ then it is also
$m$-level approximation for all $0 \leq m < n$.
\end{lemma}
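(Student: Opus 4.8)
The plan is to prove the statement by reducing it to a single-step downgrade and then iterating. Concretely, it suffices to show that for every $n \geq 1$, any $n$-level approximation $\alpha_n$ is also an $(n-1)$-level approximation; the full claim then follows by applying this $n-m$ times to descend from level $n$ to any level $m$ with $0 \leq m < n$. I would prove the single-step fact by induction on $n$.

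For the base case $n = 1$, I must show that a $1$-level approximation $\alpha_1$ is a $0$-level approximation, i.e., that $\alpha_1(t) \in \range(\alpha)$ for every tree $t$. This is the only place where real content appears, and it rests on the observation that $\range(\alpha)$ is closed under $\kw{combine}$. Indeed, when $t = \Leaf$ we have $\alpha_1(\Leaf) = \kw{empty} = \alpha(\Leaf)$, which lies in the range; when $t = \Node(t_L, e, t_R)$ we have $\alpha_1(\Node(t_L, e, t_R)) = \kw{combine}(\alpha_0(t_L), e, \alpha_0(t_R))$ for some $0$-level approximation $\alpha_0$, and since $\alpha_0(t_L), \alpha_0(t_R) \in \range(\alpha)$ there are trees $u, v$ with $\alpha(u) = \alpha_0(t_L)$ and $\alpha(v) = \alpha_0(t_R)$, whence $\kw{combine}(\alpha_0(t_L), e, \alpha_0(t_R)) = \kw{combine}(\alpha(u), e, \alpha(v)) = \alpha(\Node(u, e, v)) \in \range(\alpha)$ by the defining equation of the catamorphism.

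For the inductive step, I assume the single-step fact at level $n$ (every $n$-level approximation is an $(n-1)$-level approximation) and prove it at level $n+1$. Let $\alpha_{n+1}$ be an $(n+1)$-level approximation, so by definition there is an $n$-level approximation $\alpha_n$ with $\alpha_{n+1}(\Leaf) = \kw{empty}$ and $\alpha_{n+1}(\Node(t_L, e, t_R)) = \kw{combine}(\alpha_n(t_L), e, \alpha_n(t_R))$. By the induction hypothesis $\alpha_n$ is also an $(n-1)$-level approximation, so taking $\alpha_n$ itself as the witness exhibits $\alpha_{n+1}$ as an $n$-level approximation. This step is essentially syntactic: the only work is using the hypothesis to downgrade the inner witness by one level.

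I expect the base case to be the only genuine obstacle, and its crux is recognizing the closure of $\range(\alpha)$ under $\kw{combine}$, which follows immediately from the identity $\kw{combine}(\alpha(u), e, \alpha(v)) = \alpha(\Node(u, e, v))$ built into the definition of a catamorphism. Everything else is bookkeeping over the recursive definition of $n$-level approximations.
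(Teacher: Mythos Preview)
Your proposal is correct and follows essentially the same approach as the paper: first establish the single-step downgrade (every $n$-level approximation is an $(n{-}1)$-level approximation) by induction on $n$, then iterate $n-m$ times. You simply spell out the base case $n=1$ in detail (via closure of $\range(\alpha)$ under $\kw{combine}$) where the paper leaves it implicit.
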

\begin{proof}
When $n = 0$ the result is vacuously true. For $n > 0$, induction on
$n$ shows that $\alpha_n$ is an $(n-1)$-level approximation
of $\alpha$. The result then follows by induction on $n - m$.
\qed
\end{proof}

Intuitively, an $n$-level approximation $\alpha_n$ of $\alpha$ always
returns values in the range of $\alpha$. In particular, $\alpha_n$
agrees with $\alpha$ on short trees, and on taller trees $\alpha_n$
provides values that $\alpha$ also provides for taller trees. These
intuitions are formalized in the next series of lemmas.

\begin{lemma}
\label{lemma:approximation-range}
Let $\alpha_n$ be an $n$-level approximation of $\alpha$. Then
$\range(\alpha_n) \subseteq \range(\alpha)$. Equivalently, for every $t$
there exists $t'$ such that $\alpha_n(t) = \alpha(t')$.
\end{lemma}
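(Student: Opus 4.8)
The plan is to prove the inclusion $\range(\alpha_n) \subseteq \range(\alpha)$ by induction on $n$, using the reformulation given in the ``equivalently'' clause: it suffices to show that for every tree $t$ there is a tree $t'$ with $\alpha_n(t) = \alpha(t')$. The precise statement to carry through the induction is the universally quantified one, namely that for every $n \in \naturalnums$ and every $n$-level approximation $\alpha_n$ of $\alpha$, each value in $\range(\alpha_n)$ already lies in $\range(\alpha)$.

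For the base case $n = 0$, the inclusion is immediate from Definition~\ref{definition:n-level-approximation}: a $0$-level approximation is \emph{defined} by the requirement $\alpha_0(t) \in \range(\alpha)$ for all $t \in \rdttype$, so there is nothing to reconstruct.

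For the inductive step ($n > 0$), I would fix an $n$-level approximation $\alpha_n$ and an arbitrary value $v = \alpha_n(t) \in \range(\alpha_n)$, then argue by cases on the shape of $t$. If $t = \Leaf$, then $\alpha_n(t) = \kw{empty} = \alpha(\Leaf)$, so $t' = \Leaf$ witnesses $v \in \range(\alpha)$. If $t = \Node(t_L, e, t_R)$, then by definition $\alpha_n(t) = \kw{combine}(\alpha_{n-1}(t_L), e, \alpha_{n-1}(t_R))$, where $\alpha_{n-1}$ is the $(n-1)$-level approximation supplied by the definition of $\alpha_n$. Applying the induction hypothesis to $\alpha_{n-1}$, both $\alpha_{n-1}(t_L)$ and $\alpha_{n-1}(t_R)$ lie in $\range(\alpha)$, so there are trees $t_L'$ and $t_R'$ with $\alpha(t_L') = \alpha_{n-1}(t_L)$ and $\alpha(t_R') = \alpha_{n-1}(t_R)$. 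Setting $t' = \Node(t_L', e, t_R')$ and unfolding the catamorphism yields $\alpha(t') = \kw{combine}(\alpha(t_L'), e, \alpha(t_R')) = \kw{combine}(\alpha_{n-1}(t_L), e, \alpha_{n-1}(t_R)) = \alpha_n(t) = v$, which closes the step.

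The only real subtlety, and the step I expect to need the most care, is the reassembly in the $\Node$ case: the induction hypothesis produces witnesses only for the children's approximation values, and one must recombine those witnesses with the \emph{same} element $e$ into a single tree whose $\alpha$-image equals $\alpha_n(t)$. This is sound precisely because $\alpha$ and every $n$-level approximation share the same $\kw{empty}$ and $\kw{combine}$ at the top level, so substituting the $\range$-witnesses for the children leaves the combined value unchanged. No structural facts about trees beyond the recursive definition of $\alpha$ and of $n$-level approximations are required.
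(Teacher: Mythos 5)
Your proposal is correct and is precisely the argument the paper intends: the paper's proof is simply ``Straightforward induction on $n$,'' and your write-up fills in exactly that induction (base case from the definition of a $0$-level approximation, and the $\Node$ case recombining the children's witnesses with the same element $e$ via $\kw{combine}$).
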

\begin{proof}
Straightforward induction on $n$.
\qed
\end{proof}

\begin{lemma}
\label{lemma:exact-approximation}
Let $\alpha_n$ be an $n$-level approximation of $\alpha$. If $\height(t)
< n$ then $\alpha_n(t) = \alpha(t)$.
\end{lemma}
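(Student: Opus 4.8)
The plan is to prove the statement by induction on $n$. The base case $n = 0$ is vacuous, since no tree $t$ satisfies $\height(t) < 0$ (heights lie in $\naturalnums$). For the inductive step I would take as inductive hypothesis that the claim holds for $n-1$, phrased so that it quantifies over \emph{every} $(n-1)$-level approximation: for any $(n-1)$-level approximation $\alpha_{n-1}$ of $\alpha$ and any tree $t'$ with $\height(t') < n-1$, we have $\alpha_{n-1}(t') = \alpha(t')$.

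Given an $n$-level approximation $\alpha_n$ and a tree $t$ with $\height(t) < n$, I would split on the structure of $t$: by Property~\ref{property:type_of_tree} every tree is a full binary tree, so $t$ is either a $\Leaf$ or a $\Node$. If $t = \Leaf$, then $\alpha_n(\Leaf) = \kw{empty} = \alpha(\Leaf)$ directly from Definition~\ref{definition:n-level-approximation} and the defining equations of $\alpha$. If $t = \Node(t_L, e, t_R)$, then the standard height recurrence $\height(t) = 1 + \max\{\height(t_L), \height(t_R)\}$ together with $\height(t) < n$ yields $\height(t_L) < n-1$ and $\height(t_R) < n-1$.

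By Definition~\ref{definition:n-level-approximation} there is an $(n-1)$-level approximation $\alpha_{n-1}$ witnessing $\alpha_n(\Node(t_L, e, t_R)) = \kw{combine}(\alpha_{n-1}(t_L), e, \alpha_{n-1}(t_R))$. Applying the inductive hypothesis to this $\alpha_{n-1}$ at the subtrees $t_L$ and $t_R$ gives $\alpha_{n-1}(t_L) = \alpha(t_L)$ and $\alpha_{n-1}(t_R) = \alpha(t_R)$, so the right-hand side equals $\kw{combine}(\alpha(t_L), e, \alpha(t_R)) = \alpha(\Node(t_L, e, t_R)) = \alpha(t)$, completing the induction.

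The argument is entirely routine; the only point requiring care is formulating the inductive hypothesis to range over \emph{all} $(n-1)$-level approximations rather than a single fixed one, since the approximation $\alpha_{n-1}$ appearing in the $\kw{combine}$ equation for $\alpha_n$ is not chosen by us but supplied by the definition. A secondary, minor subtlety is handling the vacuous base case $n = 0$ and checking that the height arithmetic $1 + \max\{\height(t_L), \height(t_R)\} < n$ correctly produces the strict bounds $\height(t_L), \height(t_R) < n-1$ needed to invoke the hypothesis on both children.
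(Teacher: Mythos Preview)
Your proposal is correct and follows exactly the approach the paper indicates: the paper's proof reads simply ``Straightforward induction on $n$,'' and you have spelled out precisely those details, including the vacuous base case and the case split on the structure of $t$. Your care in quantifying the inductive hypothesis over all $(n-1)$-level approximations is appropriate and matches what the definition requires.
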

\begin{proof}
Straightforward induction on $n$.
\qed
\end{proof}

\begin{lemma}
\label{lemma:approximation-replace-argument}
Let $\alpha_n$ be an $n$-level approximation of $\alpha$. If
$\height(t) \geq n$ then there exists $t'$
such that $\alpha_n(t) = \alpha(t')$ and $\height(t') \geq n$.
\end{lemma}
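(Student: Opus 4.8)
The plan is to prove the statement by induction on $n$, strengthening the existential witness supplied by Lemma~\ref{lemma:approximation-range} so that the chosen $t'$ additionally meets the height bound. For the base case $n = 0$ the hypothesis $\height(t) \geq 0$ holds for every $t$, and by the definition of a $0$-level approximation we have $\alpha_0(t) \in \range(\alpha)$, so there is some $t'$ with $\alpha(t') = \alpha_0(t)$; since $\height(t') \geq 0$ automatically, this witness already satisfies both requirements.

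For the inductive step I would assume the claim for every $(n-1)$-level approximation and then consider an $n$-level approximation $\alpha_n$ together with a tree $t$ satisfying $\height(t) \geq n$. Because $n \geq 1$, the tree $t$ cannot be a $\Leaf$, so I may write $t = \Node(t_L, e, t_R)$, whence $\alpha_n(t) = \kw{combine}(\alpha_{n-1}(t_L), e, \alpha_{n-1}(t_R))$ by Definition~\ref{definition:n-level-approximation}, where $\alpha_{n-1}$ is the underlying $(n-1)$-level approximation. The key observation is that $\height(t) = 1 + \max\{\height(t_L), \height(t_R)\} \geq n$ forces the taller child to have height at least $n-1$; assume without loss of generality that this is $t_L$. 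Applying the induction hypothesis to the $(n-1)$-level approximation $\alpha_{n-1}$ and the tree $t_L$ then yields a tree $t_L'$ with $\alpha(t_L') = \alpha_{n-1}(t_L)$ and $\height(t_L') \geq n-1$. For the other child I need only a range witness, so Lemma~\ref{lemma:approximation-range} applied to $\alpha_{n-1}$ and $t_R$ supplies some $t_R'$ with $\alpha(t_R') = \alpha_{n-1}(t_R)$ and no height constraint at all.

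Setting $t' = \Node(t_L', e, t_R')$ finishes the argument: unfolding the catamorphism gives $\alpha(t') = \kw{combine}(\alpha(t_L'), e, \alpha(t_R')) = \kw{combine}(\alpha_{n-1}(t_L), e, \alpha_{n-1}(t_R)) = \alpha_n(t)$, and the height bound follows from $\height(t') = 1 + \max\{\height(t_L'), \height(t_R')\} \geq 1 + \height(t_L') \geq n$. The only delicate point, and thus the main obstacle, is recognizing that the height must be threaded through just the \emph{taller} subtree, where the full induction hypothesis is available, while the shorter subtree can be replaced using the weaker range-only guarantee of Lemma~\ref{lemma:approximation-range}. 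This asymmetry is exactly what lets the single, one-sided induction hypothesis suffice and keeps the case analysis from branching on both children.
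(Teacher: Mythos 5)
Your proof is correct and follows essentially the same route as the paper's: induction on $n$, with the induction hypothesis applied to the taller child and Lemma~\ref{lemma:approximation-range} supplying a range-only witness for the other child, then reassembling with $\Node$ and appealing to associativity of the catamorphism's definition via $\kw{combine}$. No differences worth noting.
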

\begin{proof}
Induction on $n$. When $n = 0$ the result follows from the definition
of $\alpha_0$ and $\range(\alpha)$. In the inductive step let $t$ be a
tree such that $\height(t) \geq n$. Since $n > 0$ we have $t =
\Node(t_L, e, t_R)$ for some $t_L$, $e$, and $t_R$. By the definition
of $\alpha_n$ we have $\alpha_n(t) = \kw{combine}(\alpha_{n-1}(t_L),
e, \alpha_{n-1}(t_R))$ where $\alpha_{n-1}$ is an $(n-1)$-level
approximation of $\alpha$. By the definition of height we have
$\height(t_L) \geq n - 1$ or $\height(t_R) \geq n - 1$. Without loss of
generality, let us assume $\height(t_L) \geq n - 1$ (the argument for
$t_R$ is symmetric). Then by the inductive hypothesis there exists
$t_L'$ with $\alpha_{n-1}(t_L) = \alpha(t_L')$ and $\height(t_L') \geq
n - 1$. By Lemma~\ref{lemma:approximation-range} there exists $t_R'$
with $\alpha_{n-1}(t_R) = \alpha(t_R')$. Letting $t' = \Node(t_L', e,
t_R')$ we have
\begin{equation*}
\alpha_n(t) = \kw{combine}(\alpha_{n-1}(t_L), e, \alpha_{n-1}(t_R))
= \kw{combine}(\alpha(t_L'), e, \alpha(t_R'))
= \alpha(t')
\end{equation*}
Also, since $\height(t_L') \geq n - 1$ we have $\height(t') \geq n$.
\qed
\end{proof}

\begin{definition}
\label{definition:nice-form}
We say a formula $\phi$ in the parametric logic is in standard form if
it has the form $\phi_t \land \phi_{ce}$ where $\phi_t$ is a
conjunction of disequalities between distinct tree variables and
$\phi_{ce}$ is a formula in the $\collection\elem$ theory where
$\alpha$ is applied only to tree variables.
\end{definition}

In the following lemma we write $\phi[\alpha_n/\alpha]$ to denote the
formula $\phi$ with all occurrences of $\alpha$ replaced with $\alpha_n$.

\begin{lemma}
\label{lemma:sat-approx-implies-sat-nice-form}
Let $\phi$ be a formula in the standard form and let $\alpha$ be a GSS
catamorphism. Then there exists an $n$ such that if
$\phi[\alpha_n/\alpha]$ is satisfiable for some $n$-level
approximation $\alpha_n$ of $\alpha$ then $\phi$ is satisfiable.
\end{lemma}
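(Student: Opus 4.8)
The plan is to choose $n$ large enough that any tree whose height is at least $n$ has $\beta$-value strictly larger than the number of tree variables appearing in $\phi$. Concretely, let $k$ be the number of distinct tree variables in $\phi$ and use the GSS condition (Definition~\ref{definition:generalized_sufficient_surjectivity}) to obtain a height $h_k$ such that $\height(t) \geq h_k$ implies $\beta(t) > k$; then set $n = h_k$. The intuition is that a satisfying assignment for $\phi[\alpha_n/\alpha]$ gives each tree variable a value, and wherever $\alpha_n$ is evaluated on a tall tree it returns a value $\alpha$ also produces on some genuine tree (by Lemmas~\ref{lemma:approximation-range} and~\ref{lemma:approximation-replace-argument}); on short trees $\alpha_n$ already agrees with $\alpha$ exactly (Lemma~\ref{lemma:exact-approximation}). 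The difficulty is that simply replacing each tall tree by a preimage tree may collapse two variables that $\phi_t$ requires to be distinct, so the preimage trees must be chosen to respect the disequalities.

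First I would take a model of $\phi[\alpha_n/\alpha]$, which assigns to each tree variable $t_i$ a concrete tree, and partition the variables into those assigned short trees (height $< n$) and those assigned tall trees (height $\geq n$). For the short ones, Lemma~\ref{lemma:exact-approximation} gives $\alpha_n(t_i) = \alpha(t_i)$, so I keep those assignments unchanged. For the tall ones, I want to replace each $t_i$ by some $t_i'$ with $\alpha(t_i') = \alpha_n(t_i)$ and $\height(t_i') \geq n$, which Lemma~\ref{lemma:approximation-replace-argument} supplies; the $\collection\elem$ part $\phi_{ce}$ is then preserved because $\alpha$ applied to the new trees yields exactly the collection/element values the model used. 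The real work is satisfying $\phi_t$: I need the replacement trees, together with the unchanged short trees, to remain pairwise distinct as demanded by the disequalities.

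The main obstacle, and where the choice $n = h_k$ pays off, is the distinctness argument. Since each tall variable has $\beta(t_i') > k$, the value $\alpha_n(t_i)$ has more than $k$ distinct preimages of height $\geq n$; a counting/greedy argument then lets me pick the preimages one variable at a time, always avoiding the at most $k-1$ trees already chosen, so that all chosen trees are mutually distinct. Short-tree assignments cannot accidentally coincide with a tall preimage because the latter have height $\geq n$ while the former have height $< n$. Hence every disequality in $\phi_t$ is respected. I expect establishing this simultaneous distinct-preimage selection — ensuring the large inverse-image cardinality is enough to dodge all previously committed trees — to be the technically delicate step, and it is precisely the role of the GSS condition; the remaining verification that $\phi_{ce}$ still holds under the substitution is routine given the preservation of all $\alpha$-values.
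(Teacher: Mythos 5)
Your proposal is correct and follows essentially the same route as the paper's proof: choose $n = h_p$ via the GSS condition, keep short-tree assignments (using Lemma~\ref{lemma:exact-approximation}), replace tall-tree assignments by genuine preimages (using Lemmas~\ref{lemma:approximation-range} and~\ref{lemma:approximation-replace-argument}), and use the cardinality bound $\beta > p$ to greedily dodge the finitely many forbidden values imposed by $\phi_t$. The only difference is cosmetic: you parameterize by the number of tree variables $k$ where the paper uses the number of disequalities $p$; both give valid bounds on the forbidden set, and your side remark that tall preimages all have height $\geq n$ is not quite right (Lemma~\ref{lemma:approximation-replace-argument} guarantees one such preimage, not that all preimages are tall) but is also not needed, since the greedy count already suffices.
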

\begin{proof}
Let $\phi$ have the form $\phi_t \land \phi_{ce}$ from
Definition~\ref{definition:nice-form}. Let $p$ be the number of
disequalities in $\phi_t$. By
Definition~\ref{definition:generalized_sufficient_surjectivity} there
exists a height $h_p$ such that for any tree $t$ of height greater
than or equal to $h_p$ we have $\beta(t) > p$. Let $n = h_p$.

Let $\alpha_n$ be an $n$-level approximation of $\alpha$ such that
$\phi[\alpha_n/\alpha]$ is satisfiable. Let $\model$ be a model of
$\phi[\alpha_n/\alpha]$. We will construct $\model'$, a modified
version of $\model$ with different values for the tree variables, such
that $\model'$ satisfies $\phi$. In particular, for each tree variable
$x$ in $\phi$ we will construct $\model'$ such that
$\alpha_n(\model(x)) = \alpha(\model'(x))$ and such that $\model'$ is
a model for the disequalities in $\phi_t$. This will ensure that
$\model'$ is a model of $\phi_{ce}$ since $\alpha$ is only applied to
tree variables in $\phi_{ce}$, and thus $\model'$ will be a model for
$\phi$. We construct $\model'$ by considering each tree variable in
turn.

Let $x$ be a tree variable in $\phi$. Let $T = \model(x)$ be the
concrete value of $x$ in the model $\model$. If $\height(T) <
n$ then $\alpha_n(T) = \alpha(T)$ by
Lemma~\ref{lemma:exact-approximation}. In that case we can take
$\model'(x) = \model(x)$. Otherwise, $\height(T) \geq n$ and
by Lemma~\ref{lemma:approximation-replace-argument} there exists a
$T'$ such that $\height(T') \geq n$ and $\alpha_n(T) =
\alpha(T')$. By
Definition~\ref{definition:generalized_sufficient_surjectivity} we
have $\beta(T') > p$. That is, we have more than $p$ distinct trees
$T''$ such that $\alpha(T'') = \alpha(T') = \alpha_n(T)$. Since we
have $p$ disequalities in $\phi_t$ there are at most $p$ forbidden
values for $T''$ in order to satisfy $\phi_t$. Thus we can always make
a selection for $x$ in $\model'$ such that the $p$ disequalities are
still satisfied. \qed
\end{proof}

\begin{lemma}
\label{lemma:translate-nice-form}
A formula $\phi$ in the parametric logic can be translated to an
equisatisfiable formula $\phi_1 \lor \cdots \lor \phi_k$ such that
each $\phi_i$ is in standard form.
\end{lemma}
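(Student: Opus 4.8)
The plan is to reduce an arbitrary parametric-logic formula to a finite disjunction of standard-form formulas (Definition~\ref{definition:nice-form}) by a sequence of satisfiability-preserving rewrites, organized in three phases: Boolean normalization, flattening, and elimination of structural tree constraints. For \emph{Phase~1}, I would first eliminate $\Rightarrow$ and $\Leftrightarrow$ via the usual equivalences, push negations inward to negation normal form, and rewrite $\neg(T_1 = T_2)$ as $T_1 \neq T_2$ and $\neg(T_1 \neq T_2)$ as $T_1 = T_2$; negated collection and element atoms may simply be retained, since $\phi_{ce}$ is permitted to be an \emph{arbitrary} formula of the $\collection\elem$ theory. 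Distributing $\wedge$ over $\vee$ then yields an equivalent disjunction $\bigvee_i \psi_i$ in which the tree literals of each $\psi_i$ form a conjunction of tree equalities and disequalities, so it suffices to drive each $\psi_i$ into standard form, possibly splitting it further.

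In \emph{Phase~2} I would purify every compound tree term inside a fixed $\psi_i$ by introducing fresh tree variables together with definitional equalities, so that each constructor or selector occurs only in a flat literal such as $u = \Leaf$, $u = \Node(v_1, w, v_2)$, or $u = \kw{left}(v)$, with $u, v, v_1, v_2$ tree variables and $w$ an element variable. Each selector literal is then removed by case-splitting its argument on $\Leaf$ versus $\Node$: in the node case the selector is replaced by the appropriate child or element following the semantics of Fig.~\ref{fig:semantics}, and in the leaf case its value is unconstrained and hence replaced by a fresh variable. Every such split replaces a conjunct by an equivalent two-way disjunction, preserving the disjunctive normal form.

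\emph{Phase~3} does the substantive work of discharging all tree equalities and all remaining constructor and selector occurrences so that only disequalities between distinct tree variables survive. I would saturate each disjunct under the classical free-algebra rules for the datatype together with the catamorphism identities: (i) eliminate each equality $u = v$ by merging the two variables; (ii) substitute each definition $u = \Leaf$ or $u = \Node(v_1, w, v_2)$ and then rewrite $\alpha(\Leaf) \to \kw{empty}$ and $\alpha(\Node(v_1, w, v_2)) \to \combine(\alpha(v_1), w, \alpha(v_2))$ using the definition of the catamorphism, which pushes the structural content into the $\collection\elem$ part with $\alpha$ applied only to the fresh child variables; (iii) apply decomposition for $\Node(\ldots) = \Node(\ldots)$ and the clash rules ($\Node(\ldots) = \Leaf$ is unsatisfiable and falsifies its disjunct, while $\Node(\ldots) \neq \Leaf$ is a tautology and is discarded); and (iv) reduce any disequality still mentioning a constructor, for example $\Node(v_1, w, v_2) \neq s$, by case-splitting $s$ on $\Leaf$ (the literal becomes true and is dropped) versus $s = \Node(s_1, w', s_2)$ (the literal becomes $v_1 \neq s_1 \lor w \neq w' \lor v_2 \neq s_2$, whose tree parts are disequalities between variables). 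A self-disequality $u \neq u$ falsifies and removes its disjunct. When saturation halts, each tree literal is a disequality between two distinct variables and each surviving $\alpha$-application is on a variable, i.e.\ the disjunct is in standard form.

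For correctness I would verify that every rewrite is satisfiability-preserving: the Boolean steps and the decomposition and clash rules are standard equivalences for term algebras with constructors and selectors; the catamorphism rewrites are genuine equalities because $\alpha$ is the actual fold; the case-splits enumerate exhaustive, mutually exclusive alternatives; and introducing fresh variables preserves satisfiability under the existential reading. I expect the main obstacle to be \emph{termination} of Phase~3, because the disequality case-splits themselves introduce fresh constructor terms (the $s = \Node(s_1, w', s_2)$ branch). I would therefore exhibit a well-founded measure --- for instance the multiset of sizes of the constructor terms appearing in disequalities --- that strictly decreases under each decomposition and split, using the fact that the freshly introduced children appear only under $\alpha$ and in variable-to-variable disequalities and so never provoke further splitting. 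Establishing this measure, and thereby bounding the number of generated disjuncts, is the crux; the equisatisfiability claims are then routine.
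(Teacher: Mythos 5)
Your three-phase plan is, in substance, the same translation the paper uses: conversion to DNF, elimination of selectors via fresh variables, elimination of tree equalities by unification (your variable merging), decomposition of tree disequalities down to disequalities between distinct variables, and partial evaluation of $\alpha$ on $\Leaf$/$\Node$ terms. The only cosmetic differences are that you case-split selector arguments on $\Leaf$ versus $\Node$ (the paper simply asserts the argument is a $\Node$, consistent with its semantics, which leaves selectors undefined on $\Leaf$) and that you run the last three steps as a single saturation rather than in sequence. Neither difference affects correctness.

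The one genuine gap is exactly where you place the crux: termination of the disequality reduction. The invariant you use to justify your measure --- that the fresh children introduced by a split ``appear only under $\alpha$ and in variable-to-variable disequalities and so never provoke further splitting'' --- is false. Consider $\phi_t = \bigl(a \neq \Node(d, f, g)\bigr) \land \bigl(d \neq \Node(a, b, c)\bigr)$ with $a, d, g, c$ tree variables. Splitting the first disequality substitutes $a := \Node(\bar{a}_1, \bar{a}_2, \bar{a}_3)$ throughout, so the second becomes $d \neq \Node(\Node(\bar{a}_1, \bar{a}_2, \bar{a}_3), b, c)$; in the branch that keeps the residue $\bar{a}_1 \neq d$, splitting the second then substitutes $d := \Node(\bar{d}_1, \bar{d}_2, \bar{d}_3)$, which turns $\bar{a}_1 \neq d$ back into $\bar{a}_1 \neq \Node(\bar{d}_1, \bar{d}_2, \bar{d}_3)$ while producing $\bar{d}_1 \neq \Node(\bar{a}_1, \bar{a}_2, \bar{a}_3)$ --- a configuration isomorphic to the one you started from. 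So the multiset of constructor-term sizes in disequalities does not decrease along this branch, and the saturation as described can loop whenever two variables occur in the right-hand sides of each other's disequalities. (The paper's own proof dismisses termination with a one-line remark that the ``leading disequality'' shrinks, which is open to the same objection, so you are in good company; a complete argument needs either a strategy that breaks such cycles --- for instance, decomposing $\Node(\ldots) \neq \Node(\ldots)$ without any substitution and instantiating a variable only when doing so cannot re-enlarge another pending disequality --- or a more carefully constructed well-founded measure than the one you propose.)
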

\begin{proof}
We prove this lemma by providing a series of translation steps from
$\phi$ to a disjunction of formulas in standard form. Each step of the
translation will produce an equisatisfiable formula which is closer
to standard form.

To simplify the presentation of the translation steps, we write
$\phi[e]$ to indicate that $e$ is an expression that appears in
$\phi$, and we then write $\phi[e']$ to denote $\phi$ with all
occurrences of $e$ replaced by $e'$.

Many of these translation steps will introduce new variables. We will
write such variables as $\bar{x}$, with a line over them, to emphasizes
they they are fresh variables.

\paragraph{Step 1 (DNF)} Convert $\phi$ to disjunctive normal form. It then
suffices to show that each conjunctive clause can be converted into a
disjunction of standard form formulas.

\paragraph{Step 2 (Eliminate Selectors)} Given a conjunctive clause
$\phi$ we eliminate all selectors $\kw{left}(t)$, $\kw{elem}(t)$, and
$\kw{right}(t)$ by repeatedly applying the following conversions.
\begin{align*}
\phi[\kw{left}(t)] &~~\leadsto~~ t = \Node(\bar{t}_L, \bar{e}, \bar{t}_R)
\land \phi[\bar{t}_L] \\[3pt]
\phi[\kw{elem}(t)] &~~\leadsto~~ t = \Node(\bar{t}_L, \bar{e}, \bar{t}_R)
\land \phi[\bar{e}] \\[3pt]
\phi[\kw{right}(t)] &~~\leadsto~~ t = \Node(\bar{t}_L, \bar{e}, \bar{t}_R)
\land \phi[\bar{t}_R]
\end{align*}
This results in a conjunctive clause with no selectors.

\paragraph{Step 3 (Tree Unification)} Given a conjunctive clause $\phi$
with no tree selectors, we now eliminate all equalities between tree
terms. Such tree equalities can only appear as top level conjuncts in
the clause. Let $\phi = \phi_{eq} \land \phi'$ where $\phi_{eq}$
contains all of the tree equalities. We eliminate the equalities by
doing first-order term unification with the modification that
constraints between terms in the element theory are left unsolved. If
unification fails then we can replace $\phi$ with $\bot$ since the
clause is unsatisfiable. If unification succeeds it returns a most
general unifier $\sigma$ and a conjunction of element theory
equalities $E$. Then this step produces $\phi'\sigma \land E$ which is
a conjunctive clause with no tree selectors or tree equalities.

\paragraph{Step 4 (Reduce Disequalities)} Given a conjunctive clause
with no tree selectors or tree equalities, we now reduce all tree term
disequalities to disequalities between distinct tree variables. We do
this by repeatedly applying the following transformations. In these
translations $t_v$ stands for a tree variable. To save space, we treat
$t_1 \neq t_2$ and $t_2 \neq t_1$ as equivalent for these
translations.
\begin{align*}
\Leaf \neq \Leaf \land \phi &~~\leadsto~~ \bot \\[3pt]
\Leaf \neq \Node(t_L, e, t_R) \land \phi &~~\leadsto~~ \phi \\[3pt]
\Node(t_L, e, t_R) \neq \Node(t_L', e', t_R') \land \phi &~~\leadsto~~ ((t_L
\neq t_L') \land \phi) \lor~ \\
&\phantom{~~\leadsto~~} ((e \neq e') \land \phi) \lor~ \\
&\phantom{~~\leadsto~~} ((t_R \neq t_R') \land \phi) \\[3pt]
t_v \neq \Leaf \land \phi[t_v] &~~\leadsto~~ \phi[\Node(\bar{t}_L, \bar{e},
  \bar{t}_R)] \\[3pt]
t_v \neq \Node(t_L', e', t_R') \land \phi[t_v] &~~\leadsto~~ \phi[\Leaf]
\lor~ \\
&\phantom{~~\leadsto~~} (\bar{t}_L \neq t_L' \land
\phi[\Node(\bar{t}_L, \bar{e}, \bar{t}_R)]) \lor~ \\
&\phantom{~~\leadsto~~} (\bar{e} \neq e' \land \phi[\Node(\bar{t}_L, \bar{e},
  \bar{t}_R)]) \lor~ \\
&\phantom{~~\leadsto~~} (\bar{t}_R \neq t_R' \land \phi[\Node(\bar{t}_L, \bar{e},
  \bar{t}_R)]) \\
t_v \neq t_v \land \phi &~~\leadsto~~ \bot
\end{align*}
The termination of these transformations is obvious since the term
size of the leading disequality is always getting smaller. Note that
some transformations may produce a disjunction of conjunctive clauses.
This is not a problem since the initial conjunctive clause that we
focus on is already part of a top-level disjunction.

After these transformations we will have a disjunction of conjunctive
clauses where each clause has no tree selectors or tree equalities,
and all tree disequalities are between distinct tree variables.

\paragraph{Step 5 (Partial Evaluation of $\alpha$)} Given a conjunctive
clause $\phi$ where each clause has no tree selectors or tree
equalities, and all tree disequalities are between distinct tree
variables, we now partially evaluate $\alpha$. We do this by
repeatedly applying the following transformations.
\begin{align*}
\phi[\alpha(\Leaf)] &~~\leadsto~~ \phi[\kw{empty}] \\[3pt]
\phi[\alpha(\Node(t_L, e, t_R))] &~~\leadsto~~
\phi[\kw{combine}(\alpha(t_L), e, \alpha(t_R))]
\end{align*}
After these transformations we will have a conjunctive clause where
there are no tree selectors or tree equalities, all tree disequalities
are between distinct tree variables, and $\alpha$ is applied only to
tree variables. Thus the clause is in standard form. Therefore the
original formula will be transformed into a disjunction of standard
form clauses. \qed
\end{proof}

\begin{lemma}
\label{lemma:sat-approx-implies-sat}
Given a formula $\phi$ in the parametric logic and a GSS catamorphism
$\alpha$, there exists an $n$ such that if $\phi[\alpha_n/\alpha]$ is
satisfiable for some $n$-level approximation $\alpha_n$ of $\alpha$
then $\phi$ is satisfiable.
\end{lemma}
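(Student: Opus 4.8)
The plan is to reduce this statement to the standard-form case already settled in Lemma~\ref{lemma:sat-approx-implies-sat-nice-form}, using the normalization of Lemma~\ref{lemma:translate-nice-form} as the bridge. First I would apply Lemma~\ref{lemma:translate-nice-form} to write $\phi$ as an equisatisfiable disjunction $\phi_1 \lor \cdots \lor \phi_k$ in which every $\phi_i$ is in standard form. Then, for each $i$, Lemma~\ref{lemma:sat-approx-implies-sat-nice-form} supplies a level $n_i$ such that satisfiability of $\phi_i[\alpha_{n_i}/\alpha]$ under some $n_i$-level approximation entails satisfiability of $\phi_i$. I would take $n = \max_i n_i$ (adjusted below) and claim this $n$ witnesses the lemma: if $\phi[\alpha_n/\alpha]$ is satisfiable, I want to propagate that to one standard-form disjunct $\phi_i[\alpha_{n_i}/\alpha]$, conclude $\phi_i$ is satisfiable, and hence $\phi$ is satisfiable.

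The key to the propagation is that most of the translation is insensitive to how $\alpha$ is interpreted. Given a model $\model$ of $\phi[\alpha_n/\alpha]$, I would track what Lemma~\ref{lemma:translate-nice-form} does to it. Steps~1--4 of that translation (DNF, selector elimination, tree unification, and disequality reduction) never inspect $\alpha$: they manipulate only tree constructors, selectors, and (dis)equalities, so they carry $\model$ to a model of some clause $\psi_i[\alpha_n/\alpha]$, namely the clause reached just before partial evaluation. Once I reach a disjunct $\phi_i$ in standard form, where $\alpha$ is applied only to tree variables, Lemma~\ref{lemma:weaken-approximation} lets me regard $\alpha_n$ as an $n_i$-level approximation (since $n \geq n_i$), at which point the per-variable model surgery of Lemma~\ref{lemma:sat-approx-implies-sat-nice-form} applies directly, using that for each tree variable the abstraction value $\alpha_n(\model(x))$ lies in $\range(\alpha)$ and is realized by a tree $t'$ with $\beta(t') > p$ whenever $x$ must be moved.

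The step I expect to be the main obstacle is reconciling the substitution $[\alpha_n/\alpha]$ with the partial-evaluation step (Step~5) of the translation, the one step that uses the defining recursion of $\alpha$. For the true catamorphism, rewriting $\alpha(\Node(t_L,e,t_R))$ to $\kw{combine}(\alpha(t_L),e,\alpha(t_R))$ is sound, but for a finite approximation it is not, because $\alpha_n(\Node(t_L,e,t_R))$ equals $\kw{combine}(\alpha_{n-1}(t_L),e,\alpha_{n-1}(t_R))$ and so each such rewrite \emph{decrements} the approximation level; a variable appearing at two different depths in the $\alpha$-terms of $\psi_i$ can thus receive two different residual levels, which breaks a naive reading of the substitution (already visible for \emph{Height}, where $\alpha_{n-1}$ and $\alpha_n$ need not agree on tall trees). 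I would handle this by bounding the number of rewrites by the maximal depth $d$ of the tree terms to which $\alpha$ is applied (bounded by the size of $\phi$), taking $n = d + \max_i n_i$, and ensuring the approximation is pushed to the variables only after normalization so that every residual level remains at least $\max_i n_i \geq n_i$; Lemma~\ref{lemma:weaken-approximation} then lets me treat all of them uniformly. The delicate point I would have to verify carefully is that, after this unfolding, no cross-variable coupling is reintroduced---it is precisely the standard-form restriction ($\alpha$ applied only to variables) that makes the variable-by-variable construction of Lemma~\ref{lemma:sat-approx-implies-sat-nice-form} independent, so I must confirm that the mixed-level approximations produced by Step~5 collapse to a single consistent value per variable before that construction is invoked.
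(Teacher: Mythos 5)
Your proposal follows exactly the paper's route: translate $\phi$ to an equisatisfiable disjunction of standard-form clauses via Lemma~\ref{lemma:translate-nice-form}, obtain $n_i$ for each clause from Lemma~\ref{lemma:sat-approx-implies-sat-nice-form}, set $n=\max\{n_1,\ldots,n_k\}$, and invoke Lemma~\ref{lemma:weaken-approximation} so that a single $n$ works for every clause. The paper's proof consists of precisely these four steps and nothing more; in particular, its final ``thus'' passes from satisfiability of $\phi[\alpha_n/\alpha]$ to satisfiability of some $\phi_i[\alpha_n/\alpha]$ without comment, i.e., it tacitly assumes that the translation of Lemma~\ref{lemma:translate-nice-form} commutes with the substitution $[\alpha_n/\alpha]$. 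The obstacle you isolate in your last paragraph is therefore not something you failed to do relative to the paper --- it is a genuine subtlety the paper's proof does not acknowledge: Step~5 rewrites $\alpha(\Node(t_L,e,t_R))$ to $\kw{combine}(\alpha(t_L),e,\alpha(t_R))$ using the defining recursion of $\alpha$, which an $n$-level approximation satisfies only with the level dropping to $n-1$ on the right-hand side, so the rewritten disjuncts are not literally of the form $\phi_i[\alpha_m/\alpha]$ for a single uniform approximation. Your remedy of inflating $n$ by the maximal nesting depth $d$ of tree terms under $\alpha$ correctly handles the level depletion, and the residual point you flag --- a variable occurring at two different unfolding depths receiving two possibly different values $\alpha_{n-d_1}(\model(x))\neq\alpha_{n-d_2}(\model(x))$ --- is exactly where the remaining work lies: Lemma~\ref{lemma:exact-approximation} forces agreement only when $\height(\model(x))<n-d$, and neither your sketch nor the paper's proof disposes of the tall-tree case. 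In short: same decomposition and same key lemmas as the paper, with your version being more candid about the one step both arguments leave thin.
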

\begin{proof}
By Lemma~\ref{lemma:translate-nice-form} we can translate $\phi$ to
the equisatisfiable formula $\phi_1 \lor \cdots \lor \phi_k$ where
each $\phi_i$ is in standard form. By
Lemma~\ref{lemma:sat-approx-implies-sat-nice-form}, for each $\phi_i$
we have an $n_i$ such that if $\phi_i[\alpha_{n_i}/\alpha]$ is
satisfiable for some $n_i$-level approximation $\alpha_{n_i}$ of
$\alpha$ then $\phi_i$ is satisfiable. Let $n = \max\{n_1, \ldots,
n_k\}$. By Lemma~\ref{lemma:weaken-approximation} we have for each
$\phi_i$ that if $\phi_i[\alpha_{n}/\alpha]$ is satisfiable for some
$n$-level approximation $\alpha_{n}$ of $\alpha$ then $\phi_i$ is
satisfiable. Thus if $\phi[\alpha_{n}/\alpha]$ is satisfiable for some
$n$-level approximation $\alpha_{n}$ of $\alpha$ then $\phi$ is
satisfiable. \qed
\end{proof}

\begin{theorem}
\label{theorem:unsat-implies-unsat-approx}
Given a formula $\phi$ in the parametric logic and a $GSS$
catamorphism $\alpha$, there exists an $n$ such that if $\phi$ is
unsatisfiable then $\phi[\alpha_n/\alpha]$ is unsatisfiable for all
$n$-level approximations $\alpha_n$ of $\alpha$.
\end{theorem}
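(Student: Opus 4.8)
The plan is to observe that this theorem is precisely the logical contrapositive of Lemma~\ref{lemma:sat-approx-implies-sat}, so essentially all of the technical work has already been carried out. First I would fix $n$ to be exactly the value whose existence is guaranteed by Lemma~\ref{lemma:sat-approx-implies-sat} for the given formula $\phi$ and catamorphism $\alpha$. For this $n$, that lemma states that if there exists some $n$-level approximation $\alpha_n$ of $\alpha$ with $\phi[\alpha_n/\alpha]$ satisfiable, then $\phi$ is satisfiable.

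Next I would take the contrapositive of the inner implication. The hypothesis ``there exists an $n$-level approximation $\alpha_n$ with $\phi[\alpha_n/\alpha]$ satisfiable'' negates to ``for every $n$-level approximation $\alpha_n$, $\phi[\alpha_n/\alpha]$ is unsatisfiable.'' Hence, whenever $\phi$ is unsatisfiable, no $n$-level approximation can make $\phi[\alpha_n/\alpha]$ satisfiable, which is exactly the claim that $\phi[\alpha_n/\alpha]$ is unsatisfiable for all $n$-level approximations $\alpha_n$. Since this holds for the specific $n$ we fixed above, the outer existential in the theorem statement is discharged directly.

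I expect no genuine obstacle at this level, because the real content lives in the supporting lemmas rather than in this wrapper. In particular, the substantive argument is the model-surgery step of Lemma~\ref{lemma:sat-approx-implies-sat-nice-form}, which invokes the GSS condition (Definition~\ref{definition:generalized_sufficient_surjectivity}) together with the bound on the number of tree disequalities in order to replace each value $\model(x)$ by a GSS-guaranteed alternative that preserves the catamorphism value while still satisfying all disequalities; the standard-form reduction of Lemma~\ref{lemma:translate-nice-form}; and the monotonicity property of Lemma~\ref{lemma:weaken-approximation}, which lets a single uniform $n$ work across all disjuncts. The only point requiring care here is the quantifier bookkeeping: the ``there exists $n$'' stays outside the implication and is inherited verbatim, the ``for some approximation'' in the lemma flips to ``for all approximations'' under negation, and one must confirm that the $n$ transferred from Lemma~\ref{lemma:sat-approx-implies-sat} is chosen independently of the particular approximation $\alpha_n$, so that the universal quantification over $\alpha_n$ in the conclusion is legitimate.
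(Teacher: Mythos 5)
Your proposal is correct and matches the paper's proof exactly: the paper also dispatches this theorem in one line as the contrapositive of Lemma~\ref{lemma:sat-approx-implies-sat}, with the same quantifier bookkeeping you describe (the $n$ is fixed before the implication, so negating ``some approximation makes it satisfiable'' yields ``all approximations make it unsatisfiable''). No further commentary is needed.
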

\begin{proof}
This is the contrapositive of
Lemma~\ref{lemma:sat-approx-implies-sat}.
\qed
\end{proof}

\begin{theorem}
Given an unsatisfiable formula $\phi$ in the parametric logic, a $GSS$
catamorphism $\alpha$, and a correct range predicate $R_\alpha$,
Algorithm~\ref{alg:revised_dp} is terminating.
\end{theorem}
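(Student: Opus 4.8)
The plan is to reduce termination on unsatisfiable inputs to Theorem~\ref{theorem:unsat-implies-unsat-approx} by showing that the satisfiability test performed at line~\ref{reviseddpcode:decide_without_control_conditions} of Algorithm~\ref{alg:revised_dp} is, at unrolling depth $k$, exactly a test for the existence of a $k$-level approximation of $\alpha$ satisfying $\phi$. Concretely, I would track the state $(\phi_k, B_k, R_k)$ produced after $k$ invocations of $\unrollStep$ and establish the following correspondence as the central lemma: \emph{for every $k \in \naturalnums$, the formula $\phi_k \land \bigwedge_{r \in R_k} r$ is satisfiable if and only if there is a $k$-level approximation $\alpha_k$ of $\alpha$, in the sense of Definition~\ref{definition:n-level-approximation}, such that $\phi[\alpha_k/\alpha]$ is satisfiable.}

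To prove the forward direction I would take a model $\model$ of $\phi_k \land \bigwedge_{r \in R_k} r$ and read off an approximation. The frontier terms $\unrolledtrees$ produced after $k$ unrollings are the tree terms at sub-depth $k$ below the arguments of $\alpha$; the restrictions in $R_k$ force $\model(\uf(s)) \in \range(\alpha)$ for each frontier term $s$. Because $\uf$ is interpreted as a genuine function, equal frontier arguments receive equal values, so I can define a total $0$-level approximation $\alpha_0$ agreeing with $\model \circ \uf$ on those frontier values and chosen arbitrarily but within $\range(\alpha)$ elsewhere, using the correctness of $R_\alpha$ (Equation~\eqref{correctr}) to guarantee such values exist. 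Lifting $\alpha_0$ through the recursion of Definition~\ref{definition:n-level-approximation} yields $\alpha_k$, and the $\kw{ite}$/$\kw{combine}$ equalities added by Algorithm~\ref{alg:unroll_step} are precisely the equations of that recursion. A straightforward induction on sub-depth then gives $\model(\uf(x)) = \alpha_k(\model(x))$ for every argument term $x$ of $\alpha$ (with Lemma~\ref{lemma:exact-approximation} covering the case where $\model(x)$ is shorter than the current depth), so $\model$ also witnesses $\phi[\alpha_k/\alpha]$. The backward direction is symmetric: given $\alpha_k$ and a model $\model$ of $\phi[\alpha_k/\alpha]$, I would assign $\uf(s) := \alpha_{k-d}(\model(s))$ to each unrolled term $s$ of sub-depth $d$, which satisfies the added equalities by construction and satisfies $R_k$ because $\alpha_0$ takes values in $\range(\alpha)$ and by Lemma~\ref{lemma:approximation-range}.

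With the correspondence in hand, termination follows quickly. Since $\phi$ is unsatisfiable and the procedure is sound for models, the under-approximating test whose success is line~\ref{reviseddpcode:sat_case} ($\phi_k \land \bigwedge_{b \in B_k} b$) can never return \emph{SAT}; hence every loop iteration falls through to line~\ref{reviseddpcode:decide_without_control_conditions} and either returns \emph{UNSAT} (line~\ref{reviseddpcode:unsat_case}) or performs exactly one more $\unrollStep$. By Theorem~\ref{theorem:unsat-implies-unsat-approx} there is an $n$ such that $\phi[\alpha_n/\alpha]$ is unsatisfiable for \emph{all} $n$-level approximations $\alpha_n$. Applying the correspondence lemma at $k = n$, the test $\phi_n \land \bigwedge_{r \in R_n} r$ is then \emph{UNSAT}, so the algorithm returns \emph{UNSAT} no later than unrolling depth $n$. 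Because each non-returning iteration increases the depth by exactly one, the loop executes at most $n+1$ times and the procedure terminates.

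I expect the correspondence lemma — matching indices so that the depth-$k$ unrolled equations implement the total function $\alpha_k$ of Definition~\ref{definition:n-level-approximation} — to be the main obstacle. Two delicate points stand out: (i) extending the finitely many constrained frontier values $\model(\uf(s))$ to a \emph{total} $0$-level approximation on all trees, which relies essentially on $R_\alpha$ being a correct recognizer of $\range(\alpha)$ (Equation~\eqref{correctr}) and on $\uf$ being a function so that $\alpha_0$ is well defined on the image of the frontier; and (ii) verifying that the guarded $\kw{ite}$ handling of $\Leaf$ at interior frontier nodes is consistent with the clause $\alpha_n(\Leaf) = \kw{empty}$, so that the induction $\model(\uf(x)) = \alpha_k(\model(x))$ goes through uniformly whether or not a given argument tree is shorter than the current unrolling depth.
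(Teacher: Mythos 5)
Your proposal is correct and follows essentially the same route as the paper: both invoke Theorem~\ref{theorem:unsat-implies-unsat-approx} to obtain the bound $n$ and then show that any model of the $n$-times-unrolled formula with range restrictions could be converted --- by extending the frontier values of $\uf$, which are constrained to $\range(\alpha)$ by the correctness of $R_\alpha$, into a total $n$-level approximation agreeing with $\uf$ on all applied values --- into a model of $\phi[\alpha_n/\alpha]$, a contradiction. The only differences are presentational: you package this as a two-way correspondence lemma, of which only the forward direction (unrolled-SAT implies approximation-SAT) is actually needed for termination, and you spell out the induction on sub-depth that the paper's construction of $\model'$ leaves implicit.
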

\begin{proof}
Let $\phi$ be an unsatisfiable formula. By
Theorem~\ref{theorem:unsat-implies-unsat-approx}, there exists an $n$
such that $\phi[\alpha_n/\alpha]$ is unsatisfiable for all $n$-level
approximations $\alpha_n$ of $\alpha$.

In Algorithm~\ref{alg:revised_dp}, $\alpha$ is initially replaced by
an uninterpreted function $\uf$ which is unrolled during the
algorithm. Consider the $n^\text{th}$ unrolling together with the
range restrictions and call the resulting formula $\phi_n =
\phi[\uf/\alpha] \land C_n$. Note that $C_0$ is the initial range
constraints on $\uf$ without any unrolling. We will show that $\phi_n$
is unsatisfiable and thus the algorithm will terminate with UNSAT
within the first $n$ unrollings.

Suppose, towards contradiction, that $\phi_n$ is satisfiable. Let
$\model$ be a model of $\phi_n$. It is not necessarily true that
$\model(\uf)$ is an $n$-level approximation of $\alpha$ since it may,
for example, return any value for inputs to which $\uf$ is not applied
in $\phi_n$. However, for the values to which $\uf$ is applied in
$\phi[\uf/\alpha]$ it acts like an $n$-level approximation of $\alpha$
due to the constraints imposed by $C_n$ and by the correctness of
$R_\alpha$. Thus, we can construct a new model $\model'$ which differs
from $\model$ only in the value of $\uf$ and such that: (1)
$\model(\uf)$ and $\model'(\uf)$ agree on all values to which $\uf$ is
applied in $\phi[\uf/\alpha]$ and (2) $\model'(\uf)$ is an $n$-level
approximation of $\alpha$.

By construction, $\model'$ satisfies $\phi[\uf/\alpha]$. Therefore
$\model'$ satisfies $\phi[\model'(\uf)/\alpha]$ which contradicts the
fact that $\phi[\alpha_n/\alpha]$ is unsatisfiable for all $n$-level
approximations $\alpha_n$ of $\alpha$. Thus $\phi_n$ must be
unsatisfiable. \qed
\end{proof}

This proof implies that Algorithm~\ref{alg:revised_dp} terminates for
unsatisfiable formulas after a bounded number of unrollings based on
$\phi$ and $\alpha$. If we compute this bound explicitly, then we can
terminate the algorithm early with SAT after the computed number of
unrollings. However, if we are interested in complete tree models (in which all variables are assigned concrete values), we
still need to keep unrolling until we reach line
\ref{reviseddpcode:sat_case} in Algorithm \ref{alg:revised_dp}.

The bound on the number of unrollings needed to check unsatisfiability
depends on two factors. First, the structure of $\phi$ gives rise to a
number of tree variable disequalities in the conversion of
Lemma~\ref{lemma:translate-nice-form}. Second, the unrolling bound
depends on the relationship between $p$ and $h_p$ in
Definition~\ref{definition:generalized_sufficient_surjectivity}. In
Section \ref{section:monotonic_cata}, we show that the unrolling bound
is linear (Theorem \ref{theorem:monotonic_generalized_sufficient_surjectivity}) in the number of disequalities for a class of catamorphisms
called \emph{monotonic} catamorphisms. Later, in
Section~\ref{section:assoc_cata}, we show that this bound can be made
logarithmic (Lemma \ref{lemma:minbeta_catalan}) in the number of disequalities for a special, but common,
form of catamorphisms called \emph{associative} catamorphisms.

In practice, computing the exact bound on the number of unrollings is
impractical. The conversion process described in
Lemma~\ref{lemma:translate-nice-form} is focused on correctness rather
than efficiency. Instead, it is much more efficient to do only the
unrolling of $\alpha$ and leave all other formula manipulation to an
underlying SMT solver. Even from this perspective, the bounds we
establish on the number of unrollings are still useful to explain why
the procedure is so efficient is practice.

%%  LocalWords:  catamorphisms catamorphism computable iff lll TODO
%%  LocalWords:  disequalities subformulas equisatisfiable DNF UNSAT
%%  LocalWords:  unsatisfiable contrapositive surjective monotonicity
%%  LocalWords:  subtrees disjunction equalities eq unifier th DW GSS
%%  LocalWords:  disequality unrollings unsatisfiability Multiset ce

%%% Local Variables:
%%% mode: latex
%%% TeX-master: "unrolling_dec_proc_journal_main.tex"
%%% End:
%%  LocalWords:  cccccc multiset

\section{Monotonic Catamorphisms}
\label{section:monotonic_cata}
We now propose {\em monotonic} catamorphisms and prove that Algorithm \ref{alg:revised_dp} is complete for this class by showing that monotonic catamorphisms satisfy the GSS condition.  We also show that this class is a subset of sufficiently surjective catamorphisms, but general enough to include all catamorphisms described in~\cite{Suter2010DPA,Suter2011SMR} and all those that we have run into in industrial experience.  Monotonic catamorphisms admit a termination argument in terms of the number of unrollings, which is an open problem in \cite{Suter2011SMR}.  Moreover, a subclass of monotonic catamorphisms, {\em associative} catamorphisms can be combined while preserving completeness of the formula, addressing another open problem in~\cite{Suter2011SMR}.

\subsection{Monotonic Catamorphisms}

A catamorphism $\alpha$ is \emph{monotonic} if for every ``high
enough'' tree $t \in \rdttype$, either $\beta(t) = \infty$ or there
exists a tree $t_0 \in \rdttype$ such that $t_0$ is smaller than $t$
and $\beta(t_0) < \beta(t)$. Intuitively, this condition ensures that
the more number of unrollings we have, the more candidates SMT solvers
can assign to tree variables to satisfy all the constraints involving
catamorphisms. Eventually, the number of tree candidates will be large
enough to satisfy all the constraints involving tree equalities and
disequalities among tree terms, leading to the completeness of the
procedure.

\begin{definition}[Monotonic catamorphisms]
\label{definition:mono_cata}
A catamorphism $\alpha: \rdttype \rightarrow \collection$ is monotonic iff
there exists a constant $\halpha \in \naturalnums$ such that:
\begin{align*}
\forall t \in \rdttype: \height(t) \geq \halpha \Rightarrow \bigl ( & \beta(t) = \infty~\vee\\
& \exists t_0 \in \rdttype: \height(t_0) = \height(t) - 1 \wedge \beta(t_0) < \beta(t)\bigr )
\end{align*}
\end{definition}

Note that if $\alpha$ is monotonic with $\halpha$, it is also
monotonic with any $\halpha' \geq h_\alpha$.
In our decision procedure, we assume that if $\alpha$ is monotonic, the range of $\alpha$ can be expressed precisely as a predicate $R_\alpha$.

\begin{example}[Monotonic catamorphisms]
Catamorphism $\DW$ in Section \ref{section:dp_by_example} is monotonic with $\halpha = 1$ and $\emph{Multiset}$ is monotonic with $\halpha = 2$.
An example of a non-monotonic catamorphism is \Mirror~in \cite{Suter2010DPA}:
\[
  \Mirror(t) =
    \begin{cases}
    \Leaf  &\text{if $t = \Leaf$}\\
    \Node\bigl(\Mirror(t_R), e, \Mirror(t_L)\bigr) &\text{if $t = \Node(t_L, e, t_R)$}\\
    \end{cases}
\]
Because $\forall t \in \rdttype: \beta_{\Mirror}(t) = 1$, the catamorphism is not monotonic. We will discuss in detail some examples of monotonic catamorphisms in Section \ref{subsection:examples_mono_catas}.
\exampleEndMark
\end{example}

\subsection{Some Properties of Monotonic Catamorphisms}
\label{subsection:predicate_satisfying_suff_surj}

\begin{definition}[$\minbeta$]
\label{definition:minbeta}
$\minbeta(h)$ is the minimum of $\beta(t)$ of all trees $t$ of height $h$:
\[ \minbeta(h) = \min\{\beta(t)~|~t \in \rdttype, \height(t) = h\}\]
\end{definition}

\begin{corollary}
\label{corollary:positive_min_beta}
$\minbeta(h)$ is always greater or equal to 1.
\end{corollary}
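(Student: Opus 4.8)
The plan is to reduce the statement about the minimum to the corresponding statement about a single tree, and to observe that $\beta(t) \geq 1$ holds for \emph{every} tree $t$ simply because $\beta$ counts the size of a set that always contains at least one witness. The whole argument rests on the reflexivity of equality.

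First I would unwind Definition~\ref{definition:beta}: for any tree $t \in \rdttype$ we have $\beta(t) = |\alpha^{-1}\bigl(\alpha(t)\bigr)|$. Since $\alpha(t) = \alpha(t)$ trivially holds, the tree $t$ itself belongs to the preimage $\alpha^{-1}\bigl(\alpha(t)\bigr)$. Hence that set is non-empty, so $|\alpha^{-1}\bigl(\alpha(t)\bigr)| \geq 1$; that is, $\beta(t) \geq 1$ for every $t \in \rdttype$.

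Next I would confirm that $\minbeta(h)$ is taken over a non-empty set, so that the minimum is well-defined and the bound transfers. For each $h \in \naturalnums$ there is at least one tree of height $h$ in the parametric logic --- for instance, the right-leaning tree obtained by nesting $h$ applications of $\Node$ over leaves --- so the set $\{t \in \rdttype \mid \height(t) = h\}$ is non-empty. Taking the minimum of a non-empty set of values, each of which is at least $1$, again yields a value at least $1$, which gives $\minbeta(h) \geq 1$ as claimed.

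There is essentially no obstacle here: the corollary is an immediate consequence of the fact that every element maps to its own image, forcing each preimage to be non-empty. The only point worth stating explicitly is the non-emptiness of the set of height-$h$ trees, which is what guarantees the minimum in Definition~\ref{definition:minbeta} exists in the first place.
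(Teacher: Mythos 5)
Your proof is correct and takes essentially the same route as the paper, which simply notes that $\beta(t) = |\alpha^{-1}\bigl(\alpha(t)\bigr)| \geq 1$ for every tree $t$; you merely spell out the underlying reason (each $t$ lies in its own preimage) and the well-definedness of the minimum. No issues.
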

\begin{proof}
$\minbeta(h) \geq 1$ since $\forall t \in \rdttype: \beta(t) = |\alpha^{-1}\bigl(\alpha(t)\bigr)|\geq 1$.
\qed
\end{proof}

\begin{lemma} [Monotonic Property of $\minbeta$]
\label{lemma:minbeta_monotonic}
Function $\minbeta: \naturalnums \rightarrow \naturalnums \cup
\{\infty\}$ satisfies the following monotonic property:
\[ \begin{array}{lll}
\forall h \in \naturalnums, h \geq \halpha: & \minbeta(h) = \infty \Rightarrow \minbeta(h + 1) = \infty&\wedge \\
& \minbeta(h) < \infty \Rightarrow \minbeta(h) < \minbeta(h + 1)&
\end{array}\]
\end{lemma}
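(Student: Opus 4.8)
The plan is to prove both conjuncts of the stated monotonic property simultaneously by extracting a single uniform fact about trees that are one level taller than $h$. First I would fix an arbitrary $h \in \naturalnums$ with $h \geq \halpha$ and take an arbitrary tree $t$ with $\height(t) = h+1$. Since $h+1 > h \geq \halpha$, such a $t$ satisfies $\height(t) \geq \halpha$, so the defining condition of monotonic catamorphisms (Definition~\ref{definition:mono_cata}) applies to it: either $\beta(t) = \infty$, or there is a tree $t_0$ with $\height(t_0) = h$ and $\beta(t_0) < \beta(t)$.

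The key step is then to analyze the second alternative. If such a $t_0$ exists, then by Definition~\ref{definition:minbeta} we have $\minbeta(h) \leq \beta(t_0) < \beta(t)$; in particular this alternative can occur only when $\minbeta(h)$ is finite, and when it does occur it forces $\beta(t) > \minbeta(h)$. Combining the two alternatives gives the central claim I would establish: for every tree $t$ of height $h+1$, either $\beta(t) = \infty$, or else both $\minbeta(h) < \infty$ and $\beta(t) > \minbeta(h)$ hold.

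From this claim the two conjuncts follow directly, with no induction needed. For the first conjunct, assuming $\minbeta(h) = \infty$, the second alternative of the claim is impossible (it would require $\minbeta(h) < \infty$), so $\beta(t) = \infty$ for every height-$(h+1)$ tree $t$, whence $\minbeta(h+1) = \infty$. For the second conjunct, assuming $\minbeta(h) < \infty$, every height-$(h+1)$ tree $t$ satisfies $\beta(t) > \minbeta(h)$ (immediate when $\beta(t) = \infty$, and exactly the claim otherwise). Since the set $\{\beta(t) \mid \height(t) = h+1\}$ is a nonempty subset of the well-ordered codomain $\naturalnums \cup \{\infty\}$ (nonempty because $\rdttype$ contains a tree of every height), its minimum $\minbeta(h+1)$ is realized by an actual tree and therefore also strictly exceeds $\minbeta(h)$.

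The points I expect to require the most care are the bookkeeping around $\infty$ and the attainment of the minima. Specifically, I must ensure the strict inequality $\beta(t) > \minbeta(h)$ transfers to the minimum $\minbeta(h+1)$, which relies on that minimum being attained rather than merely an infimum; this is exactly what the well-ordering of $\naturalnums$ together with the existence of a tree at each height supplies. I would also track the fact that the ``$\exists t_0$'' branch silently certifies $\minbeta(h) < \infty$, since that observation is what cleanly partitions the argument into the two conjuncts and prevents the $\minbeta(h)=\infty$ case from ever landing in the strict-increase branch.
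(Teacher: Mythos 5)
Your proposal is correct and follows essentially the same route as the paper's proof: both apply Definition~\ref{definition:mono_cata} to an arbitrary tree of height $h+1$, split on the resulting disjunction, and use Definition~\ref{definition:minbeta} to conclude $\minbeta(h) < \beta(t)$ in the finite case and propagate $\infty$ in the other. Your extra care about the minimum being attained (rather than a mere infimum) is a detail the paper leaves implicit, but it does not change the argument.
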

\begin{proof}

% The first case: Use a direct proof
Consider any $h \in \naturalnums$ such that $h \geq \halpha$. There are two cases to consider: $\minbeta(h) = \infty$ and $\minbeta(h) < \infty$.

\underline{Case 1} [$\minbeta(h) = \infty$]:
From Definition \ref{definition:minbeta}, every tree $t_h$ of height $h$ has $\beta(t_h) = \infty$ because $\minbeta(h) = \infty$.
Hence, every tree $t_{h + 1}$ of height $h + 1$ has $\beta(t_{h + 1}) = \infty$ from Definition \ref{definition:mono_cata}.
Thus, $\minbeta(h + 1) = \infty$.
%As a result, $\minbeta(h) = \infty \Rightarrow \minbeta(h + 1) = \infty$.

\underline{Case 2} [$\minbeta(h) < \infty$]: Let $t_{h + 1}$ be any tree of height $h + 1$.
From Definition \ref{definition:mono_cata},
there are two sub-cases as follows.
\begin{itemize}
\item \underline{Sub-case 1} [$\beta(t_{h + 1}) = \infty$]: Because $\minbeta(h) < \infty$, we have $\minbeta(h) < \beta(t_{h + 1})$.
\item \underline{Sub-case 2} [there exists $t_{h}$ of height $h$ such that $\beta(t_h) < \beta(t_{h + 1})$]:
From Definition \ref{definition:minbeta}, $\minbeta(h) < \beta(t_{h + 1})$.
\end{itemize}

In both sub-cases, we have
$\minbeta(h) < \beta(t_{h + 1})$.
Since $t_{h + 1}$ can be any tree of height $h + 1$, we have
%$\minbeta(h) < \min\{\beta(t_{h + 1})~|~t_{h + 1} \in \rdttype, \height(t_{h + 1}) = h + 1\}$.
$\minbeta(h) < \minbeta(h + 1)$ from Definition \ref{definition:minbeta}.
\qed
\end{proof}

\begin{corollary} For any natural number $p > 0$, $\minbeta(\halpha + p) > p$.
\label{corollary:minbeta}
\end{corollary}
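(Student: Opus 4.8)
The plan is to prove the bound by induction on $p \geq 1$, using the strict monotonicity of $\minbeta$ above height $\halpha$ (Lemma~\ref{lemma:minbeta_monotonic}) together with the fact that $\minbeta$ is always at least $1$ (Corollary~\ref{corollary:positive_min_beta}). The key observation is that every step up in height starting from $\halpha$ either pushes $\minbeta$ to $\infty$ or strictly increases it; since $\minbeta$ takes values in $\naturalnums \cup \{\infty\}$, a strict increase of a finite value means an increase by at least one integer unit. Iterating this $p$ times from the base value $\minbeta(\halpha) \geq 1$ accumulates an increase of at least $p$, which should give $\minbeta(\halpha + p) \geq 1 + p > p$.

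For the base case $p = 1$, I would instantiate Lemma~\ref{lemma:minbeta_monotonic} at $h = \halpha$. If $\minbeta(\halpha) = \infty$ then $\minbeta(\halpha + 1) = \infty > 1$; otherwise $\minbeta(\halpha) < \infty$, the lemma gives $\minbeta(\halpha + 1) > \minbeta(\halpha) \geq 1$, and since both sides are natural numbers this forces $\minbeta(\halpha + 1) \geq 2 > 1$. For the inductive step, assuming $\minbeta(\halpha + p) > p$, I would instantiate the lemma at $h = \halpha + p \geq \halpha$: if $\minbeta(\halpha + p) = \infty$ then $\minbeta(\halpha + p + 1) = \infty > p + 1$; otherwise the inductive hypothesis gives the finite bound $\minbeta(\halpha + p) \geq p + 1$, and strict monotonicity yields $\minbeta(\halpha + p + 1) \geq \minbeta(\halpha + p) + 1 \geq p + 2 > p + 1$, completing the step.

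I do not anticipate a genuine obstacle here: the statement is essentially a repackaging of Lemma~\ref{lemma:minbeta_monotonic}. The only point that requires a little care is the bookkeeping between strict inequality over $\naturalnums \cup \{\infty\}$ and the integer increment it implies, and dispatching the $\infty$ case separately at each stage so that finite arithmetic is used only when the value is actually finite. Because the hypothesis $h \geq \halpha$ of Lemma~\ref{lemma:minbeta_monotonic} is maintained throughout—we only ever instantiate it at heights $\halpha + k$ with $k \geq 0$—the induction goes through without any additional side conditions.
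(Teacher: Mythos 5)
Your proof is correct and follows exactly the paper's approach: the paper's own proof is the one-liner ``by induction on $p$ based on Lemma~\ref{lemma:minbeta_monotonic} and Corollary~\ref{corollary:positive_min_beta},'' and your argument is precisely that induction, spelled out with the $\infty$ case handled separately and the strict-inequality-over-$\naturalnums$ bookkeeping made explicit.
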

\begin{proof}
By induction on $p$ based on Lemma \ref{lemma:minbeta_monotonic} and Corollary \ref{corollary:positive_min_beta}.
\qed
\end{proof}

\begin{theorem}
\label{theorem:monotonic_generalized_sufficient_surjectivity}
Monotonic catamorphisms are GSS (Definition \ref{definition:generalized_sufficient_surjectivity}).
\end{theorem}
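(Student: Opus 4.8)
The goal is to show that every monotonic catamorphism satisfies the GSS condition of Definition~\ref{definition:generalized_sufficient_surjectivity}: for each $p \in \naturalnums$ I must exhibit a height $h_p$, computable from $p$, such that every tree $t$ with $\height(t) \geq h_p$ has $\beta(t) > p$. The plan is to read off this bound directly from the monotonicity machinery already developed in Section~\ref{subsection:predicate_satisfying_suff_surj}. The natural candidate is $h_p = \halpha + p$, where $\halpha$ is the constant from Definition~\ref{definition:mono_cata}, because Corollary~\ref{corollary:minbeta} already tells us that $\minbeta(\halpha + p) > p$.

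First I would fix an arbitrary $p \in \naturalnums^+$ and set $h_p = \halpha + p$, which is clearly computable from $p$ since $\halpha$ is a fixed constant of the catamorphism. The heart of the argument is then a short lemma: if $\height(t) \geq h_p$ then $\beta(t) > p$. By Corollary~\ref{corollary:minbeta} we know $\minbeta(\halpha + p) > p$, i.e.\ the minimum of $\beta$ over all trees of height exactly $\halpha + p$ exceeds $p$. The remaining step is to push this from ``height exactly $h_p$'' up to ``height at least $h_p$,'' which follows because the monotonic property of $\minbeta$ (Lemma~\ref{lemma:minbeta_monotonic}) makes $\minbeta$ nondecreasing for heights $\geq \halpha$: for any $h \geq h_p$ we have $\minbeta(h) \geq \minbeta(h_p) > p$, and since $\beta(t) \geq \minbeta(\height(t))$ by Definition~\ref{definition:minbeta}, any such $t$ satisfies $\beta(t) > p$. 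I would also dispatch the edge case $p = 0$ separately (or note it is immediate), since Corollary~\ref{corollary:minbeta} is stated for $p > 0$; here $\beta(t) \geq 1 > 0$ holds for every tree by Corollary~\ref{corollary:positive_min_beta}.

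The main obstacle is not any hard estimate but rather the bookkeeping of converting between the three quantities $\beta$, $\minbeta$, and the height thresholds, and making sure the monotonicity of $\minbeta$ is invoked in the correct regime (heights $\geq \halpha$, which is guaranteed since $h_p = \halpha + p \geq \halpha$). In other words, essentially all the real work has already been front-loaded into Lemma~\ref{lemma:minbeta_monotonic} and Corollary~\ref{corollary:minbeta}; this theorem is a clean corollary of them, and the proof amounts to choosing $h_p = \halpha + p$ and chaining the two inequalities $\beta(t) \geq \minbeta(\height(t)) \geq \minbeta(h_p) > p$.
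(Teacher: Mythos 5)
Your proposal is correct and follows essentially the same route as the paper's proof: choose $h_p = \halpha + p$, invoke Corollary~\ref{corollary:minbeta} to get $\minbeta(h_p) > p$, extend to all $h \geq h_p$ via the monotonicity of $\minbeta$ (Lemma~\ref{lemma:minbeta_monotonic}), and conclude through Definition~\ref{definition:minbeta}. Your explicit handling of the $p=0$ edge case (via Corollary~\ref{corollary:positive_min_beta}) is a minor point the paper glosses over, but it does not change the argument.
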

\begin{proof}
Let $\alpha$ be a monotonic catamorphism with $\halpha$ as in
Definition~\ref{definition:mono_cata}. Let $h_p = \halpha + p$. From
Corollary \ref{corollary:minbeta}, $\minbeta(h_p) > p$. Based on Lemma
\ref{lemma:minbeta_monotonic}, we can show by induction on $h$ that
$\forall h \geq h_p: \minbeta(h) > p$. By
Definition~\ref{definition:minbeta}, for every tree $t$ with
$\height(t) \geq h_p$ we have $\beta(t) > p$. Therefore $\alpha$ is
GSS. \qed
\end{proof}

The proof of Theorem \ref{theorem:monotonic_generalized_sufficient_surjectivity} shows that monotonic catamorphisms admit a linear bound on the number of unrollings needed to establish unsatisfiability in our procedure.

\subsection{Examples of Monotonic Catamorphisms}
\label{subsection:examples_mono_catas}
This section proves that all sufficiently surjective catamorphisms introduced by Suter et al.~\cite{Suter2010DPA} are monotonic.
These catamorphisms are listed in Table \ref{table:POPL_catas}.
Note that the \emph{Sortedness} catamorphism can be defined to allow or not allow duplicate elements~\cite{Suter2010DPA};
we define $\emph{Sortedness}_\emph{dup}$ and $\emph{Sortedness}_\emph{nodup}$ for the \emph{Sortedness} catamorphism
where duplications are allowed and disallowed, respectively.

The monotonicity of \emph{Set}, \emph{SizeI}, \emph{Height}, \emph{Some}, \emph{Min}, and \emph{Sortedness}$_\emph{dup}$ catamorphisms
is easily proved by showing the relationship between infinitely surjective abstractions (see Definition \ref{definition:infinitely_surjective_catamorphisms})
and monotonic catamorphisms.

\begin{lemma} Infinitely surjective abstractions are monotonic.
\label{lemma:infinitely_surjective_abstraction_monotonic}
\end{lemma}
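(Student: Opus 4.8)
The plan is to satisfy the monotonicity condition of Definition~\ref{definition:mono_cata} using only its first disjunct, namely $\beta(t) = \infty$. The key observation is that an infinitely surjective $S$-abstraction has only \emph{finitely many} trees with finite inverse image, so every sufficiently tall tree automatically lands in the infinite case, and the existential witness $t_0$ is never needed.

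First I would invoke Definition~\ref{definition:infinitely_surjective_catamorphisms}: there is a finite set of trees $S$ such that $\beta(t) = |\alpha^{-1}(\alpha(t))|$ is finite precisely when $t \in S$ and $\beta(t) = \infty$ otherwise. Since $S$ is finite, the set of heights $\{\height(t) \mid t \in S\}$ is a finite subset of $\naturalnums$ and hence has a maximum; I would write $h_{\max} = \max\{\height(t) \mid t \in S\}$, adopting the convention $h_{\max} = 0$ when $S = \emptyset$. I would then set $\halpha = h_{\max} + 1$.

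The central step is then immediate: for any tree $t$ with $\height(t) \geq \halpha$ we have $\height(t) > h_{\max} \geq \height(t')$ for every $t' \in S$, so $t \notin S$, and therefore $\beta(t) = \infty$ by the defining property of infinite surjectivity. Thus the disjunction in Definition~\ref{definition:mono_cata} holds for every $t$ of height at least $\halpha$ via its first branch, establishing that $\alpha$ is monotonic with this choice of $\halpha$.

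The only real obstacle is the minor bookkeeping around well-definedness of $h_{\max}$ — ensuring the maximum exists (which follows directly from the finiteness of $S$) and handling the degenerate case $S = \emptyset$ by convention. No induction, and none of the Catalan-number or $\minbeta$ machinery developed for the harder catamorphisms, is required here: once $\halpha$ is pinned just above the maximal height occurring in $S$, the result drops out of the definitions.
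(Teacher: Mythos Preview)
Your proposal is correct and follows essentially the same approach as the paper: both set $\halpha = \max\{\height(t) \mid t \in S\} + 1$ and observe that every tree of height at least $\halpha$ lies outside $S$, hence has $\beta(t) = \infty$, satisfying the first disjunct of Definition~\ref{definition:mono_cata}. Your version is slightly more careful in handling the degenerate case $S = \emptyset$, but the argument is otherwise identical.
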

\begin{proof}
According to Definition \ref{definition:infinitely_surjective_catamorphisms},
$\alpha$ is infinitely surjective $S$-abstraction, where $S$ is a set of trees, if and only if
%$\alpha^{-1}\bigl(\alpha(t)\bigr)$
$\beta(t)$ is finite for $t \in S$ and infinite for $t \notin S$.
Therefore, $\alpha$ is monotonic with $\halpha = \max\{\height(t)~|~t \in S\} + 1$.
\qed
\end{proof}

\begin{theorem} \label{theorem:set} Set, SizeI, Height, Some, Min, and Sortedness$_\mathit{dup}$ are monotonic.
\end{theorem}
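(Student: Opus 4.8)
The plan is to establish that each of the six catamorphisms is an infinitely surjective $\{\Leaf\}$-abstraction in the sense of Definition~\ref{definition:infinitely_surjective_catamorphisms}, and then to conclude monotonicity directly from Lemma~\ref{lemma:infinitely_surjective_abstraction_monotonic}. Concretely, for each $\alpha$ in the list I would verify the two clauses of that definition with $S = \{\Leaf\}$: first, that $\beta(\Leaf)$ is finite (in fact equal to $1$), and second, that $\beta(t) = \infty$ for every tree $t \neq \Leaf$.

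The finiteness clause is uniform and immediate, since in each case $\Leaf$ is the unique preimage of $\alpha(\Leaf)$. For \emph{Set}, \emph{SizeI}, and \emph{Height} this holds because $\alpha(\Leaf)$ equals $\emptyset$, $0$, and $0$ respectively, and no non-leaf tree has an empty element set, zero internal nodes, or zero height. For \emph{Some}, \emph{Min}, and $\emph{Sortedness}_{\mathit{dup}}$ it holds because $\alpha(\Leaf)$ equals $\None$, $\None$, and $(\None, \None, \true)$, whereas every non-leaf tree is mapped by these catamorphisms to a value whose components record an actual element (or carry the flag $\false$); hence $\Leaf$ is again the only preimage. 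Thus $\beta(\Leaf) = 1$ throughout.

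For the infinitude clause I would argue that the catamorphism value of a non-leaf tree is preserved under infinitely many structural modifications. For \emph{Set} (fix the element set), \emph{Some} (fix the root element), and \emph{Min} (fix the minimum element), one exhibits an infinite family of trees sharing the same value simply by varying the shape while retaining the relevant elements, so $\beta(t) = \infty$ for every $t \neq \Leaf$. For \emph{SizeI} and \emph{Height}, fixing the internal-node count $k \geq 1$ or the height $h \geq 1$ still leaves the element stored at an internal node free to range over the (infinite) element domain, again yielding infinitely many preimages; here I would flag explicitly that the argument relies on the element type being infinite, which is the standing assumption for these integer-valued catamorphisms.

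The case requiring the most care, and the one I expect to be the main obstacle, is $\emph{Sortedness}_{\mathit{dup}}$, since its codomain is a triple and the preimage structure differs between sorted and unsorted inputs. If $t$ is unsorted then $\alpha(t) = (\None, \None, \false)$, and there are obviously infinitely many unsorted trees. If $t$ is sorted with minimum $a$ and maximum $b$ then $\alpha(t) = (a, b, \true)$, and I would construct infinitely many sorted preimages by repeatedly inserting duplicate copies of $a$ (or of $b$) into a sorted tree, which preserves both the minimum, the maximum, and sortedness. This is exactly where the duplicates-allowed assumption is essential, and it is what distinguishes $\emph{Sortedness}_{\mathit{dup}}$ from $\emph{Sortedness}_{\mathit{nodup}}$, which must be treated separately. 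Having shown $\beta(t) = \infty$ for all $t \neq \Leaf$ in every case, each catamorphism is an infinitely surjective $\{\Leaf\}$-abstraction, and Lemma~\ref{lemma:infinitely_surjective_abstraction_monotonic} completes the proof.
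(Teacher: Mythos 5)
Your proposal is correct and follows essentially the same route as the paper: establish that each of the six catamorphisms is an infinitely surjective abstraction and then invoke Lemma~\ref{lemma:infinitely_surjective_abstraction_monotonic}. The only difference is one of detail: the paper simply cites \cite{Suter2010DPA} for the infinite surjectivity of \emph{Set}, \emph{SizeI}, \emph{Height}, and $\emph{Sortedness}_\emph{dup}$ and asserts it for \emph{Some} and \emph{Min}, whereas you verify the two clauses of Definition~\ref{definition:infinitely_surjective_catamorphisms} explicitly (including the correct observation that \emph{SizeI} and \emph{Height} need an infinite element domain).
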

\begin{proof}
\cite{Suter2010DPA} showed that \emph{Set}, \emph{SizeI}, \emph{Height}, and $\emph{Sortedness}_\emph{dup}$
are infinitely surjective abstractions.
Also, \emph{Some} and \emph{Min} have the properties of infinitely surjective \{\Leaf\}-abstractions.
Therefore, the theorem follows from Lemma \ref{lemma:infinitely_surjective_abstraction_monotonic}.
\qed
\end{proof}

It is more challenging to prove that \emph{Multiset}, \emph{List}, and
$\emph{Sortedness}_\emph{nodup}$ catamorphisms are monotonic since they are
not infinitely surjective abstractions. In
Section~\ref{section:assoc_cata} we will introduce the notion of
associative catamorphisms which includes \emph{Multiset}, $\emph{List}_\emph{inorder}$, and $\emph{Sortedness}_\emph{nodup}$, and prove in
Theorem~\ref{theorem:assoc_cata_are_monotonic} that all associative
catamorphisms are monotonic. For now, we conclude this section by
showing that the all \emph{List} catamorphisms are monotonic.

\begin{theorem}
\label{theorem:list}
List catamorphisms are monotonic.
\end{theorem}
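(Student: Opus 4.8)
The plan is to reduce the whole statement to counting tree shapes and then invoke the strict growth of the Catalan numbers established in Lemma~\ref{lemma:numshapes}. The central step is to show that for any of the three \emph{List} catamorphisms (in-, pre-, or post-order) the quantity $\beta(t)$ depends only on the number $k$ of internal nodes of $t$, and in fact equals $\catalan_k$. To see this, fix a tree $t$ and let $\ell = \alpha(t)$, a list of length $k$: each internal node contributes exactly one element to the traversal while leaves contribute nothing, so every tree in $\alpha^{-1}(\ell)$ is a full binary tree with exactly $k$ internal nodes. I would then argue that the map sending a full-binary-tree shape with $k$ internal nodes to its unique labelling whose traversal equals $\ell$ is a bijection onto $\alpha^{-1}(\ell)$: a shape fixes the traversal order of its internal nodes, so there is exactly one way to assign the entries of $\ell$ to them, and distinct shapes yield distinct trees. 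Hence $\beta(t) = |\alpha^{-1}(\ell)| = \numshapes(2k+1) = \catalan_k$ by Lemma~\ref{lemma:num_shapes_catalan}.

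With $\beta(t) = \catalan_k$ in hand the monotonicity argument is short. By (the proof of) Lemma~\ref{lemma:numshapes} the Catalan numbers satisfy $\catalan_1 < \catalan_2 < \cdots$, so $\beta$ grows strictly with the internal-node count once that count is at least $1$. I would set $\halpha = 2$ and take any tree $t$ with $\height(t) = h \geq 2$, writing $k$ for its number of internal nodes. By Properties~\ref{property:size_is_odd} and~\ref{property:size_height} we have $\size(t) = 2k+1 \geq 2h+1$, hence $k \geq h$. Now let $t_0$ be the minimal (caterpillar) full binary tree of height $h-1$, which has exactly $h-1$ internal nodes and height exactly $\height(t) - 1$. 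Then $\beta(t_0) = \catalan_{h-1}$ and, since $1 \leq h-1 < h \leq k$, strict monotonicity of the Catalan numbers gives $\catalan_{h-1} < \catalan_{h} \leq \catalan_{k}$, i.e.\ $\beta(t_0) < \beta(t)$. Because $\beta(t)$ is always finite here, the disjunct $\beta(t) = \infty$ in Definition~\ref{definition:mono_cata} never fires, and the second disjunct is witnessed by $t_0$; thus every \emph{List} catamorphism is monotonic with $\halpha = 2$.

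The only delicate point, and the step I expect to require the most care, is the bijection in the first paragraph, specifically checking that duplicate elements in $\ell$ neither create nor destroy preimages. Unlike the \emph{Multiset} case, order is preserved here, so even when $\ell$ has repeated entries the forced labelling of a given shape is still unique and still produces exactly $\ell$; I would make this explicit rather than gloss over it. Everything else, namely the bound $k \geq h$, the construction of the caterpillar $t_0$, and the appeal to Catalan monotonicity, is routine. A single uniform treatment covers all three traversal orders, since in each case the traversal induces a fixed linear order on the internal nodes of a shape, so the counting argument is identical.
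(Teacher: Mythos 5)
Your proof is correct and follows essentially the same route as the paper's: both establish that $\beta(t)$ equals the number of tree shapes with the same internal-node count (i.e.\ a Catalan number), via the observation that the list's element order forces a unique labelling of each shape, and both then take $\halpha = 2$ and a smaller witness tree of height $\height(t)-1$, concluding by the strict growth of $\numshapes$/$\catalan$ from Lemma~\ref{lemma:numshapes}. Your explicit caterpillar witness and the remark about duplicate elements are slightly more careful than the paper's phrasing, but the argument is the same.
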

\begin{proof}
Let $\alpha$ be a \emph{List} catamorphism.
For any tree $t$ there are exactly $\numshapes\bigl(\size(t)\bigr)$ distinct trees that can map to $\alpha(t)$.
 This is true because: (1) the length of the list
$\alpha(t)$ is the number of internal nodes of $t$, (2) there are
exactly $\numshapes\bigl(\size(t)\bigr)$ tree shapes with the same
number of internal nodes as $t$, and (3) the order of elements in
$\alpha(t)$ gives rise to exactly one tree with the same catamorphism
value $\alpha(t)$ for each tree shape. Thus,
$\beta(t) = \numshapes\bigl(\size(t)\bigr)$.

Let $\halpha = 2$ and let $t$ be an arbitrary tree with $\height(t)
\geq 2$. Then there exists $t_0$ such that $\height(t_0) = \height(t) -
1$ and $\size(t_0) < \size(t)$. By Property \ref{property:size_height},
$\height(t) \geq \halpha = 2$ implies $\size(t) \geq 5$. By Lemma
\ref{lemma:numshapes}, $\numshapes\bigl(\size(t_0)\bigr) <
\numshapes\bigl(\size(t)\bigr)$, which means $\beta(t_0) < \beta(t)$.
Therefore $\alpha$ is monotonic. \qed
\end{proof}

%%  LocalWords:  iff DW lll GSS Suter et al Sortedness dup nodup
%%  LocalWords:  SizeI

\subsection{Monotonic Catamorphisms are Sufficiently Surjective}
In this section, we demonstrate that monotonic catamorphisms are a strict subset of sufficiently surjective catamorphisms.  To this end, we prove that all monotonic catamorphisms are sufficiently surjective, and then provide a witness catamorphism to show that there exists a sufficiently surjective catamorphism that is not monotonic.  Although this indicates that monotonic catamorphisms are less general, the constructed catamorphism is somewhat esoteric and we have not found any practical application in which a catamorphism is sufficiently surjective but not monotonic.

To demonstrate that monotonic catamorphisms are sufficiently
surjective, we describe a predicate $M_h$ (for each $h$) that is
generic for any monotonic catamorphism. We show that $M_h(\alpha(t))$
holds for any tree with $\height(t) \geq h$ in
Lemma~\ref{lemma:image_of_t}. We then show that if $M_h(c)$ holds then
there exists a tree $t$ with $\height(t) \geq h$ and $\alpha(t) = c$ in
Lemma~\ref{lemma:mp_height_lemma}. The proof of sufficient
surjectivity follows directly from these lemmas.

\begin{definition}[$M_h$ for arbitrary catamorphism $\alpha$]
Let $\alpha$ be defined by $\kw{empty}$ and $\kw{combine}$. Let
$R_\alpha$ be a predicate which recognizes exactly the range of
$\alpha$. We define $M_h$ recursively as follows:
\begin{align*}
M_0(c) &= R_\alpha(c) \\
M_{h+1}(c) &= \exists c_L, e, c_R : c = \kw{combine}(c_L, e, c_R)
\land~ \\
& \phantom{~= \exists c_L, e, c_R :}
((M_h(c_L) \land R_\alpha(c_R)) \lor (R_\alpha(c_L) \land M_h(c_R))
\end{align*}
Note that $M_h(c)$ may have nested existential quantifiers, but these
can always be moved out to the top-level.
\end{definition}

\begin{lemma}
\label{lemma:image_of_t}
If $\height(t) \geq h$ then $M_h(\alpha(t))$ holds.
\end{lemma}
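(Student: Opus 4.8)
The plan is to prove the statement by induction on $h$, leaning on the recursive structure of both $M_h$ and the height function. The definition of $M_h$ is engineered so that $M_h(c)$ asserts the existence of a height-$h$ ``witness path'' of $\kw{combine}$ applications, all of whose off-path arguments lie in $\range(\alpha)$; the proof simply reads this path off the tree one node at a time.

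For the base case $h = 0$, I would observe that $M_0(\alpha(t)) = R_\alpha(\alpha(t))$, and since $R_\alpha$ recognizes exactly the range of $\alpha$, this holds for every tree $t$ regardless of its height. For the inductive step, I would assume the claim for $h$ and take a tree $t$ with $\height(t) \geq h+1$. Because the hypothesis forces $\height(t) \geq 1$, the tree cannot be $\Leaf$ (whose height is $0$), so $t = \Node(t_L, e, t_R)$ and hence $\alpha(t) = \kw{combine}(\alpha(t_L), e, \alpha(t_R))$ by the definition of the catamorphism. The key observation is that $\height(t) = 1 + \max\{\height(t_L), \height(t_R)\}$, so $\max\{\height(t_L), \height(t_R)\} \geq h$, meaning at least one child has height at least $h$. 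Without loss of generality assume $\height(t_L) \geq h$; the case $\height(t_R) \geq h$ is symmetric and selects the other disjunct of $M_{h+1}$. By the induction hypothesis $M_h(\alpha(t_L))$ holds, and $R_\alpha(\alpha(t_R))$ holds because $\alpha(t_R)$ lies in the range of $\alpha$. Instantiating the existential in the definition of $M_{h+1}$ with $c_L = \alpha(t_L)$, $e = e$, and $c_R = \alpha(t_R)$ then satisfies the disjunct $M_h(c_L) \land R_\alpha(c_R)$, which establishes $M_{h+1}(\alpha(t))$.

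This argument is essentially routine; the only points requiring care are that the induction must thread through whichever child happens to be tall enough, and that the two symmetric subcases correspond exactly to the two disjuncts $(M_h(c_L) \land R_\alpha(c_R))$ and $(R_\alpha(c_L) \land M_h(c_R))$ in the definition of $M_{h+1}$. I do not anticipate a genuine obstacle here: the definition of $M_h$ is tailored so that choosing the taller subtree at each step reconstructs precisely the witness structure the predicate demands, so no combinatorial or quantitative estimate (unlike the later $\minbeta$ arguments) is needed.
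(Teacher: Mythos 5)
Your proof is correct and follows essentially the same route as the paper's: induction on $h$, with the base case discharged by the definition of $R_\alpha$ and the inductive step threading the hypothesis through whichever child attains height at least $h$, using $R_\alpha$ for the other child to pick the matching disjunct of $M_{h+1}$. No issues.
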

\begin{proof}
Induction on $h$. In the base case, $h = 0$. Given an arbitrary tree
$t$, we have $M_0(\alpha(t)) = R_\alpha(\alpha(t))$ which holds by the
definition of $R_\alpha$.

In the inductive case assume the formula holds for a fixed $h$, and
let $t$ be an arbitrary tree with $\height(t) \geq h + 1$. Let $t =
\Node(t_L, e, t_R)$ where either $\height(t_L) = \height(t) - 1$ or
$\height(t_R) = \height(t) - 1$. Without loss of generality assume that
$\height(t_L) = \height(t) - 1$; the other case is symmetric. Then we
have $\height(t_L) \geq h$ and so by the inductive hypothesis
$M_h(\alpha(t_L))$ holds. We also know $R_\alpha(\alpha(t_R))$ by the
definition of $R_\alpha$. Therefore $M_{h+1}(\alpha(t))$ holds. \qed
\end{proof}

\begin{lemma}
\label{lemma:mp_height_lemma}
If $M_h(c)$ holds, there exists $t$ such that $\height(t)
\geq h$ and $c = \alpha(t)$.
\end{lemma}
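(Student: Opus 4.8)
The plan is to prove this by induction on $h$, essentially running the argument of Lemma~\ref{lemma:image_of_t} in reverse: there, structural information about a tree of height at least $h$ was used to build up the formula $M_h$, and here I want to read the structure of $M_h$ off to reconstruct a suitable witness tree. The crucial ingredient is that $R_\alpha$ recognizes \emph{exactly} the range of $\alpha$, i.e.\ $R_\alpha(c)$ holds iff there exists some tree $t'$ with $\alpha(t') = c$, which is the correctness condition~\eqref{correctr}. This is what lets me turn each occurrence of $R_\alpha$ appearing in $M_h$ into an actual tree witness.

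For the base case $h = 0$, the definition gives $M_0(c) = R_\alpha(c)$, so by the correctness of $R_\alpha$ there is a tree $t$ with $\alpha(t) = c$, and $\height(t) \geq 0$ holds trivially. For the inductive step, I assume the statement for $h$ and suppose $M_{h+1}(c)$ holds. Unpacking the top-level existential yields values $c_L$, $e$, $c_R$ with $c = \kw{combine}(c_L, e, c_R)$ together with the disjunction $(M_h(c_L) \land R_\alpha(c_R)) \lor (R_\alpha(c_L) \land M_h(c_R))$. I treat the first disjunct and note the second is symmetric. Applying the induction hypothesis to $c_L$ gives a tree $t_L$ with $\height(t_L) \geq h$ and $\alpha(t_L) = c_L$; applying the correctness of $R_\alpha$ to $c_R$ gives a tree $t_R$ with $\alpha(t_R) = c_R$. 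Setting $t = \Node(t_L, e, t_R)$, the catamorphism equation gives $\alpha(t) = \kw{combine}(\alpha(t_L), e, \alpha(t_R)) = \kw{combine}(c_L, e, c_R) = c$, and from the definition of height, $\height(t) = 1 + \max\{\height(t_L), \height(t_R)\} \geq 1 + \height(t_L) \geq h + 1$, which closes the induction.

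The proof is a routine structural induction, so I do not expect any serious obstacle in the mechanics. The one point that genuinely matters — and which is the real content of the lemma — is the exactness of $R_\alpha$: if $R_\alpha$ were merely an overapproximation of the range (as the soundness-only check discussed around Algorithm~\ref{alg:unroll_step} guarantees), then the branch of the induction that converts $R_\alpha(c_R)$ into a tree $t_R$ would fail, since such a $t_R$ need not exist. Thus the hypothesis that $R_\alpha$ recognizes exactly the range of $\alpha$ must be invoked explicitly in both the base case and the inductive step, and I would flag this dependence clearly. Together with Lemma~\ref{lemma:image_of_t}, this lemma then yields the sufficient-surjectivity witness $M_h$ for any monotonic $\alpha$, completing the intended chain of reasoning.
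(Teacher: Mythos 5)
Your proof is correct and follows essentially the same route as the paper's: induction on $h$, using the exactness of $R_\alpha$ in the base case and for the non-recursive branch, the inductive hypothesis for the $M_h$ branch, and the constructed $\Node(t_L, e, t_R)$ to realize $\kw{combine}(c_L, e, c_R)$. Your added emphasis that $R_\alpha$ must recognize the range exactly (not merely overapproximate it) is accurate and matches the hypothesis the paper builds into the definition of $M_h$.
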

\begin{proof}
Induction on $h$. In the base case, $h = 0$. We know $M_0(c)$ holds
and so $R_\alpha(c)$ holds. By the definition of $R_\alpha$ there
exists a tree $t$ such that $c = \alpha(t)$. Trivially, $\height(t)
\geq 0$.

In the inductive case assume the formula holds for a fixed $h$, and
assume $M_{h+1}(c)$ holds. Expanding the definition of $M_{h+1}$ we
know that there exists $c_L$, $e$, and $c_R$ such that $c =
\kw{combine}(c_L, e, c_R)$ and either $M_h(c_L) \land R_\alpha(c_R)$
or $R_\alpha(c_L) \land M_h(c_R)$ holds. Without loss of generality assume the
former; the latter case is symmetric. Given $M_h(c_L)$ holds the
inductive hypothesis gives us a tree $t_L$ with $\height(t_L) \geq h$
and $c_L = \alpha(t_L)$. Given $R_\alpha(c_R)$ we know there exists
$t_R$ with $c_R = \alpha(t_R)$ by the definition of $R_\alpha$. Let $t
= \Node(t_L, e, t_R)$. Since $\height(t_L) \geq h$ we have $\height(t)
\geq h + 1$. Moreover,
\begin{equation*}
\alpha(t) = \alpha(\Node(t_L, e, t_R)) = \kw{combine}(\alpha(t_L), e,
\alpha(t_R)) = \kw{combine}(c_L, e, c_R) = c
\end{equation*}
Thus $t$ has the required properties. \qed
\end{proof}

\begin{theorem}
\label{theorem:mono_catas_are_suff_surj}
Monotonic catamorphisms are sufficiently surjective.
\end{theorem}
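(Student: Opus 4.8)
The plan is to instantiate Definition~\ref{definition:sufficient_surjectivity} directly, using the generic predicate $M_h$ together with Lemmas~\ref{lemma:image_of_t} and~\ref{lemma:mp_height_lemma}, which were set up precisely for this purpose. For each $p \in \naturalnums^+$ I must exhibit a finite shape set $S_p$ and a closed formula $M_p$ over the collection and element theories such that (i) $M_p(c)$ implies $|\alpha^{-1}(c)| > p$, and (ii) every tree $t$ satisfies $M_p(\alpha(t))$ or $\shape(t) \in S_p$. The single idea driving both requirements is to fix a height threshold $h_p$ beyond which every tree already has more than $p$ preimages, set $M_p := M_{h_p}$, and let $S_p$ collect the shapes of all trees below that threshold.

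Concretely, I would take $h_p = \halpha + p$, where $\halpha$ is the monotonicity constant from Definition~\ref{definition:mono_cata}. By Corollary~\ref{corollary:minbeta} we have $\minbeta(h_p) > p$, and by the monotonic property of $\minbeta$ (Lemma~\ref{lemma:minbeta_monotonic}) this extends to $\minbeta(h) > p$ for all $h \geq h_p$; equivalently, every tree $t$ with $\height(t) \geq h_p$ satisfies $\beta(t) > p$. This is exactly the consequence already recorded in Theorem~\ref{theorem:monotonic_generalized_sufficient_surjectivity}, so I would simply reuse that bound rather than re-derive it.

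With this choice, requirement (i) follows from Lemma~\ref{lemma:mp_height_lemma}: if $M_p(c) = M_{h_p}(c)$ holds, the lemma produces a tree $t$ with $\height(t) \geq h_p$ and $\alpha(t) = c$, whence $|\alpha^{-1}(c)| = \beta(t) > p$. Requirement (ii) splits on the height of $t$: if $\height(t) \geq h_p$ then Lemma~\ref{lemma:image_of_t} gives $M_{h_p}(\alpha(t)) = M_p(\alpha(t))$; otherwise $\height(t) < h_p$ and its shape lands in $S_p$ by construction. I therefore define $S_p$ to be the set of shapes of all trees of height strictly less than $h_p$, so that (ii) holds automatically.

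The only remaining checks are the finiteness and computability demanded by Definition~\ref{definition:sufficient_surjectivity}, and I expect these to be the sole real obstacle, since the substantive work is already carried out by the preceding lemmas. Finiteness of $S_p$ holds because a full binary tree of height below $h_p$ has boundedly many nodes, so there are only finitely many such shapes; and $S_p$, $h_p = \halpha + p$, and the recursively defined $M_{h_p}$ are all computable from $p$ once $\halpha$ is known. The one point I would state carefully is that Lemma~\ref{lemma:mp_height_lemma} relies on $R_\alpha$ recognizing \emph{exactly} the range of $\alpha$ — an assumption the paper already makes for monotonic catamorphisms — so the conclusion $|\alpha^{-1}(c)| > p$ in step (i) is sound only under that exactness hypothesis; were $R_\alpha$ merely an over-approximation of the range, a spurious $c$ could satisfy $M_p$ without admitting $p$ genuine preimages.
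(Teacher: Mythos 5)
Your proposal is correct and follows essentially the same route as the paper's proof: the same choice of $h_p$ from Theorem~\ref{theorem:monotonic_generalized_sufficient_surjectivity}, the same $S_p = \{\shape(t) \mid \height(t) < h_p\}$, the same $M_p = M_{h_p}$, and the same appeals to Lemmas~\ref{lemma:image_of_t} and~\ref{lemma:mp_height_lemma}. Your added remarks on computability and on $R_\alpha$ needing to recognize the range exactly are sound observations that the paper leaves implicit.
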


\begin{proof}
Let $\alpha$ be a monotonic catamorphism, and let $h_p$ be as given in
Theorem~\ref{theorem:monotonic_generalized_sufficient_surjectivity}.
We define $S_{p} = \{\shape(t)~|~\height(t) < h_{p} \}$, and show that
for each 
tree $t$ either $\shape(t) \in S_{p}$ or that $M_{h_p}\bigl(\alpha(t)\bigr)$
holds for the tree. We partition trees by height. If a tree is shorter
in height than $h_p$ then it is captured by $S_p$. Otherwise, by
Lemma~\ref{lemma:image_of_t}, $M_{h_p}(\alpha(t))$ holds.

Now we need to satisfy the constraints on $S_{p}$ and $M_{h_p}$.
First, we note that the set $S_{p}$ is clearly finite. Second, we show
that $M_{h_p}(c)$ implies $|\alpha^{-1}(c)| > p$. If $M_{h_p}(c)$
holds, then by Lemma~\ref{lemma:mp_height_lemma} there exists a tree
$t$ of height greater than or equal to $h_{p}$ such that $c =
\alpha(t)$, and then as in
Theorem~\ref{theorem:monotonic_generalized_sufficient_surjectivity},
$|\alpha^{-1}(c)| > p$. \qed
\end{proof}

\newcommand{\almostid}{id^{*}}

We next demonstrate that there are sufficiently surjective
catamorphisms that are not monotonic. Consider the following ``almost
identity'' catamorphism $\almostid$ that maps a tree of unit elements (i.e., null)
to a pair of its height and another unit-element tree:
\begin{equation*}
   \almostid(t) =
     \begin{cases}
     \bigl\langle 0, \Leaf \bigr\rangle  &\text{if $t = \Leaf$}\\
     \bigl\langle h, tt \bigr\rangle &\text{if $t = \Node(t_L, (), t_R)$}
     \end{cases}
\end{equation*}
where:
\begin{align*}
 h &= 1 + \max\{\almostid(t_L).\emph{first}, \almostid(t_R).\emph{first}\}  \\
 tt&= \begin{cases}
     \Node(\almostid(t_L).\emph{second}, (), \almostid(t_R).\emph{second})  &\text{if $h$ is odd}\\
     \Node(\almostid(t_L).\emph{second}, (), \Leaf) &\text{if $h$ is even}
     \end{cases}
\end{align*}

If the height of the tree is odd, then it returns a tree with the same
top-level structure as the input tree. If the height is even, it
returns a tree whose left side is the result of the catamorphism and
whose right side is $\Leaf$.

\newcommand{\stcount}{count_{st}}
\newcommand{\subtrees}{st}

\begin{definition}
\label{definition:subtrees}
Let $\subtrees(h)$ be the set of all trees with unit element of height
less or equal to $h$. We can construct $\subtrees(h)$ as follows:
\begin{equation*}
   \subtrees(h) =
     \begin{cases}
        {\Leaf}  &\text{if $h = 0$}\\
        \{ \Node(l, (), r)~|~l,r \in \subtrees(h-1)\}~\cup~\{\Leaf\} &\text{if $h > 0$}
     \end{cases}
\end{equation*}
\end{definition}

\begin{definition}
\label{definition:stcount}
Let $\stcount(h) = |\subtrees(h)|$, i.e., the size of the set of all trees of unit element of
height less than or equal to $h$.
\end{definition}

\begin{corollary}
\begin{equation*}
   \stcount(h) =
     \begin{cases}
        1  &\text{if $h = 0$}\\
        \bigl(\stcount(h-1)\bigr)^2 + 1 &\text{if $h > 0$}
     \end{cases}
\end{equation*}
\end{corollary}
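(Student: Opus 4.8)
The plan is to prove the recurrence by unfolding Definitions~\ref{definition:subtrees} and~\ref{definition:stcount} and performing a direct cardinality count, case-splitting on whether $h = 0$. Since $\stcount(h)$ is defined as $|\subtrees(h)|$, the corollary is really just a statement about the sizes of the sets constructed in Definition~\ref{definition:subtrees}, so no induction is needed beyond reading off the recursive definition one level.

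For the base case $h = 0$, Definition~\ref{definition:subtrees} gives $\subtrees(0) = \{\Leaf\}$, so $\stcount(0) = |\{\Leaf\}| = 1$, matching the claimed value.

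For $h > 0$, I would write $\subtrees(h)$ as the union of $A = \{\Node(l, (), r)~|~l, r \in \subtrees(h-1)\}$ and $B = \{\Leaf\}$, and argue that $|\subtrees(h)| = |A| + |B|$. This rests on two observations. First, the map $(l, r) \mapsto \Node(l, (), r)$ from $\subtrees(h-1) \times \subtrees(h-1)$ onto $A$ is a bijection: it is surjective by the definition of $A$, and injective because the constructor $\Node$ is injective, so distinct pairs of children (with the fixed unit element $()$) yield distinct trees. Hence $|A| = |\subtrees(h-1)|^2 = \bigl(\stcount(h-1)\bigr)^2$. Second, $A$ and $B$ are disjoint, since every element of $A$ is built with the constructor $\Node$ while $\Leaf$ is a distinct constructor, whence $\Leaf \notin A$. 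Combining these, $\stcount(h) = |A| + |B| = \bigl(\stcount(h-1)\bigr)^2 + 1$, as required.

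The argument is entirely routine; the only points that require care are the injectivity of $\Node$ (which is what justifies squaring rather than merely bounding the count) and the disjointness of the $\Leaf$ case from the $\Node$ case (which justifies adding exactly one). Both facts follow immediately from the standard properties of algebraic data types, namely that constructors are injective and that distinct constructors produce distinct terms, so I anticipate no genuine obstacle.
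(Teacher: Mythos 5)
Your proof is correct and follows essentially the same route as the paper, which simply unfolds Definition~\ref{definition:subtrees} and counts $|\subtrees(h)| = |\subtrees(h-1)|^2 + 1$; you merely make explicit the bijection and disjointness facts that the paper leaves implicit. Your observation that no induction is needed is also fair --- the paper frames the argument as induction on $h$, but its ``inductive step'' never actually invokes the inductive hypothesis, only the definitions.
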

\begin{proof}
Induction on $h$. The base case is trivial. For the inductive case, we
have
\begin{align*}
\stcount(h+1) &= |\subtrees(h+1)| \\
&= \bigl|\{ \Node(l, (), r)~|~l,r \in \subtrees(h)\}\bigr| + 1 \\
&= |\subtrees(h)|^2 + 1 \\
&= \bigl(\stcount(h)\bigr)^2 + 1
\end{align*}
Thus the equation holds for all $h$. \qed
\end{proof}

Note that the number of trees grows quickly; e.g., the values for $h
\in 0..5$ are $1, 2, 5, 26, 677, 458330$.

\begin{theorem}
\label{thm:height_is_ss}
The $\almostid$ catamorphism is sufficiently surjective.
\end{theorem}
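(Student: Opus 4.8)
The plan is to verify the three ingredients of Definition~\ref{definition:sufficient_surjectivity} directly, exploiting the fact that $\almostid$ discards the right subtree at every node of even height, so that tall trees necessarily have enormous inverse images. For each $p \in \naturalnums^+$ I will choose a threshold height $h_p$, take $S_p = \{\shape(t) \mid \height(t) < h_p\}$ (clearly finite, since there are finitely many shapes below a fixed height), and let $M_p(c) \equiv R_{\almostid}(c) \land (c.\emph{first} \geq h_p)$, where $c.\emph{first}$ is the first projection of the codomain pair and $R_{\almostid}$ recognizes the range of $\almostid$. The range conjunct is essential: because $\almostid$ is far from surjective (e.g.\ $\langle h, \Leaf\rangle$ has no preimage for $h > 0$), a naive predicate ``$c.\emph{first} \geq h_p$'' would hold at values $c$ with $|\almostid^{-1}(c)| = 0$, violating the requirement that $M_p(c)$ imply $|\almostid^{-1}(c)| > p$. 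This mirrors exactly the range-restriction issue emphasized in Section~\ref{section:revised_unrolling_procedure}.

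The technical heart is a lower bound of the form ``$\height(t) = h \Rightarrow \beta(t) \geq \stcount(h-3)$'', or any quantity growing with $h$; the exact constant is unimportant. First observe that $\almostid(t).\emph{first} = \height(t)$, so every preimage of a given value shares the same height. To bound $\beta(t)$ from below I locate a node of even height in $t$: if $h$ is even the root works, and if $h$ is odd I descend to whichever child attains height $h-1$, which is even. At that even-height node, with subtree $\Node(u_L,(),u_R)$, the second component produced is $\Node(\almostid(u_L).\emph{second},(),\Leaf)$, so $u_R$ is completely forgotten. Fixing $u_L$ and replacing $u_R$ by any tree whose height keeps the local height (hence the global height and the rest of the value) unchanged yields distinct trees with the same $\almostid$ value. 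By Definitions~\ref{definition:subtrees} and~\ref{definition:stcount} the number of admissible replacements grows with $h$, giving the desired bound.

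With the bound in hand, assembling Definition~\ref{definition:sufficient_surjectivity} is routine. Given $p$, choose $h_p$ large enough that $\stcount(h_p - 3) > p$, which is possible since $\stcount$ grows doubly-exponentially (values $1, 2, 5, 26, 677, 458330, \dots$). For the first condition: if $M_p(c)$ holds then $c$ is in the range of $\almostid$ with $c.\emph{first} \geq h_p$, so $c = \almostid(t)$ for some $t$ with $\height(t) \geq h_p$; every preimage of $c$ also has height $\geq h_p$, and the lower bound gives $|\almostid^{-1}(c)| = \beta(t) \geq \stcount(h_p - 3) > p$. For the second condition: given any tree $t$, if $\height(t) < h_p$ then $\shape(t) \in S_p$, and otherwise $\almostid(t)$ is in range with first component $\height(t) \geq h_p$, so $M_p(\almostid(t))$ holds.

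I expect the main obstacle to be the parity bookkeeping in the lower bound: verifying carefully that the replaced subtree can be varied over a set whose size grows with $h$ while simultaneously preserving (i) the overall height and (ii) the entire second component outside the forgotten position, and that distinct replacements yield distinct trees. The even-height case is immediate, but the odd-height case requires descending one level and checking that the variation inside the even-height child leaves the preserved top-level structure of the value intact. A secondary, more cosmetic concern is justifying that $R_{\almostid}$ and the projection $c.\emph{first}$ are expressible as a closed formula in the collection theory, as required by Definition~\ref{definition:sufficient_surjectivity}; this follows from the same range-recognizer machinery the framework already assumes for every catamorphism.
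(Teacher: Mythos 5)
Your proposal is correct and follows essentially the same route as the paper's proof: the same choice of $S_p$ (shapes of bounded height) and of $M_p$ as a height threshold conjoined with the range recognizer $R_\alpha$, and the same counting argument that locates an even-height node (the root if $\height(t)$ is even, a child of height $\height(t)-1$ otherwise) whose forgotten right subtree can be varied over $\stcount(\cdot)$-many trees, with the rapid growth of $\stcount$ supplying the bound. The only differences are cosmetic: the paper hard-codes the threshold as $p+5$ and the bound as $\stcount(h-2)$ rather than leaving $h_p$ and the offset generic.
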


\begin{proof}
For a given natural number $p$, we choose $S_{p}$ and $M_{p}$ as follows:
\begin{itemize}
    \item $S_{p} = \{t~|~\height(t) \leq p + 5\}$, and
    \item $M_{p}(\langle h, t \rangle) = h > p + 5 \land R_{\alpha}(h, t)$
\end{itemize}

\noindent where:
\begin{equation*}
   R_{\alpha}(h, t) = (\height(t) = h \land \shape(t, h))
\end{equation*}
and:
\begin{equation*}
 \small
   \shape(t, k) =
     \begin{cases}
        \kw{true}  &\text{if $t = \Leaf$}\\
        \shape(t_L, k-1) \land \shape(t_R, k-1) & \text{if $k$ is odd and $t = \Node(t_L, (), t_R)$} \\
        \shape(t_L, k-1) \land t_R = \Leaf & \text{if $k$ is even and $t = \Node(t_L, (), t_R)$} \\
     \end{cases}
\end{equation*}

The $R_{\alpha}$ function is the (computable) recognizer of $\langle
h, t \rangle$ pairs in the range of $\alpha$. It is obvious that
$S_{p}$ is finite, and that either $t \in S_{p}$ or
$M_{p}(\alpha(t))$.

Next, we must show that if $M_{p}(\langle h, t \rangle)$ then
$|\alpha^{-1}(\langle h, t \rangle)| > p$. We note that if $h$ is
even, then $t$ has a right-hand subtree that is $\Leaf$, and there are
$\stcount(h-1)$ such subtrees that can be mapped to $\Leaf$ via the
catamorphism. Similarly, if $h$ is odd, then one of $t$'s children
will have a right-hand subtree that is $\Leaf$ (note that $h > 5$, thus there exists such a subtree), so there are at least
$\stcount(h-2)$ such subtrees. Finally, we note that for all values $k
> 5$, $\stcount(k-1) > \stcount(k-2) > k$, so if $h > p + 5$,
$|\alpha^{-1}(\langle h, t \rangle)|~\geq \stcount(h-2) > h > p + 5 > p$. \qed
\end{proof}

\begin{theorem}
\label{thm:height_is_not_mono}
The $\almostid$ catamorphism is not monotonic.
\end{theorem}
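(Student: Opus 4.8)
The plan is to argue by contradiction using the function $\minbeta$ (Definition~\ref{definition:minbeta}) together with the monotone property it must satisfy for any monotonic catamorphism (Lemma~\ref{lemma:minbeta_monotonic}). The first observation I would record is that $\beta(t)$ is always \emph{finite} for $\almostid$: since $\almostid(t).\emph{first} = \height(t)$, every tree in $\almostid^{-1}\bigl(\almostid(t)\bigr)$ has the same height as $t$, and there are only finitely many unit-element trees of a given height. Consequently $\minbeta(h) < \infty$ for every $h$, so the ``$\beta(t) = \infty$'' disjunct in Definition~\ref{definition:mono_cata} never helps, and Lemma~\ref{lemma:minbeta_monotonic} would force $\minbeta(h) < \minbeta(h+1)$ for all $h \geq \halpha$ if $\almostid$ were monotonic. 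It therefore suffices to exhibit arbitrarily large heights at which $\minbeta$ fails to strictly increase.

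The key step is a ``wrapping'' construction showing $\minbeta(2k+1) \leq \minbeta(2k)$ for every $k$. Given any tree $S$ with $\height(S) = 2k$, form $W = \Node(\Leaf, (), S)$, which has odd height $2k+1$. I claim $\beta(W) = \beta(S)$, which immediately yields $\minbeta(2k+1) \leq \beta(W) = \beta(S)$, and minimizing over $S$ of height $2k$ gives $\minbeta(2k+1) \leq \minbeta(2k)$. To prove $\beta(W) = \beta(S)$ I would establish a bijection between $\almostid^{-1}\bigl(\almostid(W)\bigr)$ and $\almostid^{-1}\bigl(\almostid(S)\bigr)$. Because $2k+1$ is odd, no information is discarded at the root of $W$, so $\almostid(W).\emph{second} = \Node\bigl(\Leaf, (), \almostid(S).\emph{second}\bigr)$; hence any preimage $W' = \Node(t_L', (), t_R')$ of $\almostid(W)$ must satisfy $\almostid(t_L').\emph{second} = \Leaf$ and $\almostid(t_R').\emph{second} = \almostid(S).\emph{second}$. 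Since a non-leaf always yields a \Node~second component, $\almostid(s).\emph{second} = \Leaf$ holds iff $s = \Leaf$, which forces $t_L' = \Leaf$. The preserved height component then pins $\height(t_R') = 2k$, so $W' \mapsto t_R'$ is a bijection onto the height-$2k$ trees sharing the \emph{second} component of $S$, i.e. exactly $\almostid^{-1}\bigl(\almostid(S)\bigr)$.

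Finally I would close the argument: given any candidate $\halpha$, pick $k$ with $2k \geq \halpha$; then $\minbeta(2k) < \infty$ but $\minbeta(2k) \geq \minbeta(2k+1)$, contradicting the strict inequality $\minbeta(2k) < \minbeta(2k+1)$ demanded by Lemma~\ref{lemma:minbeta_monotonic}. Hence no $\halpha$ witnesses monotonicity and $\almostid$ is not monotonic. Equivalently, one can negate Definition~\ref{definition:mono_cata} directly: for each $\halpha$, any minimizer $t$ of $\beta$ among height-$(2k+1)$ trees has finite $\beta(t) = \minbeta(2k+1) \leq \minbeta(2k)$, so no tree of height $2k = \height(t) - 1$ has strictly smaller $\beta$. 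The main obstacle is the bijection in the wrapping step: I must be careful that the odd parity of the root guarantees \emph{no} discarding at the top, that the left child is rigidly pinned to $\Leaf$ via the ``$\almostid(s).\emph{second} = \Leaf$ iff $s = \Leaf$'' fact, and that the preserved height component exactly fixes $\height(t_R') = 2k$, so that the correspondence with $\almostid^{-1}\bigl(\almostid(S)\bigr)$ is a genuine bijection rather than merely an injection.
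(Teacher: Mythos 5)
Your proof is correct, and at the top level it pursues the same strategy as the paper's: for arbitrarily large heights, wrap the $\beta$-minimizer at height $h$ with a $\Leaf$ sibling to obtain a tree of height $h+1$ whose $\beta$-value is no larger, contradicting Definition~\ref{definition:mono_cata} (equivalently, the strict growth of $\minbeta$ guaranteed by Lemma~\ref{lemma:minbeta_monotonic}). The difference is the parity and the side on which you wrap, and this difference matters. The paper chooses an \emph{odd} $h_0 \geq \halpha$ and forms $t_{bad,h_0+1} = \Node(t_{min,h_0}, (), \Leaf)$ of \emph{even} height, asserting that every preimage of $\almostid(t_{bad,h_0+1})$ has a $\Leaf$ right subtree so that left-subtree extraction is a bijection onto $\almostid^{-1}\bigl(\almostid(t_{min,h_0})\bigr)$. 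But at even heights $\almostid$ replaces the right component of the output by $\Leaf$ \emph{regardless} of $t_R$, so the right subtree of a preimage is not in fact forced to be $\Leaf$ (already for $h_0 = 1$, both $\Node(\Node(\Leaf,(),\Leaf),(),\Leaf)$ and $\Node(\Node(\Leaf,(),\Leaf),(),\Node(\Leaf,(),\Leaf))$ are preimages of the same value), so the paper's parenthetical justification does not hold as stated and the extraction map is not injective. You go in the opposite direction, from even height $2k$ to odd height $2k+1$ via $W = \Node(\Leaf, (), S)$, which is exactly the parity at which nothing is discarded: the constraint $\almostid(t_L').\emph{second} = \Leaf$ genuinely pins $t_L' = \Leaf$, the recorded height pins $\height(t_R') = 2k$, and your map $W' \mapsto t_R'$ is a true bijection. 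So your proof is not merely a cosmetic variant — it is the version of the wrapping argument in which the key bijection actually goes through, and your closing step (either via Lemma~\ref{lemma:minbeta_monotonic} or by negating Definition~\ref{definition:mono_cata} directly at the height-$(2k+1)$ minimizer) is sound.
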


\begin{proof} We will prove this theorem by contradiction.
Suppose $\almostid$ was monotonic for some height $h_{\alpha}$.  First, we note that for all trees $t\in \rdttype$ where element type is unit, $\beta(t)$ is finite, bounded by the finite number of trees of height $\height(t)$.

We choose an arbitrary odd height $h_0$ such that $h_0 \geq h_{\alpha}$. Let $t_{min,h_0}$ be the tree of height $h_0$ such that $\beta(t_{min,h_0}) = \minbeta(h_0)$, which means:
\[\forall t \in \tau, \height(t) = h_0: \beta(t) \geq \beta(t_{min,h_0})\]
We can extend the tree $t_{min,h_0}$ to a new tree $t_{bad,h_0+1} = \Node(t_{min,h_0}, (), \Leaf)$, which has (even) height $h_0+1$. %Recall that the catamorphism $\almostid$, for odd-height trees,
We can construct a bijection from every tree $t' \in \alpha^{-1}\bigl(\alpha(t_{bad,h_0+1})\bigr)$ to a tree in $\alpha^{-1}\bigl(\alpha(t_{min,h_0})\bigr)$ by extracting the left subtree of $t'$ (by construction, every right subtree of a tree in $\alpha^{-1}\bigl(\alpha(t_{bad,h_0+1})\bigr)$ is $\Leaf$).  Therefore, $\beta(t_{bad,h_0+1}) = \beta(t_{min,h_0})$.
Due to the construction of $t_{min,h_0}$, we have:
\[\forall t \in \rdttype, \height(t) = h_0: \beta(t) \geq \beta(t_{bad,h_0+1})\]
which implies that we cannot find any tree of height $h_0$ to satisfy the condition for monotonic catamorphisms in Definition \ref{definition:mono_cata} for tree $t_{bad, h_0+1}$ of height $h_0 + 1$.
\qed
\end{proof}

\iffalse
Suppose $t$ is of height 1.  (1, {1})
Suppose $t$ is of height 2.  (2, {1})
Suppose $t$ is of height 3.  (3, {1})
Suppose $t$ is of height 4.  Then we get (4, {5})
Suppose $t$ is of height 5.  Then we get (5, {5})
Suppose $t$ is of height 6.  Then we get (6, {677})
Suppose $t$ is of height 7.  Then we get (7, {677})
Suppose $t$ is of height 8.  Then we get (8, {26})

Suppose p = 1.  (1, {1}) => S_p,  (2, {1}) => S_p,  (3, {2}) => M_p,
Suppose p = 2.  (1, {1}) => S_p,  (2, {1}) => S_p,  (3, {2}) => S_p, (4, {2}) => S_p, (5, {5}) => M_p,
\fi
% For example, with the tree $t$ in Fig. \ref{fig:tree_example}, $\emph{HgtMultiset}(t) = \langle 2, \{1, 2\} \rangle$
%
% Our goal is to prove that the $\emph{HgtMultiset}$ catamorphism is sufficiently surjective but not monotonic.

%\qed
%\end{proof}

%%  LocalWords:  stAt Catamorphisms Surjective catamorphisms tt ok
%%  LocalWords:  surjective catamorphism surjectivity RecR subtrees
%%  LocalWords:  computable subtree bijection

\subsection{A Note on Combining Monotonic Catamorphisms}
\label{section:combine_mono_catas}
One might ask if it is possible to have multiple monotonic
catamorphisms in the input formula while still maintaining the
completeness of the decision procedure. In general, when we combine
multiple monotonic catamorphisms, the resulting catamorphism might not
be monotonic or even GSS; therefore, the completeness of the decision
procedure is not guaranteed. For example, consider the monotonic
catamorphisms $\emph{List}_\emph{preorder}$ and $\emph{Sortedness}$ (their
definitions are in Table \ref{table:POPL_catas}). For any $h \in
\naturalnums^+$ we can construct a right skewed tree tree $t$ of
height $h$ as follows:
\[\Node\Bigl(\Leaf, 1, \Node\bigl(\Leaf, 2, \Node(\Leaf, 3, \ldots \Node(\Leaf, h , \Leaf))\bigr)\Bigr)\]
\noindent The values of $\emph{List}_\emph{preorder}(t)$ and $\emph{Sortedness}(t)$ are as follows:
\begin{itemize}
 \item $\emph{List}_\emph{preorder}(t) = (1~2~\ldots~h)$ -- i.e., the element values are $1, 2, \ldots, h$.
 \item $\emph{Sortedness}(t) = (1, h, \kw{true})$ -- i.e., $\min=1, \max=h$, and $t$ is sorted.
\end{itemize}
Let $\alpha$ be the combination of these catamorphisms and let $\beta$ be defined as in Definition \ref{definition:beta}.
We have $\beta(t) = 1$ as $t$ is the only tree that can map to the
values of $\emph{List}_\emph{preorder}(t)$ and $\emph{Sortedness}(t)$ above.
Thus, $\alpha$ cannot be monotonic or even GSS.

Although the combinability is not a feature that monotonic
catamorphisms can guarantee, Section \ref{section:assoc_cata} presents
a subclass of monotonic catamorphisms, called associative
catamorphisms, that supports the combination of catamorphisms in our
procedure.

%%  LocalWords:  GSS Sortedness

%%  LocalWords:  Suter et al

\section{Associative Catamorphisms}
\label{section:assoc_cata}
We have presented an unrolling-based decision procedure that is
guaranteed to be both sound and complete with GSS catamorphisms (and
therefore also with sufficiently surjective and monotonic
catamorphisms). When it comes to catamorphisms, there are many
interesting open problems, for example: when is it possible to combine
catamorphisms in a complete way, or how computationally expensive is
it to solve catamorphism problems? This section attempts to
characterize a useful class of ``combinable" GSS catamorphisms that
maintain completeness under composition.

We name this class \emph{associative} catamorphisms due to the
associative properties of the operator used in the catamorphisms.
Associative catamorphisms have some very powerful important
properties: they are detectable\footnote{{\em detectable} in this
  context means that it is possible to determine whether or not a
  catamorphism is an associative catamorphism using an SMT solver.},
combinable, and impose an exponentially small upper bound on the
number of unrollings. Many catamorphisms presented so far are in fact
associative.

\begin{definition}[Associative catamorphism]
\label{definition:assoc_cata}
A catamorphism $\alpha:\rdttype \rightarrow \collection$ is
associative if
\begin{equation*}
\alpha(\Node(t_L, e, t_R)) = \alpha(t_L) \operator \etoc(e) \operator \alpha(t_R)
\end{equation*}
where $\operator:(\collection, \collection)\rightarrow \collection$ is
an associative binary operator. Here, $\etoc: \elem \rightarrow
\collection$ is a function that maps\footnote{For instance, if $\elem$
  is \kw{Int} and $\collection$ is \kw{IntSet}, we can have $\etoc(e)
  = \{e\}$.} an element value in $\elem$ to a corresponding value in
$\collection$.
\end{definition}

%% Like general catamorphisms, associative catamorphisms are fold
%% functions mapping the content of a tree in the logic to a value in a
%% collection domain where a decision procedure is assumed to be
%% available. The main difference in the presentation between associative
%% catamorphisms and general catamorphisms is that the \textsf{combine}
%% function is replaced by an associative operator $\operator$ and
%% function $\etoc$.

Associative catamorphisms are detectable. A catamorphism, written in
the format in Definition \ref{definition:assoc_cata}, is associative
if the $\operator$ operator is associative. This condition can be
easily proved by SMT solvers \cite{Barrett2011CVC4,DeMoura2008ZES} or
theorem provers such as ACL2 \cite{kaufmann2000computer}. Also,
because of the associative operator $\operator$, the value of an
associative catamorphism for a tree is independent of the shape of the
tree.

We present associative catamorphisms syntactically in
Definition~\ref{definition:assoc_cata}. They can also be described
semantically by requiring that $\alpha$ is preserved by tree rotations:
\begin{equation*}
\alpha\bigl(\Node(t_1, e_1, \Node(t_2, e_2, t_3))\bigr) =
\alpha\bigl(\Node(\Node(t_1, e_1, t_2), e_2, t_3))\bigr)
\end{equation*}
This is still detectable by checking the satisfiability the
corresponding query:
\begin{multline*}
R_\alpha(c_1) \land R_\alpha(c_2) \land R_\alpha(c_3) \land~\\
\kw{combine}(c_1, e_1, \kw{combine}(c_2, e_2, c_3)) \neq
\kw{combine}(\kw{combine}(c_1, e_1, c_2), e_2, c_3)
\end{multline*}
This semantic definition of associativity is broader than the purely
syntactic one (because it does not depend on the associative binary operator $\operator$), but is less intuitive. We work with the syntactic
definition in this section, but the main results over to the semantic
definition as well.

\begin{corollary}[Values of associative catamorphisms]
\label{corollary:assoc_cata_depends_elements}
The value of $\alpha(t)$, where $\alpha$ is an associative
catamorphism, only depends on the ordering and values of elements in
$t$. In particular, $\alpha(t)$ does not depend on the shape of the
tree:
\begin{equation*}
\alpha(t) = \alpha(\Leaf) \operator \etoc(e_1) \operator \alpha(\Leaf)
\operator \etoc(e_2) \operator \cdots \operator \alpha(\Leaf)
\operator \alpha(e_n) \operator \alpha(\Leaf)
\end{equation*}
where $e_1, e_2, \ldots, e_n$ is the in-order listing of the elements
of the nodes of $t$. When $t = \Leaf$, we simply have $\alpha(t) = \alpha(\Leaf)$.
\end{corollary}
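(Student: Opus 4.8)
The plan is to prove this by structural induction on $t$, using the defining equation $\alpha(\Node(t_L, e, t_R)) = \alpha(t_L) \operator \etoc(e) \operator \alpha(t_R)$ from Definition~\ref{definition:assoc_cata} together with the associativity of $\operator$, which lets me drop all parentheses in the intermediate expressions and treat the right-hand side as a single well-defined chain of $\operator$-products. In the base case $t = \Leaf$ the in-order listing of elements is empty ($n = 0$), so the claimed product collapses to the single factor $\alpha(\Leaf)$, which is exactly $\alpha(\Leaf) = \alpha(t)$; this also discharges the final sentence of the statement.

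For the inductive step I would take $t = \Node(t_L, e, t_R)$ and apply the defining equation to get $\alpha(t) = \alpha(t_L) \operator \etoc(e) \operator \alpha(t_R)$. Writing $e_1, \ldots, e_k$ for the in-order elements of $t_L$ and $e_{k+2}, \ldots, e_n$ for those of $t_R$ (so that $e = e_{k+1}$ and $e_1, \ldots, e_n$ is precisely the in-order listing of $t$), the induction hypothesis expresses each of $\alpha(t_L)$ and $\alpha(t_R)$ as its corresponding flattened product. Substituting these in and invoking associativity of $\operator$ to erase all grouping produces a single chain. The crucial observation is that $\alpha(t_L)$ ends in a factor $\alpha(\Leaf)$ while $\alpha(t_R)$ begins with a factor $\alpha(\Leaf)$, so inserting $\etoc(e)$ between them yields exactly the pattern $\cdots \operator \alpha(\Leaf) \operator \etoc(e) \operator \alpha(\Leaf) \operator \cdots$ demanded by the target formula, giving the flattened product for $t$.

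The main obstacle — indeed essentially the only subtlety — is the bookkeeping of the boundary $\alpha(\Leaf)$ factors: I must verify that the trailing $\alpha(\Leaf)$ from the left subtree and the leading $\alpha(\Leaf)$ from the right subtree are precisely the two $\alpha(\Leaf)$ factors flanking $\etoc(e)$ in the target expression, with no factor duplicated or dropped. Associativity of $\operator$ is what makes this clean, since the fully flattened form is well-defined independently of how the subexpressions were originally parenthesized; thus concatenating the two subtree products with $\etoc(e)$ inserted requires no reassociation beyond a single appeal to Definition~\ref{definition:assoc_cata}. Because the argument depends only on the in-order sequence $e_1, \ldots, e_n$ and never on how $t$ is branched, the independence of $\alpha(t)$ from the shape of $t$ follows immediately.
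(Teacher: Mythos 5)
Your proof is correct and follows exactly the route the paper takes: the paper's entire proof is the single line ``Straightforward induction on the structure of $t$,'' and your proposal is simply that induction carried out in full, with the base case $t = \Leaf$ and the inductive step that flattens $\alpha(t_L) \operator \etoc(e) \operator \alpha(t_R)$ via associativity while correctly accounting for the boundary $\alpha(\Leaf)$ factors. Nothing is missing; you have merely supplied the details the authors left implicit.
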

\begin{proof}
Straightforward induction on the structure of $t$. \qed
\end{proof}

\begin{example}[Associative catamorphisms]
\label{example:assoc_catas}
In Table \ref{table:POPL_catas}, \emph{Height}, \emph{Some},
$\emph{List}_\emph{preorder}$ and $\emph{List}_\emph{postorder}$ are not
associative because their values depend on the shape of the tree.

The other catamorphisms in Table \ref{table:POPL_catas} are
associative, including \emph{Set}, \emph{Multiset}, \emph{SizeI},
$\emph{List}_\emph{inorder}$, \emph{Min}, and \emph{Sortedness} (both
with and without duplicates). The $\DW$ catamorphism in Section
\ref{section:dp_by_example} is also associative, where the operator
$\operator$ is $+$ and the mapping function is $\etoc(e) =
\bigl({\ite}~\dirty(e)~1~0\bigr)$. For $\emph{Multiset}$, the two
components are $\uplus$ and $\etoc(e) = \{e\}$.

Furthermore, we can define associative catamorphisms based on
associative operators such as $+, \cap, \max, \vee, \wedge$, etc. We
can also use a user-defined function as the operator in an associative
catamorphism. For example, the catamorphism \emph{Leftmost} which
finds the leftmost element value in a tree is associative where
$\etoc(e) = \Some(e)$, $\alpha(\Leaf) = \None$, and $\operator$ is
defined by
\begin{align*}
\None \operator \None &= \None \\
\Some(e) \operator \None &= \Some(e) \\
\None \operator \Some(e) &= \Some(e) \\
\Some(e_L) \operator \Some(e_R) &= \Some(e_L)
\end{align*}
The symmetrically defined \emph{Rightmost} catamorphism is also associative.

We do not require that $\alpha(\Leaf)$ is an identity for the operator
$\operator$, though it often is in practice. An example where it is
not is the \emph{Size} catamorphism which computes the total size of a
tree (rather than just the number of internal nodes computed by
\emph{SizeI}). In this case we have $\etoc(e) = 1$, $\alpha(\Leaf) =
1$, and operator $\operator$ is $+$. \exampleEndMark
\end{example}

\subsection{The Monotonicity of Associative Catamorphisms}
\label{section:assoc_cata_are_monotonic}

This section shows that associative catamorphisms are monotonic, and
therefore sufficiently surjective and GSS.

\begin{theorem}
\label{theorem:assoc_cata_are_monotonic}
Associative catamorphisms are monotonic.
\end{theorem}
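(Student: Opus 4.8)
The plan is to verify Definition~\ref{definition:mono_cata} directly with the constant $\halpha = 2$, exploiting the shape-independence of associative catamorphisms established in Corollary~\ref{corollary:assoc_cata_depends_elements}. That corollary tells me $\alpha(t)$ equals the ``flat'' fold $\alpha(\Leaf) \operator \etoc(e_1) \operator \alpha(\Leaf) \operator \cdots \operator \etoc(e_n) \operator \alpha(\Leaf)$, determined solely by the in-order listing $(e_1, \dots, e_n)$ of the elements of $t$ and not by its shape. I take an arbitrary tree $t$ with $\height(t) = h \geq 2$; since any full binary tree of height $h$ contains at least $h$ internal nodes along a deepest root-to-leaf path, its listing has length $n \geq h \geq 2$. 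If $\beta(t) = \infty$ the first disjunct of Definition~\ref{definition:mono_cata} holds and there is nothing to prove, so I assume $\beta(t)$ is finite and produce a witness $t_0$ of height $h-1$ with $\beta(t_0) < \beta(t)$.

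The central construction is to split the listing of $t$ at position $h$. I let $t_0$ be the left-skewed tree of height $h-1$ whose listing is the prefix $(e_1, \dots, e_{h-1})$ (exactly $h-1$ internal nodes, so the height is realizable), and I let $R$ be any tree realizing the suffix $(e_{h+1}, \dots, e_n)$, with $R = \Leaf$ when $n = h$. Applying associativity to the flat forms of $t_0$, $R$, and $t$ yields the identity $\alpha(t) = \alpha(t_0) \operator \etoc(e_h) \operator \alpha(R)$: the single $\alpha(\Leaf)$ ending $t_0$'s flat form and the one starting $R$'s flat form line up exactly with the interleaved leaves of $t$'s flat form. I then define the map $\iota(t') = \Node(t', e_h, R)$ from $\alpha^{-1}(\alpha(t_0))$ to $\alpha^{-1}(\alpha(t))$. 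It is injective because distinct left children give distinct trees, and it lands in the target because $\alpha(\Node(t', e_h, R)) = \alpha(t') \operator \etoc(e_h) \operator \alpha(R) = \alpha(t_0) \operator \etoc(e_h) \operator \alpha(R) = \alpha(t)$ whenever $\alpha(t') = \alpha(t_0)$. Hence $\beta(t_0) \leq \beta(t)$.

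To upgrade this to a strict inequality I will show $\iota$ is not surjective by exhibiting a tree $w \in \alpha^{-1}(\alpha(t))$ whose root's right subtree differs from $R$, noting that every tree in the image of $\iota$ has right subtree exactly $R$. Any $w$ sharing the listing $(e_1, \dots, e_n)$ satisfies $\alpha(w) = \alpha(t)$, so I need only control its shape: when $n > h$ I take $w$ left-skewed, whose right subtree is $\Leaf \neq R$ (here $R \neq \Leaf$, since its listing is nonempty); when $n = h$ and hence $R = \Leaf$, I take $w$ right-skewed, whose right subtree is nonempty and therefore $\neq \Leaf$, using $n \geq 2$. Either way $w$ lies outside the image, so $\beta(t_0) < \beta(t)$. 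This establishes monotonicity with $\halpha = 2$, and combined with Theorems~\ref{theorem:monotonic_generalized_sufficient_surjectivity} and~\ref{theorem:mono_catas_are_suff_surj} it also yields that associative catamorphisms are GSS and sufficiently surjective.

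I expect the main obstacle to be the height bookkeeping rather than the algebra. A naive attempt to drop a single element from $t$ fails precisely in the balanced case, where height $h-1$ cannot accommodate the required $n-1$ internal nodes; the fix is to split the listing at position $h$ and absorb the entire tail into the fixed appended subtree $R$, so that $t_0$ needs only the minimum $h-1$ internal nodes for its height. The one genuinely delicate point is the strictness argument when $n = h$ forces $R = \Leaf$, which is why I switch the orientation of the non-surjectivity witness $w$ between the two cases; both cases rely on $n \geq 2$, which is exactly what $\halpha = 2$ guarantees.
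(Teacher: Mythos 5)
Your proof is correct, and it follows the same high-level strategy as the paper's: fix $\halpha = 2$, dispose of the $\beta(t) = \infty$ case, and then establish $\beta(t_0) < \beta(t)$ for a height-$(h{-}1)$ witness by exhibiting an injection $t' \mapsto \Node(t', e, R)$ from $\alpha^{-1}(\alpha(t_0))$ into $\alpha^{-1}(\alpha(t))$ together with one element of the codomain missed by that injection. The instantiations differ, though. The paper takes $t_0$ to be the taller child $t_R$ of $t$ itself (so no height bookkeeping is needed: $\height(t_R) = \height(t) - 1$ by choice of branch), maps $t_R' \mapsto \Node(t_L, e, t_R')$, and produces the extra tree by a single rotation $\Node(t_L, e, \Node(t_{RL}, e_R, t_{RR})) \leadsto \Node(\Node(t_L, e, t_{RL}), e_R, t_{RR})$, which is outside the image because its left branch strictly contains $t_L$; associativity of $\operator$ is invoked once, directly on this rotation. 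You instead rebuild $t_0$ from scratch as a left-skewed tree on the prefix $(e_1,\dots,e_{h-1})$ of the in-order listing, which forces you to verify $n \geq h$, to route the argument through Corollary~\ref{corollary:assoc_cata_depends_elements}, and to split into cases ($R = \Leaf$ versus $R \neq \Leaf$) when constructing the non-surjectivity witness. Both arguments are sound; the paper's is more economical because the witness lives inside $t$ and the "one extra tree" is produced uniformly, whereas yours buys a somewhat more explicit picture of why shape-independence forces $\beta$ to grow (any reshaping of the listing stays in the fiber), at the cost of the extra case analysis. One small point worth noting: your construction, like the paper's remark after the theorem, silently depends on binary trees admitting many shapes per listing — which is exactly why the result fails for list-like datatypes — so neither proof generalizes beyond trees, and yours makes that dependence slightly more visible.
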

\begin{proof}
Let $\halpha = 2$. Let $t$ be a tree with $\height(t) \geq 2$. If
$\beta(t) = \infty$ then we are done. Otherwise, suppose $\beta(t) <
\infty$. We want to show that there exists a tree $t_0$ such that
$\height(t_0) = \height(t) - 1$ and $\beta(t_0) < \beta(t)$. Since
$\height(t) \geq 2$ we can write $t = \Node(t_L, e, t_R)$ where either
$\height(t_L) = \height(t) - 1$ or $\height(t_R) = \height(t) - 1$.
Without loss of generality assume $\height(t_R) = \height(t) - 1$; the
argument is symmetric for the other case. We will show
that $\beta(t_R) < \beta(t)$ so that $t_R$ satisfies the conditions
required for $t_0$.

There are $\beta(t_R)$ trees that map to $\alpha(t_R)$. For each such
tree $t_R'$ the tree $t' = \Node(t_L, e, t_R')$ maps to $\alpha(t)$.
The distinctness of each $t_R'$ ensures that each $t'$ is also
distinct. Now all we need to find is one additional tree which maps to
$\alpha(t)$ but is not one of the $t'$ above. Since $\height(t) \geq
2$, we know $\height(t_R) \geq 1$ and can write $t_R = \Node(t_{RL},
e_R, t_{RR})$. Consider the rotated tree $t_{new} = \Node(\Node(t_L,
e, t_{RL}), e_R, t_{RR})$. This tree is distinct from the $t'$ trees
above since the left branches are distinct: $\Node(t_L, e, t_{LR})
\neq t_L$. Moreover, since $\operator$ is associative we have
\begin{align*}
\alpha(t_{new}) &= \bigl(\alpha(t_L) \operator \etoc(e) \operator
\alpha(t_{RL})\bigr) \operator \etoc(e_R) \operator \alpha(t_{RR}) \\
&= \alpha(t_L) \operator \etoc(e) \operator
\bigl(\alpha(t_{RL}) \operator \etoc(e_R) \operator \alpha(t_{RR})\bigr) \\
&= \alpha(t_L) \operator \etoc(e) \operator \alpha(t_R) \\
&= \alpha(t)
\end{align*}
Thus $\beta(t_R) < \beta(t)$, and therefore $\alpha$ is monotonic. \qed
\end{proof}

Since associative catamorphisms are monotonic, they are also GSS by
Theorem~\ref{theorem:monotonic_generalized_sufficient_surjectivity},
meaning that associative catamorphisms can be used in our unrolling
decision procedure.

\begin{remark}
Thus far we have used binary trees as our inductive datatype
$\rdttype$. Our results so far have been generic for any inductive
datatypes, but Theorem~\ref{theorem:assoc_cata_are_monotonic} is not.
In particular, the theorem does not hold when $\rdttype$ is the list
datatype. Over the list datatype the catamorphism $Multiset$ is
associative, but not monotonic since $\beta_{Multiset}(\{0, 0, \ldots,
0\}) = 1$. The proof of Theorem~\ref{theorem:assoc_cata_are_monotonic}
fails since there is no rotate operation on lists as there is on
trees. Similarly, the unrolling bounds in the next section do not
necessarily hold for list-like datatypes.
\end{remark}

\subsection{Exponentially Small Upper Bound on the Number of Unrollings}
\label{section:assoc_exponential_small_num_unrollings}

In the proof of Theorem
\ref{theorem:monotonic_generalized_sufficient_surjectivity}, we showed
that monotonic catamorphisms admit a linear bound on the number of
unrollings needed to establish unsatisfiability in our procedure.
However, even for monotonic catamorphisms, the number of unrollings
may be large for a large input formula with many tree disequalities,
leading to a high complexity for the algorithm. This section shows
that for associative catamorphisms, the bound can be made
\emph{exponentially small}.
\begin{lemma}
\label{lemma:beta_numshapes}
If $\alpha$ is an associative catamorphism then $\forall t\in
\rdttype: \beta(t) \geq \numshapes\bigl(\size(t)\bigr)$
\end{lemma}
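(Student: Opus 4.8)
The plan is to exploit the key structural fact about associative catamorphisms recorded in Corollary~\ref{corollary:assoc_cata_depends_elements}: the value $\alpha(t)$ depends only on the in-order listing of the element values of $t$, and not at all on the shape of $t$. Fix an arbitrary tree $t$, and let $e_1, e_2, \ldots, e_n$ be its in-order element listing, where $n$ is the number of internal nodes of $t$. By Property~\ref{property:size_is_odd} we have $\size(t) = 2n + 1$, and every full binary tree of size $\size(t)$ has exactly $n$ internal nodes. The goal is to exhibit at least $\numshapes\bigl(\size(t)\bigr)$ distinct trees, all lying in $\alpha^{-1}\bigl(\alpha(t)\bigr)$.

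First I would range over the $\numshapes\bigl(\size(t)\bigr)$ tree shapes of size $\size(t)$. For a fixed such shape $s$, each having $n$ internal-node positions, I would populate those positions in in-order traversal order with the sequence $e_1, \ldots, e_n$, obtaining a tree $t_s$ whose shape is $s$ and whose in-order element listing is again exactly $e_1, \ldots, e_n$. By Corollary~\ref{corollary:assoc_cata_depends_elements}, $t_s$ and $t$ agree on their in-order listings, so $\alpha(t_s) = \alpha(t)$; hence every $t_s$ belongs to $\alpha^{-1}\bigl(\alpha(t)\bigr)$.

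Next I would show that the assignment $s \mapsto t_s$ is injective, and therefore produces the desired number of witnesses. This is immediate because a tree determines its shape uniquely, so trees built from distinct shapes are themselves distinct. Consequently the family $\{t_s\}$ consists of $\numshapes\bigl(\size(t)\bigr)$ pairwise-distinct trees all mapping to $\alpha(t)$, which yields $\beta(t) = \bigl|\alpha^{-1}\bigl(\alpha(t)\bigr)\bigr| \geq \numshapes\bigl(\size(t)\bigr)$, as claimed.

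The step I expect to require the most care is the claim that populating a fixed shape in in-order traversal order genuinely yields $e_1, \ldots, e_n$ as the in-order listing of $t_s$, independently of $s$; this is the hinge on which the whole argument turns. It is essentially the same counting observation already used in the proof of Theorem~\ref{theorem:list}, where the correspondence between shapes and trees sharing a fixed listing was used to compute $\beta$ \emph{exactly} for the \emph{List} catamorphism. The only difference here is that a general associative catamorphism may additionally identify trees with \emph{different} in-order listings (as \emph{Set} and \emph{Multiset} do), so we can only guarantee an inequality rather than the equality that holds in the \emph{List} case.
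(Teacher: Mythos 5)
Your proposal is correct and follows essentially the same route as the paper: enumerate all $\numshapes\bigl(\size(t)\bigr)$ shapes of size $\size(t)$, populate each with the element sequence of $t$ to get distinct trees with the same catamorphism value via Corollary~\ref{corollary:assoc_cata_depends_elements}, and conclude the cardinality bound. If anything, your explicit insistence on filling the internal nodes in \emph{in-order} traversal order is slightly more careful than the paper's wording, which leaves the indexing of internal nodes ambiguous (its example even suggests pre-order), though the intended argument is the same.
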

% \begin{proof}
% Let $t \in \rdttype$ be an arbitrary tree. By the definition of
% $\numshapes$ there are $\numshapes\bigl(\size(t)\bigr)$ tree shapes
% with the same size as $t$ and therefore the same number of internal
% nodes. Each tree shape gives rise to a tree which has the same
% in-order listing of elements as $t$. Since the shapes are distinct,
% the trees will be distinct. By
% Corollary~\ref{corollary:assoc_cata_depends_elements}, the value of
% $\alpha(t)$ depends only on the in-order listing of the element
% values. Thus all such trees map to the same collection value.
% Therefore, $\beta(t) \geq \numshapes\bigl(\size(t)\bigr)$. \qed
% \end{proof}

\begin{proof}
Consider any tree $t \in \rdttype$.
%From Property \ref{property:size_is_odd}, $ni = \bigl(size(t) - 1\bigr)/2$ and $nl = ni + 1$.
Let $L$ be a list of size $ni(t)$ such that $L_j$, where $1 \leq j \leq ni(t)$,
is equal to the value stored in the $j$-th internal node in $t$.

Property \ref{property:size_is_odd} implies that any shape of size $size(t)$ must have exactly
$ni(t)$ $\SNode$(s) and $nl(t)$ $\SLeaf$(s).
Let $sh_1, \ldots, sh_{\numshapes(size(t))}$ be all shapes of size $size(t)$.
From $sh_i$, where $1 \leq i \leq \numshapes\bigl(size(t)\bigr)$, we construct a tree $t_i$ by converting every $\SLeaf$ in $sh_i$ into a $\Leaf$
and converting the $j$-th $\SNode$ in $sh_i$ into a structurally corresponding $\Node$ with element value $L_j$, where $1 \leq j \leq ni(t)$.
For example, the shape $\SNode\bigl(\SNode(\SLeaf, \SLeaf), \SLeaf\bigr)$ will be converted into the tree $\Node\bigl(\Node(\Leaf, L_2, \Leaf), L_1, \Leaf\bigr)$.
% Fig. \ref{fig:shape_to_tree} shows an example of the conversion process.
%
% \begin{figure}[htb]
% %\centering
% \includegraphics[width=0.35\textwidth]{fig/shape_to_tree}
% \caption{Example of converting a shape into a tree}
% \label{fig:shape_to_tree}
% \end{figure}

After this process, $t_1, \ldots, t_{\numshapes(size(t))}$ are mutually different because their shapes $sh_1$, $\ldots$, $sh_{\numshapes(size(t))}$ are distinct.
From Corollary \ref{corollary:assoc_cata_depends_elements}, we obtain
\[\alpha(t) = \alpha(t_1) = \ldots = \alpha(t_{\numshapes(size(t))}) = \etoc(L_1)~\operator~\etoc(L_2)~\operator~\ldots~\operator~\etoc(L_{ni(t)})\]
As a result, $\beta(t) \geq \numshapes\bigl(size(t)\bigr)$.
\qed
\end{proof}

\begin{lemma}
\label{lemma:minbeta_catalan}
If $\alpha$ is associative then $\forall h \in \naturalnums: \minbeta(h) \geq \catalan_h$.
\end{lemma}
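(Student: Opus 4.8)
The plan is to prove the bound pointwise: I will show that \emph{every} tree $t$ of height $h$ already satisfies $\beta(t) \geq \catalan_h$, and then the bound on $\minbeta(h)$ follows immediately, since by Definition~\ref{definition:minbeta} $\minbeta(h)$ is just the minimum of $\beta(t)$ over all such trees. The whole argument is a short chain of inequalities that glues together results already proved earlier in the excerpt, so no new combinatorial machinery is needed; the work is entirely in lining up the lemmas correctly.

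Concretely, I would fix an arbitrary $t \in \rdttype$ with $\height(t) = h$ and reason as follows. First, because $\alpha$ is associative, Lemma~\ref{lemma:beta_numshapes} gives $\beta(t) \geq \numshapes\bigl(\size(t)\bigr)$. Next, Property~\ref{property:size_is_odd} tells us $\size(t) \in \oddnums$, and Property~\ref{property:size_height} gives the size--height bound $\size(t) \geq 2h + 1$. Since $\numshapes$ is (weakly) monotone on the odd naturals by Lemma~\ref{lemma:numshapes}, and both $\size(t)$ and $2h+1$ are odd with $\size(t) \geq 2h+1$, we may replace $\size(t)$ by the smaller value $2h+1$ without increasing $\numshapes$, i.e. $\numshapes\bigl(\size(t)\bigr) \geq \numshapes(2h+1)$. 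Finally, Lemma~\ref{lemma:num_shapes_catalan} evaluates this last quantity: $\numshapes(2h+1) = \catalan_{\frac{(2h+1)-1}{2}} = \catalan_h$. Stringing these together yields $\beta(t) \geq \catalan_h$ for the chosen $t$, and since $t$ was an arbitrary height-$h$ tree, $\minbeta(h) \geq \catalan_h$ as required.

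I do not expect any serious obstacle here, since every ingredient is a prior result and the argument is a one-line composition. The only point requiring a little care is the invocation of monotonicity: Lemma~\ref{lemma:numshapes} asserts only \emph{weak} monotonicity at the low end ($\numshapes(1) = \numshapes(3) = 1$), so I must use the non-strict comparison $\numshapes\bigl(\size(t)\bigr) \geq \numshapes(2h+1)$ rather than a strict one, and I should note that $2h+1$ is a legitimate odd size so that $\numshapes$ and Lemma~\ref{lemma:num_shapes_catalan} are being applied in their stated domain. Note also that I never need $2h+1$ to be \emph{achievable} as an actual tree size; it functions purely as a lower bound on $\size(t)$, which is all the monotonicity step requires.
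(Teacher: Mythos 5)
Your proof is correct and follows essentially the same route as the paper's: apply Lemma~\ref{lemma:beta_numshapes}, lower-bound the size via Property~\ref{property:size_height} together with the monotonicity of $\numshapes$ from Lemma~\ref{lemma:numshapes}, and evaluate $\numshapes(2h+1)$ as $\catalan_h$ via Lemma~\ref{lemma:num_shapes_catalan}. Your extra care about using only weak monotonicity at the low end is a nice touch but does not change the argument.
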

\begin{proof}
Let $t_h \in \rdttype$ be any tree of height $h$. We have $\beta(t_h)
\geq \numshapes\bigl(\size(t_h)\bigr)$ from Lemma
\ref{lemma:beta_numshapes}. Hence, $\beta(t_h) \geq \numshapes(2h +
1)$ from Property \ref{property:size_height} and Lemma
\ref{lemma:numshapes}. From Lemma \ref{lemma:num_shapes_catalan},
$\beta(t_h) \geq \catalan_h$. Therefore, $\minbeta(h) \geq \catalan_h$
by Definition \ref{definition:minbeta}. \qed
\end{proof}

Let $h_p = \min\{h~|~\catalan_h > p\}$ so that $\catalan_{h_p} > p$.
From Lemma \ref{lemma:minbeta_catalan}, $\minbeta(h_p) \geq
\catalan_{h_p} > p$. Thus $h_p$ satisfies the GSS condition for
$\alpha$. Moreover, the growth of $\catalan_n$ is exponential
\cite{FlajoletSedgewick2009}.
%As a result, $p$ grows exponentially faster than $h_p$ since $\catalan_{h_p} > p$.
Thus, $h_p$ is exponentially smaller than $p$ since $\catalan_{h_p} > p$.
For example, when $p = 10000$, we can choose $h_p = 10$ since $\catalan_{10} = 16796 > 10000$.
Similarly, when $p = 50000$, we can choose $h_p = 11$ since $\catalan_{11} = 58786$.

\subsection{Combining Associative Catamorphisms}
\label{section:combine_assoc_catas}

Let $\alpha_1, \ldots, \alpha_m$ be $m$ associative catamorphisms
where $\alpha_i$ is given by the collection domain $\collection_i$,
the operator $\operator_i$, and the function $\etoc_i$. We construct
the catamorphism $\alpha$ componentwise from the $\alpha_i$ as
follows:
\begin{itemize}
 \item $\collection$ is the domain of $m$-tuples, where the $i^\text{th}$
   element of each tuple is in $\collection_i$.
 \item $\operator:(\collection, \collection) \rightarrow \collection$
   is defined as
 \begin{equation*}
 \langle x_1, \ldots, x_m \rangle \operator \langle y_1, \ldots, y_m
 \rangle = \langle x_1 \operator_1 y_1, \ldots,
 x_m \operator_m y_{m}\rangle
  \end{equation*}
 \item $\etoc: \elem \rightarrow \collection$ is defined as $\etoc(e)
   = \bigl \langle \etoc_1(e), \ldots, \etoc_m(e) \bigr \rangle$
 \item $\alpha$ is defined as in Definition \ref{definition:assoc_cata}.
\end{itemize}

\begin{example}[Combine associative catamorphisms]
Consider \emph{Set} and \emph{SizeI} catamorphisms in Table
\ref{table:POPL_catas}, which are associative. When we combine the two
associative catamorphisms (assuming \emph{Set} is used before
\emph{SizeI}), we get a new catamorphism \emph{SetSizeI} that maps a
tree to a pair of values: the former is the set of all the elements in
the tree and the latter is the number of internal nodes in the tree.
For example, if we apply \emph{SetSizeI} to the tree in Fig.
\ref{fig:tree_example}, we get $\langle \{1, 2\}, 2 \rangle$.
\exampleEndMark
\end{example}

\begin{remark}
Every catamorphism obtained from the combination of associative
catamorphisms is also associative.
\end{remark}
\begin{proof}
The associativity of the componentwise $\operator$ follows directly
from the associativity of the $\operator_i$ operators. \qed
\end{proof}

Note that while it is easy to combine associative catamorphisms, it
might be challenging to compute the range predicate $R_{\alpha}$ of
the combination of those associative catamorphisms. For example,
consider $\emph{Min}$ and $\emph{Sum}$, two simple surjective
associative catamorphisms whose ranges are trivially equal to their
codomains. The range of their combination is:
\begin{align*}
\emph{Min}(t) &= \None~\wedge~\emph{Sum}(t) = 0\\
\vee~~\emph{Min}(t) &< 0\\
\vee~~\emph{Min}(t) &\geq 0~\wedge~\bigl(\emph{Sum}(t) = \emph{Min}(t) \vee \emph{Sum}(t) \geq 2 \times \emph{Min}(t)\bigr)
\end{align*}

\noindent As with individual catamorphisms, it is the user's responsibility to create an appropriate $R_{\alpha}$ predicate.

% When we have more associative catamorphisms, it is even harder to compute the range. A practical way to handle this situation is to use a sound approximation of the range and refine it as needed using induction and an SMT solver, as previously presented in Section \ref{section:revised_unrolling_procedure}.

%%% Local Variables:
%%% mode: latex
%%% TeX-master: "unrolling_dec_proc_journal_main.tex"
%%% End:

%%  LocalWords:  Catamorphisms GSS catamorphisms surjective IntSet DW
%%  LocalWords:  catamorphism combinable unrollings provers ACL pre
%%  LocalWords:  Multiset SizeI Sortedness Monotonicity RL datatype
%%  LocalWords:  datatypes datatpye unsatisfiability disequalities th
%%  LocalWords:  componentwise tuples tuple SetSizeI

\section{The Relationship between Catamorphisms}
\label{section:catas_relationship}
We have summarized two types of catamorphisms previously proposed by
Suter et al. \cite{Suter2010DPA}, namely infinitely surjective and
sufficiently surjective catamorphisms in Definitions
\ref{definition:infinitely_surjective_catamorphisms} and
\ref{definition:sufficient_surjectivity}, respectively. We have also
proposed three different classes of catamorphisms: GSS
(Definition~\ref{definition:generalized_sufficient_surjectivity}),
monotonic (Definition \ref{definition:mono_cata}), and associative
(Definition \ref{definition:assoc_cata}). This section discusses how
these classes of catamorphisms are related to each other and how they
fit into the big picture, depicted in Fig.
\ref{fig:catas_relationship} with some catamorphism examples.

\begin{figure}[htb]
  %\centering
  %\includegraphics[width=0.35\textwidth]{fig/catas_relationship}
  \includegraphics[width=\textwidth]{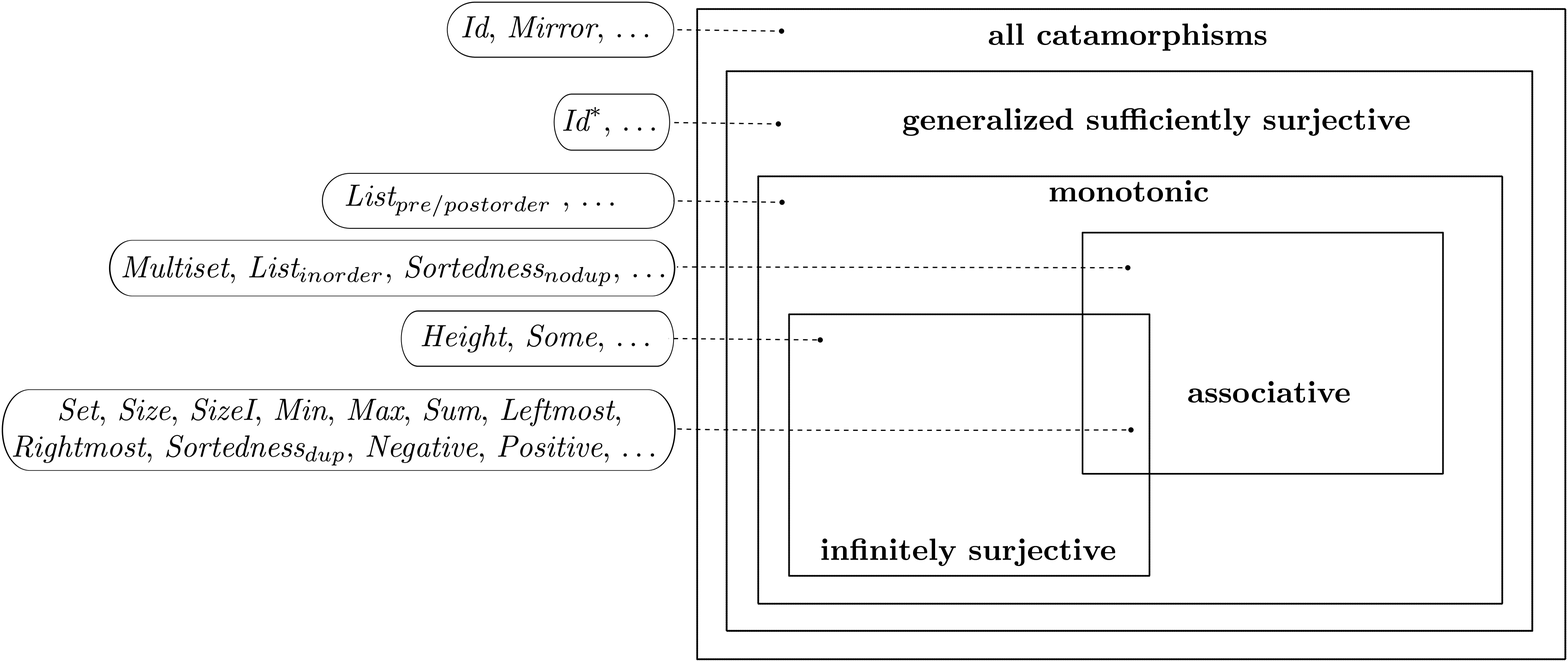}
  \caption{Relationship between different types of catamorphisms}
  \label{fig:catas_relationship}
\end{figure}

\emph{Between sufficiently surjective and GSS catamorphisms}: We have
shown that all sufficiently surjective catamorphisms are GSS
(Corollary~\ref{corollary:ss_is_gss}). We have not demonstrated that
the inclusion is strict as this would require reasoning about what is
and is not representable in the definition of $M_p$. Frankly, we
believe there is no need for sufficient surjectivity given the notion
of GSS.

\emph{Between monotonic and sufficiently surjective catamorphisms}:
All monotonic catamorphisms are sufficiently surjective (Theorem
\ref{theorem:mono_catas_are_suff_surj}). This shows that although the
definition of monotonic catamorphisms from this paper and the idea of
sufficiently surjective catamorphisms from Suter et al.
\cite{Suter2010DPA} may look different from each other, they are
actually closely related. Moreover, monotonic catamorphisms provide
linear bounds in our decision procedure.

\emph{Between infinitely surjective and monotonic catamorphisms}: All
infinitely surjective catamorphisms are monotonic (Lemma
\ref{lemma:infinitely_surjective_abstraction_monotonic}). Thus,
infinitely surjective catamorphisms are not just a sub-class of
sufficiently surjective catamorphisms as presented in Suter et al.
\cite{Suter2010DPA}, they are also a sub-class of monotonic
catamorphisms.

\emph{Between associative and monotonic catamorphisms}: All
associative catamorphisms are monotonic
(Theorem~\ref{theorem:assoc_cata_are_monotonic}). Moreover,
associative catamorphisms provide exponentially small bounds in our
decision procedure.

\emph{Between infinitely surjective and associative catamorphisms}:
The set of infinitely surjective catamorphisms and that of associative
catamorphisms are intersecting, as shown in Fig.
\ref{fig:catas_relationship} with some catamorphism examples.

%%  LocalWords:  Suter et al GSS

\section{Implementation and Experimental Results}
\label{section:experimental_results}

%\subsection{RADA: A Tool for Reasoning about Algebraic Data Types}
\label{section:rada}

We introduce RADA\footnote{{\url{http://crisys.cs.umn.edu/rada/}}.},
an open source tool to reason about algebraic data types with
abstractions that is conformant with the \smtlib\
format \cite{BarSTSMT10}. The algorithms behind RADA were described in
previous sections. It can function as a back-end for reasoning about
recursive programs that manipulate algebraic data types. RADA was
designed to be host-language and solver-independent, and it can use
either CVC4 or Z3 as its underlying SMT solver. RADA has also been
successfully integrated into the Guardol system \cite{Hardin2012GLV},
replacing our implementation of the Suter-Dotta-Kuncak decision
procedure \cite{Suter2010DPA} on top of
OpenSMT \cite{BruttomessoPST10}. Experiments show that our tool is
reliable, fast, and works seamlessly across multiple platforms,
including Windows, Unix, and Mac OS. We have used RADA in the Guardol
project for reasoning about functional implementations of complex data
structures and to reason about {\em guard applications} that determine
whether XML messages should be allowed to cross network security
domains. How RADA was integrated into Guardol is presented
in \cite{Hardin2012GLV}.

%-----------------------------------------------------------------
% \subsubsection{Tool Architecture}
% \label{sec:rada_architecture}

The overall architecture of RADA follows closely the decision procedure described in Section \ref{section:revised_unrolling_procedure}. We use CVC4 \cite{Barrett2011CVC4} and Z3 \cite{DeMoura2008ZES} as the underlying SMT solvers in RADA due to their powerful abilities to reason about recursive data types. The grammar of RADA in Fig. \ref{fig:ourDPGrammar} is based on the \smtlib\ \cite{BarSTSMT10} format with some new syntax for selectors, testers, data type declarations, and catamorphism declarations.
Note that although selectors, testers, and data type declarations are not defined in \smtlib, all of them are currently supported by both CVC4 and Z3; thus, only catamorphism declarations are not understood by these solvers.
% As a result, to bridge the gap between the input format of RADA and that of CVC4/Z3, each time the catamorphism is unrolled, we build an abstract syntax tree in which the  catamorphism declaration is replaced by an uninterpreted function representing the behaviors of the unrolled parts of the catamorphism. Based on the abstract syntax tree, we could generate an \textsf{.smt2} file that CVC4 or Z3 accepts with the help of a tool-specific emitter, which is responsible for creating a suitable \textsf{.smt2} file for the solver being used.
$\textbf{:post-cond}$, which is used to declare $R_{\alpha}$, is optional since we do not need to specify $R_{\alpha}$ when $\alpha$ is surjective (e.g., \textsf{SumTree} in Example \ref{example:rada_syntax}).

% \begin{figure}[htb]
%   %\centering
%   \includegraphics[width=0.7\textwidth]{fig/architecture}
%   \caption{RADA architecture.}
%   \label{fig:architecture}
% \end{figure}

\newcommand{\bnfsymb}[1]{\langle #1 \rangle}
\begin{figure}[htb]
%\centering
\small
\begin{tabular}{rcl}
\toprule
$\bnfsymb{command}_1$                         & ::= &  $\textbf{(}~\textbf{declare-datatypes}~\textbf{()}~\textbf{(}\bnfsymb{datatype}^+\textbf{)}~\textbf{)}$
\\
$\bnfsymb{datatype}$                          & ::= &  $\textbf{(}~\bnfsymb{symbol}~\bnfsymb{datatype\_branch}^+~\textbf{)}$
\\
$\bnfsymb{datatype\_branch}$                  & ::= &
$\textbf{(}~\bnfsymb{symbol}~\bnfsymb{datatype\_branch\_para}^*~\textbf{)}$
\\
$\bnfsymb{datatype\_branch\_para}$       & ::= &
$\textbf{(}~\bnfsymb{symbol}~\bnfsymb{sort}~\textbf{)}$
\\
\\
$\bnfsymb{command}_2$                         & ::= &
$\textbf{(}~\textbf{define-catamorphism}~\bnfsymb{catamorphism}~\textbf{)}$
\\
$\bnfsymb{catamorphism}$                      & ::= &
$\textbf{(}~\bnfsymb{symbol}~\textbf{(}~\bnfsymb{sort}~\textbf{)}~\bnfsymb{sort}~\bnfsymb{term}$
\\
& & \hfill $[\textbf{:post-cond}~\bnfsymb{term}]~\textbf{)}$\\
\\
$\bnfsymb{selector\_application}$              & ::= &
$\bnfsymb{symbol}~\bnfsymb{symbol}$
\\
$\bnfsymb{tester\_application}$                & ::= &
$\textbf{is-}\bnfsymb{symbol}~\bnfsymb{symbol}$
\\
\bottomrule
%\hline
\end{tabular}
\caption{RADA grammar.}
\label{fig:ourDPGrammar}
\end{figure}
%-----------------------------------------------------------------
% \subsection{Example}
% \label{sec:rada_example}
\begin{example}[RADA syntax]
 \label{example:rada_syntax}
Let us consider an example to illustrate the syntax used in RADA.
Suppose we have a data type \textsf{RealTree} that contains real numbers:
% Each node of the tree can be either a \textsf{Leaf} or a $\textsf{Node}(\textsf{left}: \textsf{RealTree}, \textsf{elem}: \textsf{Real}, \textsf{right}: \textsf{RealTree})$.
% In this example, $\elem$, $\collection$, and $\alpha$ are \textsf{Real}, \textsf{Real}, and \textsf{SumTree}, respectively.

%To illustrate the syntax used in RADA, let us further examine the \textsf{RealTree} example.
% A \textsf{RealTree}, which is a $\Leaf$ or a root node with two subtrees and a number stored in the node, can be written in RADA syntax as follows:
% \[
% %\small
% \begin{array}{l}
% \textsf{ (declare-datatypes () ( } \\
% \qquad \textsf{ (RealTree }\\
% \qquad \qquad \textsf{ (Leaf) }\\
% \qquad \qquad \textsf{ (Node (left RealTree) (elem Real) (right RealTree))))) }
% \end{array}
% \]
{\small
\begin{verbatim}
    (declare-datatypes ()
      ((RealTree
         (Leaf)
         (Node (left RealTree) (elem Real) (right RealTree)))))
\end{verbatim}
}
\noindent Next, a \textsf{RealTree} can be abstracted into a real number representing the sum of all elements in the tree by catamorphism \textsf{SumTree}, which can be defined as follows:
% \[
% %\small
% \begin{array}{l}
% \textsf{ (define-catamorphism SumTree ((t RealTree)) Real \qquad \qquad \qquad \quad } \\
% \qquad \textsf{ (ite }\\
% \qquad \qquad \textsf{ (is-Leaf t) 0.0 }\\
% \qquad \qquad \textsf{ (+ (SumTree (left t)) }\\
% \qquad \qquad \quad \enskip \textsf{ (elem t) }\\
% \qquad \qquad \quad \enskip \textsf{ (SumTree (right t))))) }
% \end{array}
% \]
{ \small
\begin{verbatim}
    (define-catamorphism SumTree ((t RealTree)) Real
      (ite (is-Leaf t)
           0.0
           (+ (SumTree (left t))
              (elem t)
              (SumTree (right t)))))
\end{verbatim}
}
\noindent where \textsf{is-Leaf} is a tester that checks if a \textsf{RealTree} is a leaf node and \textsf{left t}, \textsf{elem t}, and \textsf{right t} are selectors that select the corresponding data type branches in a \textsf{RealTree} named \textsf{t}. Given the definitions of data type \textsf{RealTree} and catamorphism \textsf{SumTree}, one may want to check some properties of a \textsf{RealTree}, for example:

{\small
\begin{verbatim}
    (declare-fun t1 () RealTree)
    (declare-fun t2 () RealTree)
    (declare-fun t3 () RealTree)
    (assert (= t1 (Node t2 5.0 t3)))
    (assert (= (SumTree t1) 5.0))
    (check-sat)
\end{verbatim}
}
\noindent As expected, RADA returns \emph{SAT} for the above example.
\exampleEndMark
\end{example}
%-----------------------------------------------------------------
% \subsubsection{Implementation Improvement}
% \label{sec:implementation}
Since RADA was first published \cite{PhamRADA13}, we have been working on improving the performance of the tool.
Compared with the version in \cite{PhamRADA13}, the current version of
RADA is multiple times faster thanks to the following implementation techniques.
% Compared with the version published at ESEC/FSE 2013, the version at the time of writing has been significantly improved in terms of speed.
% Table \ref{table:RADA_performance} shows the running times of the two versions of RADA on some complex Guardol benchmarks\footnote{Full experimental results will be presented later in Section \ref{section:rada_experimental_results}.}.
% For brevity, we denote the ESEC/FSE version of RADA by RADA$_\emph{FSE}$ and the current version by RADA$_\emph{current}$.

% \begin{table}[htb]
% \caption{The improvement in performance of RADA}
% %\centering
% \begin{tabular}{lcrr}
% \toprule
% \multirow{2}{*}{Benchmark name} & \multirow{2}{*}{\# obligations} & \multicolumn{2}{c}{Time (s)} \\
% \cline{3-4}
% & & RADA$_\emph{FSE}$ & RADA$_{current}$ \\
% \midrule
% Email\_Guard\_Correct\_All.rada & 17 & 0.527s & 0.153s \\
% RBTree.Black\_Property.rada & 12 & 141.024s & 25.704s \\
% RBTree.Red\_Property.rada & 12 & 7.968s & 1.956s \\
% array\_checksum.SumListAdd\_Alt.rada & 13 & 0.325s & 0.156s \\
% \bottomrule
% \end{tabular}
% \label{table:RADA_performance}
% \end{table}

% The main techniques among other things that we used to achieve this improvement are: (1) solve proof obligations in parallel, (2) reuse the definitions of catamorphism bodies when unrolling, and (3) solve each proof obligation incrementally.

\emph{\underline{Technique 1}: Solve proof obligations in parallel.} Multiple proof obligations can be written in RADA within \kw{push}-\kw{pop} pairs (as in \smtlib\ \cite{BarSTSMT10}). For instance,

{\small
\begin{verbatim}
    (push) Obligation_A (pop)                  (push) Obligation_B (pop)
\end{verbatim}
}

% \[
% %\small
% \begin{array}{l}
% \textsf{ (push)} \qquad \qquad \qquad \qquad \qquad \qquad \qquad \qquad \qquad\\
% \textsf{ // Obligation A } \\
% \textsf{ (pop) } \\
% \\
% \textsf{ (push)} \\
% \textsf{ // Obligation B } \\
% \textsf{ (pop) }
% \end{array}
% \]

% In RADA$_\emph{FSE}$, proof obligations are handled sequentially, i.e., proof obligation B can only be considered after proof obligation A has been completely discharged.
% On the contrary,
We preprocess the original SMT file.
If the file has parallelizable obligations, we split it into multiple separate files (each file has only one obligation).
RADA discharges proof obligations in parallel. It supports a thread pool of a configurable size of proof obligations. All the proof obligations in the pool are solved concurrently and all the remaining proof obligations are put in a waiting list. As soon as a proof obligation in the thread pool is discharged, the pool adds a new proof obligation from the waiting list to the pool (if any).

\emph{\underline{Technique 2}: Reuse the definitions of catamorphism bodies when unrolling.} In general, when we have a catamorphism application, e.g., \kw{SumTree (Node t2 5.0 t3)} with the \kw{SumTree} catamorphism and tree terms \kw{t2} and \kw{t3} in Example \ref{example:rada_syntax}, the catamorphism application is assigned to the corresponding definition of the catamorphism body with the given parameter. In this case, it will be as follows:

{\small
\begin{verbatim}
    (assert (= (SumTree (Node t2 5.0 t3))
               (ite (is-Leaf (Node t2 5.0 t3))
                    0.0
                    (+ (SumTree (left (Node t2 5.0 t3)))
                       (elem (Node t2 5.0 t3))
                       (SumTree (right (Node t2 5.0 t3)))))))
\end{verbatim}
}

However, as the unrolling procedure progresses, the tree parameters will keep getting bigger (because they are unrolled) and the catamorphism applications will appear frequently in the SMT query. This leads to the following issue: the definitions of catamorphism bodies appear again and again.
To address this issue, it is desirable to be able to reuse the definitions of catamorphism bodies. To do that, RADA creates a user-defined function for each catamorphism body, for example with the \kw{SumTree} catamorphism:

{\small
\begin{verbatim}
    (define-fun SumTree_GeneratedCatDefineFun ((t RealTree)) Real
      (ite (is-Leaf t)
          0.0
          (+ (SumTree (left t))
             (elem t)
             (SumTree (right t)))))
\end{verbatim}
}
\noindent and whenever we want to calculate a catamorphism application, we just need to call the corresponding user-defined function we just created:

{\small
\begin{verbatim}
    (assert (= (SumTree (Node t2 5.0 t3))
               (SumTree_GeneratedCatDefineFun (Node t2 5.0 t3))))
\end{verbatim}
}
\noindent We can also parameterize the above equality assertion by creating another user-defined function for it as follows:

{\small
\begin{verbatim}
    (define-fun SumTree_GeneratedUnrollDefineFun ((t RealTree)) Bool
      (= (SumTree t) (SumTree_GeneratedCatDefineFun t)))
\end{verbatim}
}
\noindent and now all what we need to do is use the short, newly created function:

{\small
\begin{verbatim}
    (assert (SumTree_GeneratedUnrollDefineFun (Node t2 5.0 t3)))
\end{verbatim}
}

In other words, when we need to unroll a catamorphism application, we just need to call the corresponding function with suitable parameters instead of expanding tree terms repeatedly.

\emph{\underline{Technique 3}: Solve each proof obligation incrementally.} We observe that in our decision procedure, we need two calls to an SMT solver (i.e., two \emph{decide} calls in Algorithm \ref{alg:revised_dp}) at each unrolling step to determine if we have found a trustworthy \emph{SAT}/\emph{UNSAT} answer.
There are two issues if the calls to the SMT solver are handled independently: (1) we would not take advantage of what the SMT solver instance has learned from the previous SMT query, and (2) we would pay a performance price for initializing and closing the SMT solver instance each time.

RADA addresses those issues as follows.
First, RADA solves each proof obligation incrementally, i.e., the information collected from the SMT queries is reused over time. Second, there is only one instance of the SMT solver for each proof obligation we want to solve; in other words, RADA creates an instance of the SMT solver when we start solving the proof obligation and only closes the SMT solver instance after the obligation has been completely discharged. We show below an example of incremental solving with RADA.

\begin{example}[Example of incremental solving with RADA]
\label{appendix:incremental_solving_with_rada}
Let us present step by step how RADA solves the \kw{RealTree} example in Example \ref{example:rada_syntax}. First, RADA sends to an SMT solver the declaration of the \kw{RealTree} data type,
which is the \textsf{declare-datatypes} statement in Example \ref{example:rada_syntax}.

% {\small
% \begin{verbatim}
%     (declare-datatypes () (
%       (RealTree
%         (Leaf)
%         (Node (left RealTree)
%               (elem Real)
%               (right RealTree)))))
% \end{verbatim}
% }

Next, RADA declares an uninterpreted function called \kw{SumTree}, which represents the \kw{SumTree} catamorphism in Example \ref{example:rada_syntax}.
Note that the SMT solver views \kw{SumTree} as an uninterpreted function: the solver does not know what content of the function is; it only knows that \kw{SumTree} takes as input a \kw{RealTree} and returns a \kw{Real} value as the output.

{\small
\begin{verbatim}
    (declare-fun SumTree (RealTree) Real)
\end{verbatim}
}
\noindent RADA then feeds to the SMT solver the original problem we want to solve:
{\small
\begin{verbatim}
    (declare-fun t1 () RealTree)
    (declare-fun t2 () RealTree)
    (declare-fun t3 () RealTree)
    (assert (= t1 (Node t2 5.0 t3)))
    (assert (= (SumTree t1) 5.0))
\end{verbatim}
}
\noindent Additionally, RADA creates two user-defined functions as previously discussed as a preprocessing step:

{\small
\begin{verbatim}
    (define-fun SumTree_GeneratedCatDefineFun ((t RealTree)) Real
      (ite (is-Leaf t)
          0.0
          (+ (SumTree (left t))
             (elem t)
             (SumTree (right t)))))

    (define-fun SumTree_GeneratedUnrollDefineFun ((t RealTree)) Bool
      (= (SumTree t) (SumTree_GeneratedCatDefineFun t)))
\end{verbatim}
}
\noindent RADA then tries to check the satisfiability of the problem without unrolling any catamorphism application:
{\small
\begin{verbatim}
    (check-sat)
\end{verbatim}
}
\noindent The SMT solver will return \emph{SAT}. In this case, we are using the uninterpreted function; hence, the \emph{SAT} result is untrustworthy.
Therefore, we have to continue the process by unrolling the catamorphism application \kw{SumTree t1}. We also add a \kw{push} statement and then add the control conditions to the problems before checking its satisfiability. Note that the \kw{push} statement is used here to mark the position in which the control conditions are located, so that we can remove the control conditions later by a corresponding \kw{pop} statement.

{\small
\begin{verbatim}
    (assert (SumTree_GeneratedUnrollDefineFun t1))                [Unrolling step]
    (push)
    (assert (is-Leaf t1))                      [Assertions for control conditions]
    (check-sat)
\end{verbatim}
}
\noindent The SMT solver will return \emph{UNSAT}, which means using the control conditions might be too restrictive and we have to remove the control conditions by using a \kw{pop} statement and try again:

{\small
\begin{verbatim}    
    (pop)                                          [Remove the control conditions]
    (check-sat)
\end{verbatim}
}
\noindent However, when checking the satisfiability without control conditions, we get \emph{SAT} from the SMT solver again. Based on our decision procedure in Algorithm \ref{alg:revised_dp}, we have to try another unrolling step; thus, RADA sends the following to the solver:

{\small
\begin{verbatim}
    (assert (SumTree_GeneratedUnrollDefineFun (left t1)))         [Unrolling step]
    (assert (SumTree_GeneratedUnrollDefineFun (right t1)))
    (push)
    (assert (is-Leaf (left t1)))               [Assertions for control conditions]
    (assert (is-Leaf (right t1)))
    (check-sat)
\end{verbatim}
}
\noindent This time the SMT solver still returns \emph{SAT}. However, we are using control conditions and getting \emph{SAT}, which means the \emph{SAT} result is trustworthy. Thus, RADA returns \emph{SAT} as the answer to the original problem. This example has shown how we can use only one SMT solver instance to solve the problem incrementally.
\exampleEndMark
\end{example}

%%% Local Variables:
%%% mode: latex
%%% TeX-master: "unrolling_dec_proc_journal_main.tex"
%%% End:

%%  LocalWords:  RealTree datatypes SumTree UNSAT

%%% Local Variables:
%%% mode: latex
%%% TeX-master: "unrolling_dec_proc_journal_main.tex"
%%% End:

%%  LocalWords:  Suter Dotta Kuncak cond SumTree rcl datatypes UNSAT
%%  LocalWords:  datatype RealTree

\subsection{Experimental Results}
\label{section:rada_experimental_results}
We have implemented our decision procedure in RADA and evaluated the
tool with a collection of benchmark guard examples listed in Table
\ref{table:experimental_results}. All of the benchmark examples were
automatically verified by RADA in a short amount of time.

\begin{table}[htb]
\caption{Experimental results}
%\centering
\scriptsize
\begin{tabular}{clcc}
  % after \\: \hline or \cline{col1-col2} \cline{col3-col4} ...
  \toprule
   Type & Benchmark & Result & Time (s)\\
   \midrule
   Single & sumtree$(01|02|03|05|06|07|10|11|13)$ & \textsf{sat} & 0.025--0.083 \\
   associative & sumtree$(04|08|09|12|14)$ & \textsf{unsat} &
   0.033--0.044 \\
   catamorphisms \\
   \midrule
   & min\_max$(01|02)$ & \textsf{unsat} & 0.057--0.738 \\
   Combination of & min\_max\_sum01 & \textsf{unsat} & 1.165 \\
   associative & min\_max\_sum$(02|03|04)$ & \textsf{sat} & 0.149 -- 0.373\\
   catamorphisms & min\_size\_sum01 & \textsf{unsat} & 0.873 \\
   & min\_size\_sum02 & \textsf{sat} & 0.114 \\
   & negative\_positive$(01|02)$ & \textsf{unsat} & 0.038 -- 0.136\\
   \midrule
   & Email\_Guard\_Correct\_All & 17 \textsf{unsats} & $\approx$ 0.009/obligation \\
   & RBTree.Black\_Property & 12 \textsf{unsats} & $\approx$ 2.142/obligation \\
   Guardol & RBTree.Red\_Property & 12 \textsf{unsats} & $\approx$ 0.163/obligation\\
   & array\_checksum.SumListAdd & 2 \textsf{unsats} &  $\approx$ 0.028/obligation\\
   & array\_checksum.SumListAdd\_Alt & 13 \textsf{unsats} & $\approx$ 0.012/obligation \\
   \bottomrule
\end{tabular}
\label{table:experimental_results}
\end{table}

\emph{{Experiments on associative catamorphisms}.} The first set of
benchmarks consists of examples related to \emph{Sum}, an associative
catamorphism that computes the sum of all element values in a tree.
The second set contains combinations of associative catamorphisms that
are used to verify some interesting properties such as (1) there does
not exist a tree with at least one element value that is both positive
and negative and (2) the minimum value in a tree cannot be bigger than
the maximum value in the tree. The definitions of the associative
catamorphisms used in the benchmarks are as follows: \emph{Sum} is
defined as in Example \ref{example:rada_syntax}, \emph{Max} is defined
in a similar way to \emph{Min} in Table \ref{table:POPL_catas}, and
\emph{Negative} and \emph{Positive} are defined as in
\cite{Pham2013PAC}.

\emph{{Experiments on Guardol benchmarks}.} In addition to associative
catamorphisms, we have also evaluated RADA on some examples in the
last set of benchmark containing general catamorphisms that have been
automatically generated from the Guardol verification system
\cite{Hardin2012GLV}. They consist of verification conditions to prove
some interesting properties of red black trees and the checksums of
trees of arrays. These examples are complex: each of them contains
multiple verification conditions, some data types, and a number of
mutually related parameterized catamorphisms. For example, the Email
Guard benchmark has 8 mutually recursive data types, 6 catamorphisms,
and 17 complex obligations.

All benchmarks were run on a Ubuntu machine (Intel Core I5, 2.8 GHz, 4GB RAM).
All running time was measured when Z3 was used as the underlying SMT solver.

%%  LocalWords:  RADA clcc sumtree unsat catamorphisms unsats RBTree
%%  LocalWords:  Guardol checksum SumListAdd checksums parameterized

\section{Conclusion and Discussion}
\label{section:conclusion}

In this paper, we have proposed an unrolling-based decision procedure
for algebraic data types with a new idea of generalized sufficiently
surjective catamorphisms. We have also presented a class of
generalized sufficiently surjective catamorphisms called monotonic
catamorphisms and have shown that all sufficiently surjective
catamorphisms known in the literature to date \cite{Suter2010DPA} are
also monotonic. We have established a linear upper bound on the number
of unrollings needed to establish unsatisfiability with monotonic
catamorphisms. Furthermore, we have pointed out a sub-class of
monotonic catamorphisms, namely associative catamorphisms, which are
proved to be detectable, combinable, and guarantee an exponentially
small unrolling bound thanks to their close relationship
with Catalan numbers. Our combination results extend the set of
problems that can easily be reasoned about using the
catamorphism-based approach.

We have also presented RADA, an open source tool to reason about
inductive data types. RADA fully supports all types of catamorphisms
discussed in this paper as well as other general user-defined
abstraction functions. The tool was designed to be simple, efficient,
portable, and easy to use. The successful uses of RADA in the Guardol
project \cite{Hardin2012GLV} demonstrate that RADA not only could
serve as a good research prototype tool but also holds great promise
for being used in other real world applications.

\section*{Compliance with Ethical Standards}
\begin{itemize}
\item Conflict of Interest: We declare that we have no conflict of interest.
\item Research involving Human Participants and/or Animals: We declare that this research does not involve human participants and/or animals.
\item Informed Consent: We declare that no informed consent is needed since this research does not involve human participants.
\end{itemize}

% BibTeX users please use one of
%\bibliographystyle{spbasic}      % basic style, author-year citations
%\bibliographystyle{spmpsci}      % mathematics and physical sciences
%\bibliographystyle{spphys}       % APS-like style for physics
%\bibliography{thesis}   % name your BibTeX data base

\begin{thebibliography}{10}
\providecommand{\url}[1]{{#1}}
\providecommand{\urlprefix}{URL }
\expandafter\ifx\csname urlstyle\endcsname\relax
  \providecommand{\doi}[1]{DOI~\discretionary{}{}{}#1}\else
  \providecommand{\doi}{DOI~\discretionary{}{}{}\begingroup
  \urlstyle{rm}\Url}\fi

\bibitem{Barrett2011CVC4}
Barrett, C., Conway, C.L., Deters, M., Hadarean, L., Jovanovi\'{c}, D., King,
  T., Reynolds, A., Tinelli, C.: {CVC4}.
\newblock In: CAV, pp. 171--177 (2011)

\bibitem{BarSTPDPAR06}
Barrett, C., Shikanian, I., Tinelli, C.: {An Abstract Decision Procedure for
  Satisfiability in the Theory of Recursive Data Types}.
\newblock Electronic Notes in Theoretical Computer Science \textbf{174}(8),
  23--37 (2007)

\bibitem{BarSTSMT10}
Barrett, C., Stump, A., Tinelli, C.: {The SMT-LIB Standard: Version 2.0}.
\newblock In: SMT (2010)

\bibitem{Blanc2013OLV}
Blanc, R., Kuncak, V., Kneuss, E., Suter, P.: {An Overview of the Leon
  Verification System: Verification by Translation to Recursive Functions}.
\newblock In: SCALA, pp. 1:1--1:10 (2013)

\bibitem{BruttomessoPST10}
Bruttomesso, R., Pek, E., Sharygina, N., Tsitovich, A.: {The OpenSMT Solver}.
\newblock In: TACAS, pp. 150--153 (2010)

\bibitem{DeMoura2008ZES}
De~Moura, L., Bj{\o}rner, N.: {Z3: An Efficient SMT Solver}.
\newblock In: TACAS, pp. 337--340 (2008)

%\bibitem{Epp2010DMA}
%Epp, S.S.: Discrete Mathematics with Applications, 4th edn.
%\newblock Brooks/Cole Publishing Co. (2010)

\bibitem{FlajoletSedgewick2009}
Flajolet, P., Sedgewick, R.: Analytic Combinatorics.
\newblock Cambridge University Press (2009)

%\bibitem{GanzingerHNOT04}
%Ganzinger, H., Hagen, G., Nieuwenhuis, R., Oliveras, A., Tinelli, C.:
%  {{DPLL}(\emph{T}): Fast Decision Procedures}.
%\newblock In: CAV, pp. 175--188 (2004)

\bibitem{Hardin2012GLV}
Hardin, D., Slind, K., Whalen, M., Pham, T.H.: {The Guardol Language and
  Verification System}.
\newblock In: TACAS, pp. 18--32 (2012)

\bibitem{Jacobs2011TCR}
Jacobs, S., Kuncak, V.: {Towards Complete Reasoning about Axiomatic
  Specifications}.
\newblock In: VMCAI, pp. 278--293 (2011)

\bibitem{kaufmann2000computer}
Kaufmann, M., Manolios, P., Moore, J.: Computer-Aided Reasoning: ACL2 Case
  Studies.
\newblock Springer (2000)

\bibitem{Kobayashi2011PAC}
Kobayashi, N., Sato, R., Unno, H.: {Predicate Abstraction and CEGAR for
  Higher-Order Model Checking}.
\newblock In: PLDI, pp. 222--233 (2011)

\bibitem{Koshy2009}
Koshy, T.: Catalan Numbers with Applications.
\newblock Oxford University Press (2009)

\bibitem{Leino2010DAP}
Leino, K.R.M.: {Dafny: An Automatic Program Verifier for Functional
  Correctness}.
\newblock In: LPAR, pp. 348--370 (2010)

\bibitem{Madhusudan2011DLC}
Madhusudan, P., Parlato, G., Qiu, X.: {Decidable Logics Combining Heap
  Structures and Data}.
\newblock In: POPL, pp. 611--622 (2011)

\bibitem{Madhusudan2012RPI}
Madhusudan, P., Qiu, X., Stefanescu, A.: {Recursive Proofs for Inductive Tree
  Data-Structures}.
\newblock In: POPL, pp. 123--136 (2012)

\bibitem{Nipkow2002IPA}
Nipkow, T., Wenzel, M., Paulson, L.C.: {Isabelle/HOL: A Proof Assistant for
  Higher-Order Logic}.
\newblock Springer-Verlag, Berlin, Heidelberg (2002)

\bibitem{Oppen:1980:RRD}
Oppen, D.C.: {Reasoning About Recursively Defined Data Structures}.
\newblock J. ACM \textbf{27}(3), 403--411 (1980)

\bibitem{Owre1992PPV}
Owre, S., Rushby, J.M., Shankar, N.: {PVS: A Prototype Verification System}.
\newblock In: CADE, pp. 748--752 (1992)

\bibitem{HungPham-PhD}
Pham, T.H.: {Verification of Recursive Data Types using Abstractions}.
\newblock Ph.D. thesis, University of Minnesota (2014)

\bibitem{Pham2013IUB}
Pham, T.H., Whalen, M.: {An Improved Unrolling-Based Decision Procedure for
  Algebraic Data Types}.
\newblock In: VSTTE (2013)

\bibitem{Pham2013PAC}
Pham, T.H., Whalen, M.W.: {Parameterized Abstractions for Reasoning about
  Algebraic Data Types}.
\newblock In: CFV (2013).
\newblock Available at \url{http://www-users.cs.umn.edu/~hung/papers/cfv13.pdf}

\bibitem{PhamRADA13}
Pham, T.H., Whalen, M.W.: {RADA: A Tool for Reasoning about Algebraic Data
  Types with Abstractions}.
\newblock In: ESEC/SIGSOFT FSE, pp. 611--614 (2013)

\bibitem{Reynolds15}
Reynolds, A., Kuncak, V., {Induction for SMT Solvers}.
\newblock In: VMCAI, (2015)

%\bibitem{Rosen2012DMA}
%Rosen, K.H.: Discrete Mathematics and Its Applications, 7th edn.
%\newblock McGraw-Hill Higher Education (2012)

\bibitem{Sato2013TSS}
Sato, R., Unno, H., Kobayashi, N.: {Towards a Scalable Software Model Checker
  for Higher-Order Programs}.
\newblock In: PEPM, pp. 53--62 (2013)

\bibitem{Sofronie-Stokkermans2009LRC}
Sofronie-Stokkermans, V.: {Locality Results for Certain Extensions of Theories
  with Bridging Functions}.
\newblock In: CADE, pp. 67--83 (2009)

\bibitem{Stanley2001}
Stanley, R.P.: {Enumerative Combinatorics, Volume 2}.
\newblock Cambridge University Press (2001)

\bibitem{Suter2010DPA}
Suter, P., Dotta, M., Kuncak, V.: {Decision Procedures for Algebraic Data Types
  with Abstractions}.
\newblock In: POPL, pp. 199--210 (2010)

\bibitem{Suter2011SMR}
Suter, P., K\"{o}ksal, A.S., Kuncak, V.: {Satisfiability Modulo Recursive
  Programs}.
\newblock In: SAS (2011)

\bibitem{Zee2008FFV}
Zee, K., Kuncak, V., Rinard, M.: {Full Functional Verification of Linked Data
  Structures}.
\newblock In: PLDI, pp. 349--361 (2008)

\bibitem{Zee2009IPL}
Zee, K., Kuncak, V., Rinard, M.C.: {An Integrated Proof Language for Imperative
  Programs}.
\newblock In: PLDI, pp. 338--351 (2009)

\bibitem{Zhang04decisionprocedures}
Zhang, T., Sipma, H.B., Manna, Z.: Decision procedures for term algebras with
  integer constraints.
\newblock In: Information and Computation, pp. 152--167 (2004)

\end{thebibliography}

\end{document}